\newcommand{\hide}[1]{}
\theoremstyle{remark}
\newtheorem*{remark}{Remark}
\newcommand{\MCA}{\ensuremath{\mathcal A}}
\newcommand{\MCC}{\ensuremath{\mathcal C}}
\newcommand{\MCM}{\ensuremath{\mathcal M}}
\newcommand{\MCV}{\ensuremath{\mathcal V}}
\newcommand{\tj}[3]{#1 \vdash #2:#3}
\newcommand{\den}[1]{{\llbracket #1 \rrbracket}}
\newcommand{\sden}[1]{{[#1]}}
\newcommand{\op}{^{\mathrm{op}}}
\newcommand{\Set}{\mathbf{Set}}
\newcommand{\Cat}{\mathbf{Cat}}
\newcommand{\FinSet}{\mathbf{FinSet}}
\newcommand{\CCat}{\MCC}
\newcommand{\VCat}{\MCV}
\newcommand{\ACat}{\MCA}
\newcommand{\MCat}{\MCM}
\newcommand{\id}{\mathsf{id}}
\newcommand{\ob}[1]{|#1|}
\newcommand{\rev}{^\mathsf{rev}}
\newcommand{\Lan}{\mathsf{Lan}}
\newcommand{\res}{\mathsf{res}}
\newcommand{\sres}{\mathsf{sres}}
\newcommand{\yo}{\mathsf{y}}
\newcommand{\letin}[3]{\mathsf{let}\, #1\, \mathsf{be} \, #2\, \mathsf{in}\, #3}
\newcommand{\doin}[3]{\mathsf{do}\, #1 \leftarrow #2\, \mathsf{in}\, #3}
\newcommand{\ret}[1]{{\mathsf{return} \ #1}}
\newcommand{\monmult}{{\kappa}}
\newcommand{\monunit}{{\iota}}
\newcommand\sem[1]{\llbracket #1 \rrbracket}
\newcommand{\bind}{\mathrel{\scalebox{0.5}[1]{$>\!>=$}}}
\newcommand{\lunit}{(\,)}
\title[The Relative Monadic Metalanguage]
{The Relative Monadic Metalanguage}
\author{Jack Liell-Cock}
\email{jack.liell-cock@cs.ox.ac.uk}
\affiliation{%
  \institution{University of Oxford}
  \city{Oxford}
  \country{United Kingdom}
}
\author{Zev Shirazi}
\email{zev.shirazi@cs.ox.ac.uk}
\affiliation{%
  \institution{University of Oxford}
  \city{Oxford}
  \country{United Kingdom}
}
\author{Sam Staton}
\affiliation{%
  \institution{University of Oxford}
  \city{Oxford}
  \country{United Kingdom}
}
\keywords{%
  relative monads,
  monad strength,
  finitary monads,
  graded monads,
  arrows.
}
\date{\today}
\begin{abstract}
    Relative monads provide a controlled view of computation. We generalise the monadic metalanguage to a relative setting and give a complete semantics with strong relative monads. Adopting this perspective, we generalise two existing program calculi from the literature. We provide a linear-non-linear language for graded monads, LNL-RMM, along with a semantic proof that it is a conservative extension of the graded monadic metalanguage. Additionally, we provide a complete semantics for the arrow calculus, showing it is a restricted relative monadic metalanguage. This motivates the introduction of ARMM, a computational lambda calculus-style language for arrows that conservatively extends the arrow calculus.
\end{abstract}
\begin{document}

\maketitle

\section{Introduction}
Over the past 15 years, \emph{relative monads} \cite{altenkirch_monads_2015} have emerged as a version of monads that provide a more controlled view of computational effects. They have applications in programming theory (e.g.~\cite{liell-cock_compositional_2025,katsumata_flexible_2022,DBLP:journals/pacmpl/JiangXN25}) as well as in more mathematical directions (such as~\cite{ arkor_formal_2024, arkor_nerve_2024,arkor_relative_2025,bourke_monads_2019,berger_monads_2012}).
Ordinary, non-relative monads have caused a dramatic transformation across programming languages over the past 35 years, both in theory (e.g.~\cite{katsumata_semantic_2005,pp-algop-geneff,pp-notions}) and in practice (e.g.~\cite{monadswork,scala}), as they support programming with effects in a principled, functional setting. A natural question is whether relative monads can cause a similarly profound transformation. This is the starting point for this paper.

One of the driving forces behind the adoption of ordinary monads in computer science is the monadic metalanguage, a mini programming language that precisely defines the relationships between pure functional programs and programs with effects. In this paper, we provide a \textbf{relative monadic metalanguage}, a version of the monadic metalanguage that is suited to relative monads. We show that this relative monadic metalanguage encapsulates previous calculi, namely the graded monadic metalanguage~(see \Cref{sec:gmm} and \cite{katsumata_parametric_2014}) and the arrow calculus~(see \Cref{sec:armm} and \cite{arrow-calc}). 

The key aspect of monads that supports their application in programming theory is the requirement of \emph{strength}. This is a critical component of the interpretation of the monadic metalanguage. Roughly, it allows monads and side-effecting computation to occur in context.
To enable a study of the relative monadic metalanguage, we also study the notion of strength in the relative case, relating notions from the literature~(\Cref{sec:comparing-srm}).

\hide{
Monadic programming is a powerful paradigm for modelling notions of computation in functional programming languages. Since the introduction of the monadic metalanguage~\cite{moggi_notions_1991}, monads have been used in a variety of practical languages (OCaml, Scala, Haskell) for effects such as I/O, statefulness and randomness. Programming languages that use monads to model effectful computation are amenable to structured analysis and formal reasoning~. Additionally semantic settings with monads~\cite{giry_categorical_1982,heunen_convenient_2017}
have influenced programming language design \cite{dash_affine_2023}. Monads have additionally been used to capture generalised notions of variable binding. They have been used in category theory to provide a framework for universal algebra, which has allowed several algebraic effects to be modelled using a monadic style of programming \cite{pp-algop-geneff,pp-notions,bourke_monads_2019}. }

\subsection{A relative monad primer}
Recall that monads consist of a type constructor $T$ together with methods for sequencing computations in $T$, typically a bind and return. In the ordinary monadic metalanguage,
if $A$ is a type, then $TA$ is a type of computation returning results in~$A$. But $TA$ is itself a type, just the same sort of thing as $A$, and so there is no problem with also writing $TTA$, the type of
computations that return results in the type of computations.

In many situations, this is useful, but in some situations, this higher-order type $TTA$ is hard to manipulate mathematically, and it is also conceptually difficult.
The idea of relative monads is to forbid $TTA$ by separating types into two different sorts. The sort of the type~$A$ (we call the sort $\ACat$) is different from the sort of the type $TA$ (we call the sort $\CCat$), so that $TTA$ is not allowed, because we cannot apply the monad type constructor to something of the monad type. 

For a concrete example, we consider the probability monad $D$ on the category of sets.
For a finite set~$A$, the set $DA$ is the set of probability distributions with $A$ outcomes,
\[\textstyle DA\quad=\quad\{p:A\to [0,1]~|~\sum_{a\in A}p(a)=1\}\text.\]
This set $DA$ is not finite, even though $A$ is finite. 
Although we \emph{can} extend $D$ to a constructor on all sets, the resulting set $DDA$ is not a very familiar concept from probability theory or statistics. Random measures such as the Dirichlet distribution \emph{are} common in statistics, but those typical examples are not contained in $DDA$.
Therefore, it is reasonable to want to focus on $DA$ where~$A$ is a finite set. (For more discussion of this example, see~\Cref{sec:finitary}.)

Generally, the setting for a relative monad is a
category $\ACat$ (e.g. finite sets), another category $\CCat$ (sets),
and a functor $J:\ACat \to \CCat$ (every finite set is, in particular, a set). 
Then a relative monad is also a functor $T:\ACat\to \CCat$, which enforces the discipline that one cannot apply~$T$ to itself. (See~\Cref{sec:monads-relative-monads} for the full definitions.)

\hide{Relative monads~\cite{altenkirch_monads_2015, arkor_formal_2024,arkor_nerve_2024,arkor_relative_2025} generalise the theory of monads by separating types into two sorts $\ACat$ types and $\CCat$ types, with a function, often called the \emph{root}, to move from the former to the latter. Computations can only be performed on $\ACat$ types, but the types of computations are $\CCat$ types. }

\subsection{Relative monadic metalanguages}
This paper revolves around the study of metalanguages for relative monads.  
We generalise ordinary monadic metalanguages to this relative setting and enforce this separation of types into two sorts in our new language. This leads to a bifurcation which generalises the monadic programming paradigm in two directions.
\begin{description}
    \item[Paradigm 1:] The first paradigm entails programs between $\CCat$ types. In this paradigm, a relative monad is a \textit{restriction} on where the computations may occur. In particular, a computation of a computation may not exist, so they may not be arbitrarily iterated. For example, $\ACat$ types may be finite types, while the effectful programs may produce infinite results of type $\CCat$. The finite distributions monad, which encodes probabilistic choice, is one such example. While the restriction to $\ACat$ may seem inhibitive, it greatly simplifies program reasoning and analysis.
    \item[Paradigm 2:] The second paradigm is concerned with programs between $\ACat$ types. For $\CCat$ types to have any utility, there must be a way to move from them back to the $\ACat$ universe. Formally, this is usually instantiated as an adjunction, for example, in Linear Logic~\cite{benton-lnl}, Call by Push Value~\cite{cbpv}, and the Enriched Effect Calculus~\cite{eec}. This paradigm permits computations that are beyond the expressiveness of the $\ACat$ types to be performed in $\CCat$, before being transported back to programs in $\ACat$. This allows a more general notion of computation since $\CCat$ types are not necessarily \textit{first-class}, i.e. the types of inputs and outputs to programs, but can model more general behaviour than those in $\ACat$.
\end{description}

In programming language theory, and in the monadic metalanguage, we are mainly concerned with \emph{strong monads}~\cite{kock_monads_1970, kock_strong_1972, dylan-tarmo-strong}. When sequencing effectful programs together, non-strong monads are not sufficient because they do not allow the threading of contexts into later programs. Strong monads provide a method for pairing a \emph{pure} context with a computational result to be passed into the next program.

To interpret a relative monadic metalanguage, it thus makes sense to define the notion of \emph{strong relative monads}~\cite{tarmo-grm,slattery-thesis}. However, there are subtleties to account for when making this generalisation. Many equivalent definitions of strong monads~\cite{dylan-tarmo-strong} are no longer equivalent when transported to the relative setting. Varying definitions of strong relative monads have been used to generalise and express a range of programming concepts~\cite{tarmo-grm, slattery-thesis, andrew_slattery_strong_nodate}. To get a holistic view of strength for relative monads, we collate the definitions and outline the relationships between them. We provide a hierarchy of subsumptions and specify the criteria for when the subsumptions flatten into equivalences. In particular, we re-establish the connection between strong monads and enriched monads in the relative setting (\Cref{thm:hat-strong-corresp}).
These results feed into our understanding of the relative monadic metalanguage and are crucial for providing semantic proofs about its applications.

\subsection{Applications of the Relative Monadic Metalanguage}

Using the theory we develop, we extend the relative monadic metalanguage to capture two concepts: graded monads and arrows. Both of these notions have been modelled by program calculi previously, but using the power of strong relative monads, we provide semantically backed program calculi for these concepts. Both lead to useful conservative extensions of the calculi in the previous literature.

\subsubsection{Graded monads}
\hide{Non-determinism is a powerful abstraction in programming, enabling concise expression of multiple possible outcomes. However, it can become computationally expensive when too many branches are explored. In practice, it is often useful to track an upper bound on the number of branches under consideration. For example, consider a self-replicating robot tasked with mapping a maze. At each step, the robot may choose to move left, duplicate and move both left and right, or replicate further and move in all four directions. However, if a robot attempts to move into a wall, it crashes and is lost. Since robots are costly, we are interested in maintaining an upper bound on the number of surviving robots throughout the exploration process. This scenario illustrates a form of graded non-determinism, where each computational step is annotated with a known upper bound on the number of possible outcomes. Graded monads provide a principled way to compose such computations, preserving this upper-bound information via type annotations throughout the computation.}

Graded monads~\cite{katsumata_parametric_2014,tate-graded,orchard-graded,fkm-graded,kura-graded,hicks-graded} are a generalisation of monads that give a semantic version of \emph{effect usage analysis}. They have been used to give semantic accounts of memory access analysis with the state monad, output analysis for the writer monad, and branch cuts in non-determinism (e.g.~\cite{katsumata_flexible_2022, katsumata_parametric_2014}). Moreover, they have been used in probabilistic programs to track non-deterministic choice~\cite{liell-cock_compositional_2025}, and to track execution traces to permit programmable inference~\cite{lew_trace_2020}. A big advantage of graded monads is that they can implement programs that are \textit{cost-sensitive}. A graded monadic metalanguage can be used for cost analysis of programs~\cite{cicek_cost-analysis_nodate} by keeping track of the resources in the grades. Graded monads have been established as an instance of enriched relative monads~\cite{mcdermott_flexibly_2022}, and in \Cref{thm:strong-graded-as-relative} we extend this connection in the strong case. The correspondence enables our development of a relative monadic style language for graded monads (which we call LNL-RMM). This language follows paradigm 2, where it is built on top of the linear-non-linear calculus~\cite{benton-lnl}. This calculus has two sorts of types, linear and non-linear, and constructs that allow transitions between them. Linear type theory is a \textit{resource-sensitive} type theory that doesn't allow discarding or duplication of variables in contexts. It is, therefore, suitable for reasoning directly with grades, since we often interpret grades as tracking of effect usage. In particular, for cost analysis of programs~\cite{cicek_cost-analysis_nodate}, the linearity here is used to ensure we cannot do any computation for free. We therefore explicitly have additional \textit{grade types} in the linear sort that allow us to directly program with grades. By encoding a strong relative monad over the linear to non-linear transition, we obtain a language for graded monads. In previous languages, gradings are explicit parameters, whereas ours are types within the language, and the graded monad is framed as a function from grades to a relative monad~\eqref{eqn:GMasRM}. These first-class grades enable more flexible programming and code reuse, as described in Section~\ref{sec:gmm}. 

\subsubsection{Arrows}
Arrows are a programming language paradigm that generalise monads~\cite{hughes-arrows}. While monads are a tool for sequencing effectful computations, arrows model computation that is not strictly sequential. They have a broad range of applications in functional programs, such as for parsers and printers~\cite{printing-parsing-1999}, web interaction~\cite{hughes-arrows}, circuits~\cite{Paterson2001}, graphical user interfaces~\cite{courtney2001}, and robotics~\cite{Hudak2003}. The established connection between arrows and strong relative monads~\cite{altenkirch_monads_2015, tarmo-grm} motivates our development of a relative monadic language for arrows (which we call ARMM). This language also follows paradigm 2, where the standard programs are between $\ACat$ types, while $\CCat$ types model parametric computations on inputs of type $\ACat$. The language we develop is a conservative extension of the arrow calculus ~\cite{arrow-calc} (\Cref{thm:conservative-armm}). It has all the expressiveness of arrows but additionally allows for variables of arrow computations, which promotes the reusability and flexibility of the programs. Moreover, it has more principled semantics, making it amenable to generalisations with additional type formers, other variants such as a linear language for arrows, or semantic proofs of language properties.

\subsection{Contributions}
Our main contributions can be summarised as follows:
\begin{itemize}
    \item We introduce the relative monadic metalanguage (which we call RMM) in \Cref{sec:strong-rmm} as a framework for languages in the paper and provide a semantics via strong relative monads.
    \item We introduce the language LNL-RMM as a more expressive language for reasoning with graded monads and provide a semantic proof in \Cref{thm:conservative-gmm} that it is a conservative extension of the graded monadic metalanguage. 
    \item We introduce the language ARMM as a more expressive language for programming with arrows and provide a semantic proof in \Cref{thm:conservative-armm} that it is a conservative extension of the arrow calculus.
\end{itemize}
Our framework calculus RMM and strong relative monads are introduced in \Cref{sec:strong-strong}, including an example calculus for reasoning with finitary monads. \Cref{sec:comparing-srm} proves a correspondence between strong relative monads and enriched relative monads and relates other notions of strong relative monads in the literature. \Cref{sec:gmm} introduces the LNL-RMM and proves conservativity by showing a correspondence between strong graded monads and strong relative monads in \Cref{thm:strong-graded-as-relative}. \Cref{sec:armm} introduces ARMM and proves conservativity by providing a complete semantics for the arrow calculus in \Cref{thm:completeness-arrow}.
\section{Monads, relative monads and a simple calculus} \label{sec:monads-relative-monads}
In this section, we review the definition of monads and relative monads and introduce a simple calculus (with unary contexts) for relative monads~\cite{moggi_notions_1991}. We present the definition of a monad in its \textit{Kleisli triple form} which aligns with their implementation in programming languages. 
\begin{definition} \label{def:monad}
    A \emph{monad} $(T, \eta, (-)^*)$ on a category $\CCat$ consists of:
    \begin{itemize}
        \item An object $TX\in \CCat$ for each $X\in \ob\CCat$,
        \item A morphism $\eta_X: X\to TX$ for each $X\in \ob\CCat$ called the \textit{unit}
        \item A collection of maps $(-)^*: \CCat(X,TY)\to \CCat(TX,TY)$ called the \textit{Kleisli extension} or \textit{bind}
    \end{itemize}
    such that for $f:X\to TY$ and $g:Y \to TZ$, the following equations hold:
    \begin{mathpar} \label{eqn:monad-equations}
        (\eta_X)^* = \id_{TX} \and
        f^* \circ \eta_X = f \and
        g^* \circ f^* = (g^* \circ f)^*
    \end{mathpar}
\end{definition}

From this data, we can define a unique functor structure on $T$ such that $(-)^*$ and $\eta$ are natural in $X, Y\in\ob{\CCat}$, given by $Tf = (\eta_Y \circ f)^*$ for $f:X\to Y$~\cite{marmolejo2010monads}. Monads are often presented in their \textit{monoid form} $(T, \eta, \mu)$~\cite{mac_lane_categories_1998}, where here $\mu_X=(\id_{TX})^*$. From the monoid form, we can recover Definition \ref{def:monad} by defining $f^*=\mu_Y \circ Tf$ for $f : X \to TY$~\cite{hoehnke_manes_1978}. 

\begin{definition} \label{def:relative-monad}
Let $\ob\ACat$ be a collection of objects, $\CCat$ a category, and $J: \ob{\MCA} \to \ob{\CCat}$ a mapping of objects, called the \emph{root}. A \textit{$J$-relative monad} $(T, \eta, (-)^*)$ consists of
\begin{itemize}
    \item An object $TA \in \ob{\CCat}$ for each $A\in\ob{\ACat}$
    \item A morphism $\eta_A : JA \to TA$ in $\CCat$ for each $A\in\ob{\ACat}$
    \item  A collection of maps $(-)^*: \CCat(JA,TB)\to \CCat(TA,TB)$
\end{itemize}
all such that for $f: JA\to TB$ and $g: JB \to TC$, the following equations hold:
\begin{mathpar} \label{eqn:relative-monad-equations}
    (\eta_A)^* = \id_{TA} \and
    f^*\circ \eta_A = f \and
    g^* \circ f^* = (g^* \circ f)^*
\end{mathpar}
A \textit{morphism of $J$-relative monads} $\gamma : (T,\eta,(-)^*) \to (S,\nu, (-)^\dagger)$ consists of maps $\gamma_A : TA \to SA$ satisfying, for any $f : JA \to TB$,
\begin{mathpar}
    \nu_A = \gamma_A \circ \eta_A, \and
    \gamma_B \circ f^* = (\gamma_B \circ f)^\dagger \circ \gamma_A.
\end{mathpar}
We denote the category of $J$-relative monads and morphisms as $\mathsf{Mon}(J)$.
\end{definition}
When $\ACat$ is a category with objects $\ob\ACat$ and $J: \ob\ACat \to \ob\CCat$ extends to a functor $J: \ACat \to \CCat$, the mapping on objects $T: \ob\ACat \to \ob\CCat$ has a unique functor structure such that $(-)^*$ and $\eta$ are natural in $A,B \in \ob{\ACat}$, defined by $Tf = (\eta_A \circ Jf)^*$. This additionally makes the morphisms of $J$-relative monads, $\gamma_A: TA \to SA$, natural in $A$.

\begin{remark}
    Relative monads are often defined with respect to roots that are \emph{functors} ~\cite{altenkirch_monads_2015,arkor_formal_2024}, rather than mappings on objects. Since any mapping on objects $J: \ob\ACat \to \ob\CCat$ can be seen as a functor from the discrete category on $\ob{\ACat}$, these definitions are equivalent. In \Cref{sec:enriched-relative}, we will consider the enriched setting, where discrete categories may not exist. Hence, this equivalence breaks, and there are benefits to considering mappings on objects. 
\end{remark}
\newcommand{\Mon}{\mathbf{Mon}}

\begin{example}
$\mathsf{Mon}(\id_\CCat)$ is the category of monads on $\CCat$ and monad morphisms (see \cite{barr_toposes_1985}).
\end{example}
\begin{example} \label{ex:restriction}
    If $(T,\eta,(-)^*)$ is a monad on $\CCat$, then we can define a relative monad $(T J,\eta^J,(-)^{*_J})$ by restricting the monad to objects under the image of $J$, where for $f: JA \to TJB$;
    \begin{mathpar}
        \eta^J_A = \eta_{JA} : JA \to TJA
        \and
        f^{*_J} = f^* : TJA \to TJB
    \end{mathpar}
    This assignment on objects extends to a functor $\res_J : \Mon(\id_\CCat)\to \Mon(J)$. We say $T$ \emph{restricts} to a $J$-relative monad $\widetilde T$ if $\widetilde T\cong \res_J {T}$.
\end{example}

Relative monads are often studied for roots that are \emph{well-behaved}~\cite{altenkirch_monads_2015}. In these cases, the root satisfies additional conditions so that any relative monad is the restriction of an ordinary monad.
When programming with effects presented by algebraic operations (see~\cite{pp-notions,pp-algop-geneff}), we can take $\ACat$ to be a category of \textit{arities}, so extensions of a relative monad $T$ for a well-behaved $J:\ACat\to\CCat$ can be viewed as a monad with arities in $\ACat$~\cite{berger_monads_2012,bourke_monads_2019}.

\begin{example} \label{ex:finitary}
    The inclusion $J : \FinSet \to \Set$ is well-behaved, and every $J$-relative monad is the restriction of an ordinary monad on $\Set$. The functor $\res_J$ defines an equivalence between $J$-relative monads and \textit{finitary} monads on $\Set$. 
\end{example}

\begin{example} \label{ex:yoneda-relative}
    The Yoneda embedding $\yo : \CCat \to \widehat{C}$ (where $\widehat{C}=[\CCat\op,\Set]$) is also well-behaved, and $\yo$-relative monads correspond to \emph{promonads}~\cite{borceux1994handbook} or \emph{weak arrows}~\cite{altenkirch_monads_2015}. These have been used to provide semantics for computational effects that cannot be captured by monadic style programming \cite{hughes-arrows, Rom_n_2023}.
\end{example}

\subsection{A unary context metalanguage for relative monads (URMM)} \label{sec:simple-language}
This section introduces a limited calculus for relative monads that only permits unary contexts. While this limits the applicability of the calculus, it provides a template for the more complex versions introduced later. In particular, in \Cref{sec:strong-strong}, we extend this calculus to admit product types and multiple-arity contexts. Fixing a category $\ACat$, the language URMM($\ACat$) has the following types.
\begin{alignat*}{3}
   &\text{Types:} \quad &X,Y &::= ~ JA ~|~ TA  \tag{$A\in \ob\ACat$}
\end{alignat*}
A judgement is of the form $\tj {x:X} u Y$, and terms are built inductively.
\begin{mathpar}
    \inferrule{ }{\tj {x:X} x {X}}
    \and
    \inferrule{\tj {x:X} {u} {JA}}{\tj {x:X} {\mathbf{f} \, u} {JB}}(f \in \ACat(A,B))
    \\
    \inferrule{\tj {x:X} u {JA}}{\tj {x:X} {\ret u} {TA}}
    \and
    \inferrule{\tj {x:X} u {TA} \and \tj {y:JA} t {TB}}{\tj {x:X} {\doin{y}{u}{t}} {TB}}
\end{mathpar}

In addition to imposing the standard reflexivity, symmetry, transitivity, substitution, and congruence laws for equality of terms (see \Cref{app:eqns}), we have equations:
\begin{mathpar}
     \inferrule{\tj {x:X} {u} {JA}}{\tj {x:X} {\mathbf{id_A} \; u = u} {JA}}
    \and
    \inferrule{\tj {x:X} {u} {JB}}{\tj {x:X} {\mathbf{(g\circ f)} \; u = \mathbf{g} \; ( \mathbf{f} \; u)} {JC}}(f \in \ACat(A,B), g\in \ACat(B,C))
    \and
    \inferrule{\tj {x:X} u {JA}
\and \tj {y:JA} t {TB}}{\tj {x:X} {\doin{y}{\ret{u}}{t} = t[u/y]} {TB}}
    \and
    \inferrule{\tj {x:X} u {TA}
}{\tj {x:X} {\doin{y}{u}{\ret y} = u} {TA}}
    \and
    \inferrule{\tj {x:X} u {TA} \and \tj {y:JA} t {TB} \and \tj {z:JB} v {TC}}{\tj {x:X} {\doin{z}{(\doin{y}{u}{t})}{v}= \doin{y}{u}{(\doin{z}{t}{v})}} {TC}}
\end{mathpar}
In categorical semantics, it is usual to interpret types as objects in a category, and terms as morphisms. The additional structure we impose here is interpreted as a relative monad.
\begin{definition}
    A \textit{model of URMM(\ACat)} consists of:
    \begin{enumerate}
        \item A category $\CCat$
        \item A functor $J : \ACat\to \CCat$
        \item A $J$-relative monad $(T,\eta,(-)^*)$
    \end{enumerate}
\end{definition}
Given a model of URMM(\ACat), $(\CCat,J,T)$, we define the interpretation of a judgement $\den {\tj {x:X} u Y}$ to be a morphism $\den X\to \den Y$ in $\CCat$ (where $\den X$ is defined in the evident way). We say an equation $\tj {x:X} {u=v} {Y}$ is valid in $(\CCat,J,T)$ if $\den{\tj {x:X} u {Y}}=\den{\tj {x:X} v {Y}}$. We inductively define the interpretation of terms as follows:
\begin{align*}
    \den{\tj {x:X} x X} &= \id_{\den{X}} \\
    \den{\tj {x:X} {\mathbf{f}\; u} {JC}} &= Jf \circ \den{\tj {x:X} u {JB}} \\
    \den{\tj {x:X} {\ret{u}} {TB}} &= \eta_B \circ \den{\tj {x:X} u {JB}} \\
    \den{\tj {x:X} {\doin{y}{u}{t}} {TC}} &= \den{\tj {y:JB} t {TC}}^* \circ \den{\tj {x:X} u {JB}}
\end{align*}
\begin{proposition} \label{prop:URMM-completeness}
    The models of URMM(\ACat) are a sound and complete semantics.
\end{proposition}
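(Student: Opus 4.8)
The plan is to prove soundness and completeness separately, following the standard recipe for categorical semantics of a term calculus but adapted to the relative setting where contexts are unary.

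\textbf{Soundness.} First I would establish that every provable equation is validated in every model. Since equality of terms is generated inductively by the reflexivity, symmetry, transitivity, substitution, and congruence rules together with the specific axioms displayed above, it suffices to check that each generating rule is preserved by the interpretation $\den{-}$. The structural rules (reflexivity etc.) are immediate because $\den{-}$ lands in a category, so equality of denotations is itself reflexive, symmetric, and transitive; congruence follows because $\den{-}$ is defined compositionally from $J$, $\eta$, and $(-)^*$, each of which is a well-defined operation. The substitution rule requires a \emph{substitution lemma}, namely that $\den{\tj{x:X}{t[u/y]}{Y}} = \den{\tj{y:JA}{t}{Y}} \circ \den{\tj{x:X}{u}{JA}}$, proved by a routine induction on the derivation of $t$; this is the one piece of bookkeeping I would actually have to set up carefully, since it is used to interpret the $\beta$-like rule for $\mathsf{do}$. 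With the substitution lemma in hand, each of the five displayed axioms translates directly into one of the relative monad equations of \Cref{def:relative-monad} (and the functoriality of $J$). Concretely, the two $\mathbf{f}$-equations are the functoriality $J(\id_A) = \id$ and $J(g \circ f) = Jg \circ Jf$; the left-unit $\mathsf{do}\,y \leftarrow \ret{u}\,\mathsf{in}\,t = t[u/y]$ becomes $f^* \circ \eta_A = f$; the right-unit $\mathsf{do}\,y \leftarrow u\,\mathsf{in}\,\ret{y} = u$ becomes $(\eta_A)^* = \id_{TA}$; and the associativity of $\mathsf{do}$ becomes the associativity $g^* \circ f^* = (g^* \circ f)^*$.

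\textbf{Completeness.} For the converse, the plan is to build a \emph{term model} (syntactic/classifying model) in which every equation that holds in all models is already provable. I would take the category $\CCat$ to have as objects the types $JA$ and $TA$, and as morphisms $X \to Y$ the equivalence classes of terms $\tj{x:X}{u}{Y}$ modulo provable equality; composition is substitution $[u/x]$ and identities are the variable terms $x$, with the associativity and identity laws of this category coming exactly from the structural equational rules (substitution, plus the reflexivity/congruence needed to show these operations are well-defined on equivalence classes). I would then set $J A \mapsto JA$ on objects and $f \mapsto (\text{class of } \mathbf{f}\,x)$ on morphisms to define the functor $J : \ACat \to \CCat$ (functoriality being precisely the two $\mathbf{f}$-axioms), put $T A := TA$, let $\eta_A$ be the class of $\tj{x:JA}{\ret{x}}{TA}$, and define $(-)^*$ on the class of $\tj{y:JA}{t}{TB}$ to be the class of $\tj{x:TA}{\doin{y}{x}{t}}{TB}$. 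The three relative monad equations for this data then hold because they are exactly the left-unit, right-unit, and associativity $\mathsf{do}$-axioms, so $(\CCat, J, T)$ is a genuine model.

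The completeness argument is then closed by an \emph{initiality/fidelity} observation: by induction on the structure of a term one checks that in this term model the interpretation $\den{\tj{x:X}{u}{Y}}$ equals the equivalence class of $u$ itself. Consequently, if an equation $\tj{x:X}{u=v}{Y}$ is valid in all models then in particular it is valid in the term model, which says that the classes of $u$ and $v$ coincide, i.e.\ $u = v$ is provable. The main obstacle, and the step I would spend the most care on, is verifying that the term-model data is \emph{well-defined} on equivalence classes and that the above operations genuinely satisfy the relative monad axioms rather than merely the ordinary ones: because contexts are unary and $(-)^*$ is built from a $\mathsf{do}$ binding whose bound variable ranges over a $J$-type, one must check that the domains and codomains of $(-)^*$ line up with the relative typing $\CCat(JA, TB) \to \CCat(TA, TB)$ rather than the non-relative $\CCat(A, TB) \to \CCat(TA, TB)$ — this is where the two-sorted discipline of the language matches the relative monad structure, and it is the one point where the relative setting differs substantively from the classical monadic metalanguage completeness proof.
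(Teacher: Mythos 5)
Your proposal is correct and matches the paper's proof: soundness by structural induction (with a substitution lemma) and completeness via the same term model, with objects $JA$ and $TA$, morphisms given by equivalence classes of terms, the functorial action of $J$ given by $\mathbf{f}\,x$, and the $\mathsf{return}$/$\mathsf{do}$ terms supplying the relative monad structure. The paper states this only in outline, so your additional detail (the well-definedness checks and the fidelity of the interpretation in the term model) is exactly the bookkeeping the paper leaves implicit.
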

\begin{proof}
    The proof of soundness is done by structural induction and is standard. Completeness follows from the construction of a term model $(\mathcal{F},J,T)$. $\mathcal{F}$ has objects $JA$ and $TA$ for $A\in \ob\ACat$ and morphism $X\to Y$ are equivalence classes of terms $[\tj {x:X} u Y]$ under the equational theory. The functorial action of $J$ is defined by $[\tj {x:JA} {\mathbf{f} \; x} {JB}]$ while the $\mathsf{return}$ and $\mathsf{do}$ terms provide the data for the relative monad $T$.
\end{proof}

\section{Strong monads, strong relative monads and the relative monadic metalanguage} \label{sec:strong-strong}

Program contexts are often larger than a single variable. If we want to sum two numbers together, then we already require more than one variable. However, monads do not have sufficient structure for interpreting the sequencing of computations with multiple free variables. For example, say we intend to sequence the following two effectful programs.
\begin{mathpar}
    \tj {\Gamma} u {TA} \and \tj {\Gamma, x:A} t {TB}
\end{mathpar}
Interpreted as morphisms, these become
\begin{mathpar}
    \den{u}:\den{\Gamma} \to T\den{A},
    \and
    \den{t}:\den{\Gamma}\times \den{A} \to T\den{B}.
\end{mathpar}
To sequence these, we need to produce from $\den{t}$ a map $\den{\Gamma}\times T\den{A} \to T\den{B}$, i.e. apply Kleisli extension \emph{under the context $\Gamma$}. Strong monads provide the required semantics for such an operation. In the general setting, strong monads are defined with respect to monoidal products, of which the Cartesian product is an instance. We employ this setting for a linear type system (see~\cite{girard-ll,lnl-term-calculus,benton-lnl}) in \Cref{sec:gmm}, so we introduce the concepts at this generality.

\begin{definition}[\cite{mac_lane_categories_1998}]
    A \emph{monoidal category} $(\CCat,\otimes, I,\lambda,\rho,\alpha)$ consists of a category $\CCat$, an object $I \in \ob\CCat$, and a functor $\otimes: \CCat \times \CCat \to \CCat$, equipped with natural isomorphisms,
    \[
        \lambda_X : I\otimes X \to X, \qquad
        \rho_X : X\otimes I \to X, \qquad
        \alpha_{X,Y,Z} : (X\otimes Y) \otimes Z \to X\otimes (Y\otimes Z),
    \]
    called the \emph{left unitor}, \emph{right unitor} and \emph{asscociator}, satisfying coherence equations (see~\cite{mac_lane_categories_1998}).
    A monoidal category is \emph{symmetric} when there is an additional natural isomorphism $\sigma_{X,Y}: X\otimes Y \to Y\otimes X$, called the \emph{symmetry}, satisfying $\sigma_{Y,X}\circ \sigma_{X,Y} = \id_{X\otimes Y}$ and additional coherences (see~\cite{mac_lane_categories_1998}).
\end{definition}

A strong monad has the same data as a monad, but it can extend variables under the tensor.

\begin{definition}[\cite{moggi_notions_1991,dylan-tarmo-strong}] \label{def:str-monad}
    A \emph{strong monad} $(T, \eta, (-)^*)$ on a symmetric monoidal category $(\CCat, \otimes, I)$ consists of
    \begin{itemize}
        \item An object $TX\in \CCat$ for each $X\in \ob\CCat$
        \item A morphism $\eta_X: X \to TX$ for each $X\in \ob\CCat$
        \item A collection of maps $(-)^*: \CCat(\Gamma \otimes X,TY)\to \CCat(\Gamma\otimes TX,TY)$ natural in $\Gamma\in \ob{\CCat}$
    \end{itemize}
    such that, for $f:\Gamma \otimes X \to TY$ and $g: 
    \Delta \otimes Y \to TZ$, the following three equations hold:
    \begin{mathpar}
        (\eta_X \circ \lambda_{X})^* = \lambda_{TX} \and
        f^*\circ (\Gamma \otimes \eta_X) = f \and
        g^* \circ (\Delta \otimes f^*) \circ \alpha_{\Delta,\Gamma, TX} = (g^* \circ (\Delta\otimes f) \circ \alpha_{\Delta,\Gamma,X})^*
    \end{mathpar}
\end{definition}

Strong monads were originally presented using their \textit{monoid form}~\cite{kock_monads_1970,kock_strong_1972}. We discuss these equivalent definitions further in \Cref{sec:comparing-srm}. As in \Cref{def:relative-monad}, the \textit{Kleisli triple form} of a strong monad is more amenable to generalisation to the relative setting.

\begin{definition} \label{def:str-rel-monad}
    Let $\ob\ACat$ be a collection of objects, $(\CCat, \otimes, I)$ a symmetric monoidal category, and $J : \ob{\ACat} \to \ob{\CCat}$ a mapping on objects. A \emph{strong $J$-relative monad} $(T, \eta, (-)^*)$ on $\CCat$ consists of:
    \begin{itemize}
        \item An object $TA\in \ob{\CCat}$ for each $A \in \ob{\ACat}$,
        \item A morphism $\eta_A: JA \to TA$ for each $A \in \ob{\ACat}$,
        \item  A collection of maps $(-)^*: \CCat(\Gamma \otimes JA,TB)\to \CCat(\Gamma \otimes TA,TB)$ natural in $\Gamma \in \CCat$.
    \end{itemize}
    all such that for $f: \Gamma \otimes JA \to TB$ and $g: \Delta \otimes JB \to TC$, the following equations hold:
    \begin{mathpar}
        (\eta_A \circ \lambda_{JA})^* = \lambda_{TA} \and
        f^*\circ (\Gamma \otimes \eta_A) = f \and
        g^* \circ (\Delta \otimes f^*) \circ \alpha_{\Delta,\Gamma,TA} = (g^* \circ (\Delta \otimes f) \circ \alpha_{\Delta,\Gamma,JA})^*
    \end{mathpar}
    A \textit{morphism of strong $J$-relative monads}  $ \gamma : (T,\eta,(-)^*) \to (S,\nu, (-)^\dagger)$ consists of, for each $A \in \ob\ACat$, maps $\gamma_A: TA \to SA$ satisfying, for any $f : \Gamma \otimes JA \to TB$:
    \begin{mathpar}
        \nu_A = \gamma_A \circ \eta_A \and
        \gamma_B \circ f^* = (\gamma_B\circ f)^\dagger \circ (\Gamma \otimes \gamma_A)
    \end{mathpar}
    We write $\mathbf{SMon}(J)$ for the category of strong $J$-relative monads.
\end{definition}

\begin{remark}
    Other definitions of strong relative monad have been suggested in the literature~\cite{tarmo-grm, arkor_formal_2024}.
    We explore the connections between these definitions further in \Cref{sec:comparing-srm}. For non-symmetric
    monoidal categories, the number of inequivalent definitions fragments further. The most applicable notion for
    (non-commutative) linear type theories is that of \emph{bistrong} $J$-relative monads, which we describe in
    \Cref{app:bistrength}. This notion is closely related to \emph{strong multicategorical relative monads}~\cite{slattery-thesis}.
\end{remark}

Given a strong $J$-relative monad $(T, \eta, (-)^*)$, it has an \textit{underlying $J$-relative monad} $(T, \eta, (-)^{{*_U}})$, given by $f^{*_U} = (f \circ \lambda_{JA})^* \circ \lambda^{-1}_{TA}$ for $f \in \CCat(JA, TB)$. This extends $T$ to a functor when $J$ is a functor as in the non-strong case, and defines the object mapping for a forgetful functor $\mathbf{SMon}(J) \to \Mon(J)$. The restriction functor $\res_J: \Mon(\id_\CCat) \to \Mon(J)$ extends via this forgetful functor to a restriction $\sres_J: \mathbf{SMon}(\id_\CCat) \to \mathbf{SMon}(J)$.
 \[\begin{tikzcd}
	{\mathbf{SMon}(\mathsf{id}_\mathcal{C})} & {\mathbf{SMon}(J)} \\
	{\mathbf{Mon}(\mathsf{id}_\mathcal{C})} & {\mathbf{Mon}(J)}
	\arrow["{\mathsf{sres}_J}", from=1-1, to=1-2]
	\arrow[from=1-1, to=2-1]
	\arrow[from=1-2, to=2-2]
	\arrow["{\mathsf{res}_J}"', from=2-1, to=2-2]
\end{tikzcd}\]

\begin{example} \label{ex:set-rm-strength}
    When $\CCat=\Set$, every $J$-relative monad $(T,\eta,(-)^*)$ is uniquely strong, where the extension maps $(-)^\dagger : \Set(X\times JA,TB)\to \Set(X\times TA,TB) $ are given by $f^\dagger=\lambda(x,u).(f(x,-))^*(u)$.
\end{example}
\begin{example}
    $\mathbf{SMon}(\id_C)$ is the category of strong monads and strong monad morphisms~\cite{dylan-tarmo-strong}.
\end{example}

\subsection{The relative monadic metalanguage (RMM)} \label{sec:strong-rmm}

\newcommand{\rmm}{\lambda_{\mathsf{rmm}}}
Using strong relative monads, we now extend the unary context language URMM in \Cref{sec:simple-language} to a language RMM that admits product types and sequences computations under arbitrary contexts. It is analogous to the monadic metalanguage~\cite{moggi_notions_1991}, but restricts the type of sequencing allowed. It is also reminiscent of a fragment of Call-by-Push-Value~\cite{cbpv}, but the typical relative monads are not full Call-by-Push-Value models (see also~\cite{DBLP:journals/pacmpl/JiangXN25}).  In \Cref{sec:gmm} and \Cref{sec:armm}, we use this relative monadic metalanguage as a foundation for modelling generalisations of monadic effects.

Fixing a category $\ACat$, the language RMM($\ACat$) is defined as follows.
\begin{alignat*}{3}
   &\text{Types:} \quad &X,Y &::= 1 ~|~ X\times Y ~|~  JA ~|~ TA  \tag{$A\in \ob\ACat$} \\
   &\text{Contexts:} \quad &\Gamma &::= \emptyset ~|~ \Gamma,x:X 
\end{alignat*}
Terms are generated inductively. The full equational theory is given in \Cref{app:RMM-equations}.
\begin{mathpar}
    \inferrule{ }{\tj {\Gamma,x:X} x {X}}
    \and
    \inferrule{ }{\tj {\Gamma} \lunit {1}}
    \and
    \inferrule{\tj \Gamma u {X\times Y}}{\tj \Gamma {\pi_1 u} {X}}
    \and
    \inferrule{\tj \Gamma u {X\times Y}}{\tj \Gamma {\pi_2 u} {Y}}
    \and
    \inferrule{\tj {\Gamma} u {X} \and \tj {\Gamma} t {Y}}{\tj {\Gamma} {(u,t)} {X\times Y}}
    \and
    \inferrule{\tj \Gamma {u} {JA}}{\tj \Gamma {\mathbf{f} \, u} {JB}}(f \in \ACat(A,B))
    \and
    \inferrule{\tj \Gamma u {JA}}{\tj \Gamma {\ret{u}} {TA}}
    \and
    \inferrule{\tj {\Gamma} u {TA} \and \tj {\Gamma, x:JA} t {TB}}{\tj {\Gamma} {\doin{x}{u}{t}} {TB}}
\end{mathpar}
As in \Cref{sec:simple-language}, we interpret terms and types in a category, now with a \emph{strong} relative monad.
\begin{definition}
    A \textit{model of RMM(\ACat)} consists of:
    \begin{enumerate}
        \item A Cartesian category $\CCat$
        \item A functor $J : \ACat\to \CCat$
        \item A strong $J$-relative monad $(T,\eta,(-)^*)$
    \end{enumerate}
\end{definition}
If $\Gamma=a_1:A_1,..,a_n:A_n$, then its interpretation is given by the product $\prod_{i=1}^n \den{A_i}$ in~$\CCat$. A term $\tj \Gamma u X$ is interpreted as a morphism $\den \Gamma \to \den X$ in~$\CCat$. The only significant difference from the interpretation of terms in URMM(\ACat) is the sequencing terms.
\begin{displaymath}
    \den{\tj \Gamma {\doin{x}{u}{t}} {TB}}=(\den{\tj {\Gamma,x:JA} t {TB}})^*\circ (1_\Gamma,\den{\tj \Gamma u {TA}})
\end{displaymath}
\begin{theorem}
The models of RMM(\ACat) are a sound and complete semantics.
\end{theorem}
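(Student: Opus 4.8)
The plan is to follow the template set by \Cref{prop:URMM-completeness}, extending it to account for product types, the terminal type, and multi-arity contexts. As before the argument splits into soundness, proved by a routine structural induction, and completeness, proved by constructing a term model that is itself a model and is \emph{generic}, so that any equation valid in all models is already derivable.

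For soundness I would induct on the derivation of each equation in the theory (\Cref{app:RMM-equations}). The product equations ($\beta$ and $\eta$ for pairing and projections, and the unit type) hold because $\CCat$ is Cartesian, and the functoriality equations for $\mathbf f$ hold because $J$ is a functor. The three equations governing $\mathsf{return}$ and $\mathsf{do}$ are validated using the strong $J$-relative monad axioms of \Cref{def:str-rel-monad}: the left-unit law $\doin{x}{\ret u}{t}=t[u/x]$ uses $f^*\circ(\Gamma\otimes\eta_A)=f$; the right-unit law $\doin{x}{u}{\ret x}=u$ uses $(\eta_A\circ\lambda_{JA})^*=\lambda_{TA}$ together with naturality of the strength in $\Gamma$ (rewriting the relevant projection as $\lambda_{JA}$ precomposed with the terminal map); and the associativity law for nested $\mathsf{do}$ uses the third axiom, the one mediated by the associator $\alpha$. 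The only new bookkeeping compared with URMM is that the interpretation of $\doin{x}{u}{t}$ threads the whole context through via the pairing $(1_\Gamma,\den u)$, so each verification must track an extra $\Gamma$ factor; this is exactly what the naturality of $(-)^*$ in $\Gamma$ provides.

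For completeness I would build the classifying category $\MCF$. Its objects are the types of RMM($\ACat$), with a context $a_1:A_1,\dots,a_n:A_n$ identified with the product type $A_1\times\cdots\times A_n$ and the empty context with $1$. A morphism $X\to Y$ is an equivalence class $[\tj{x:X}{u}{Y}]$ of terms modulo the equational theory, with identities given by variables and composition by substitution. The terminal object is $1$ and binary products are $X\times Y$ with the syntactic projections and pairing; the $\beta\eta$-equations make these a genuine Cartesian structure. The root functor $J:\ACat\to\MCF$ sends $A$ to $JA$ and $f$ to $[\tj{x:JA}{\mathbf f\,x}{JB}]$, with functoriality supplied by the two $\mathbf f$-equations. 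The relative monad sends $A$ to $TA$, with unit $\eta_A=[\tj{x:JA}{\ret x}{TA}]$, and I would define the strong Kleisli extension of a morphism $f:\Gamma\otimes JA\to TB$, represented by $\tj{z:\Gamma\times JA}{f}{TB}$, to be the class of $\tj{w:\Gamma\times TA}{\doin{x}{\pi_2 w}{f[(\pi_1 w,x)/z]}}{TB}$.

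The main obstacle is verifying that this data genuinely forms a strong $J$-relative monad in $\MCF$: that the just-defined extension is well defined (respects the equational theory), natural in $\Gamma$, and satisfies the three axioms of \Cref{def:str-rel-monad}. Well-definedness and naturality reduce to substitution lemmas for $\mathsf{do}$, while the two unit axioms follow directly from the corresponding unit equations of the theory. The delicate step is the third axiom, whose proof requires rewriting a doubly-nested $\doin{}{}{}$ under a reassociation of the context and matching it against the associativity equation of the theory; this is where the interplay between the product structure and the $\mathsf{do}$-equations must be handled carefully. Once $\MCF$ is shown to be a model, genericity is immediate: the interpretation of $\tj{x:X}{u}{Y}$ in $\MCF$ is its own equivalence class $[\tj{x:X}{u}{Y}]$, so if $\den{u}=\den{v}$ in every model then in particular $[u]=[v]$ in $\MCF$, i.e. the equation is derivable, giving completeness.
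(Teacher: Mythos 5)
Your proposal is correct and follows essentially the same route as the paper's (very terse) proof: soundness by structural induction on the equational theory via a substitution lemma, and completeness via a term model built analogously to the URMM case, with contexts identified with product types and the strong Kleisli extension given syntactically by the $\mathsf{do}$ construct. The additional detail you supply (the explicit syntactic definition of $(-)^*$ in the classifying category and the verification that the right-unit law needs naturality in $\Gamma$ to rewrite the projection as $\lambda_{JA}$ precomposed with the terminal map) is consistent with, and fills in, what the paper leaves implicit.
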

\begin{proof}
    As in \Cref{prop:URMM-completeness}, the proof of completeness is by constructing the term model (defined analogously to the term model in \Cref{prop:URMM-completeness}). The proof of soundness is also by structural induction, and by first verifying a substitution lemma $\den{\tj \Gamma {t[u/x]} Y}=\den{\tj {\Gamma,x:X} t Y}\circ (1_\den{\Gamma},\den{\tj \Gamma u X})$. Note that the equations in \Cref{def:str-rel-monad} for a strong relative monad reduce to the equations in \Cref{app:RMM-equations} on a Cartesian category.
\end{proof}

\subsection{Strongly finitary monads in the relative monadic metalanguage}\label{sec:finitary}

\newcommand{\coin}{\mathsf{coin}}
\newcommand{\Atoms}{\mathbb{A}}
\newcommand{\SVal}{\mathbb{V}}
\newcommand{\glkup}{{!}}
\newcommand{\gassgn}{\mathrel{:=}}
\newcommand{\gref}{\mathsf{ref}}

The relative monadic metalanguage can be thought of as a very simple
first-order programming language. The strength allows us to reason in context,
which is crucial for a programming language. The relative aspect
allows us to keep to a first order. Even if the ordinary monadic
metalanguage does not have explicit function types, it is
not really first-order, because it can talk about computations that return
computations ($TTA$). This type would be represented in Standard ML
with the type \texttt{unit -> (unit -> A)}, since functions of unit type are
thunks and \texttt{unit -> (-)} is a monad; notice that it is indeed higher
order there. 

This cut-down, first-order setting is ideal for equationally presenting
notions of computational effect, without worrying about aspects of higher-order
computation that are often orthogonal. The idea is that a 
computational effect can be presented by giving a first-order
programming language together with equations that it satisfies.
One can often add higher-order features separately, afterwards.

In this section, we demonstrate how this cut-down approach is already being taken in the literature, and can be
thought of as making essential use of an informal flavour of the relative monadic
metalanguage,
with examples from both probability and finitary $\Set$-monads (\Cref{sec:probfinitary}) and local store and more general finitary enriched monads (\Cref{sec:kp}). 

\subsubsection{Finitary monads, and examples from probabilistic programming}\label{sec:probfinitary}
Recall that a \emph{finitary monad} can be given by a relative monad
for the embedding $J:\FinSet\to \Set$, of finite sets into sets (\Cref{ex:finitary}).
These are necessarily strong (\Cref{ex:set-rm-strength}).
We can regard the objects of $\FinSet$
as being natural numbers.

For illustration, recall from the introduction the finitary monad of probability
distributions. The relative monad $D:\FinSet\to\Set$
has $Dn$ the probabilities on $n$ outcomes,
\[\textstyle Dn=\{(p_1\dots p_n)~|~p_i\geq 0\ \&\ \sum_{i=1}^n p_i=1\}\text.\]

Often, a finitary monad is presented by an algebraic theory with
operations and equations. In the case of $D$, this can be presented
by Stone's barycentric algebras~\cite{marshall_h_stone_postulates_1949}. 
This includes a binary operation $x\oplus y$, meaning `toss a fair coin and
choose $x$ or $y$', and axioms
\begin{equation}\label{eqn:stone-alg}
    x\oplus x=x,
    \qquad
    (u\oplus x)\oplus (y\oplus z)=
    (u\oplus y)\oplus (x\oplus z),
    \qquad
     x\oplus y=y\oplus x.
\end{equation}
If we restrict the $p_i$s to dyadic rationals, i.e. $p_i=\frac{a}{2^n}$, this is a full
presentation of $D$. This gives part of a correspondence between finitary monads and algebraic theories. 

On the other hand, the relative monadic metalanguage for $D$ supports
base types from $\ACat=\FinSet$, and a term
\[
  \tj {}\coin {T2}
\]
which is intuitively a coin toss operation, returning a boolean value
($2$), and is interpreted as the probability distribution
$(0.5,0.5)$.
This is a finite probabilistic programming language, similar to
DICE~\cite{dice}, which allows sequencing probabilistic operations and
calculating results.

In this setting, Stone's three equations~\eqref{eqn:stone-alg} become
\begin{equation} \begin{aligned}\label{eqn:stone-prog}
    \doin x \coin {\ret ()} &= \ret () \\
    \doin x \coin {\doin y \coin {\ret (x,y)}} &= \doin y \coin {\doin x \coin {\ret (x,y)}} \\
    \doin x \coin {\ret (\neg x)} &= \coin   
\end{aligned}
\end{equation}
where $\neg:2\to 2$ is the boolean negation. 

This program-based way of specifying the equations is complete in the
following sense. For any pair of programs $\tj {J\Gamma} {t,u} {Tn}$,
the equation $t=u$ is derivable from the laws of RMM(\ACat) if and only if it is true in the
model. 

This means that we can use program equations to directly specify a
finitary monad, instead of algebraic theories, as long as we restrict
to programs in the relative monadic metalanguage.   

As an aside, we note that the relative setting here is useful
specifically in probabilistic programming for at least three reasons.
\begin{enumerate}
\item Finite distributions extend to a monad $\widetilde D$ on
  $\Set$, the monad of finitely supported probability distributions.
  However, $\widetilde D\widetilde Dn$ does not contain the famous
  distributions-on-distributions from statistics, such as the
  Dirichlet distribution, because they are not finitely
  supported. There are other probability monads on other categories
  that do support the Dirichlet distribution~\cite{giry_categorical_1982}, but for this
  particular $D$ it is statistically strange to consider $\widetilde D\widetilde
  Dn$. 
\item The inference method in the DICE language works on fixed finite
  spaces. Thus, it is crucial to work with $Dn$ for some given finite
  $n$, rather than $DS$ for some potentially infinite set.
\item The synthetic approach to probability theory based on Markov
  categories can be formulated in terms of relative
  monads~\cite[\S 10.19]{fritz_synthetic_2020}. The restriction
  seems to be crucial for maintaining information flow axioms,
  which typically fail in models of a full monadic lambda calculus~\cite{dilations-fritz}. 
\end{enumerate}

\subsubsection{Kelly-Power style presentations, distributive Freyd categories, nominal sets, and local store}\label{sec:kp}
We have noted that the dyadic distributions monad could be presented by
three program equations~(\ref{eqn:stone-prog}). In general, if monads
are to be used for computational purposes, this is a helpful way to
specify them. This is a point made by \cite[p.~13]{pp-notions} and~\cite{staton-local-state}, where it is shown that, for instance, a local store monad (with dynamically allocated references) can be presented by program equations such as 
\begin{align}\label{eqn:localstateA}
x\gassgn y ; \glkup x\quad&=\quad
    x\gassgn y ; \ret y
    \\
    \doin x{\gref (u)}{x\gassgn v;\ret x}
    \quad&=\quad \doin x{\gref (v)}{\ret x}\label{eqn:localstateB}
\end{align}
We now explain how this kind of presentation essentially amounts to formal equational reasoning in the relative monadic metalanguage. A subtle point is that the dynamically allocated references do not belong to an ordinary finite set, and that leads us beyond finitary $\Set$-monads. We revisit this example more precisely at the end of this subsection. 

In fact, this general approach to presenting monads was essentially already taken by Kelly and Power~\cite[\S5]{kelly_adjunctions_1993} in their
formalization of algebraic presentations of strong monads, although
they did not write in terms of programs. We now recall some key aspects of that framework.

Kelly and Power begin (\S4) with a symmetric monoidal closed
  category $\CCat$ that is locally finitely presentable as a closed
  category. This has a full subcategory $\ACat=\CCat_{\mathrm{f}}$
  of finitely presentable objects.
  In general, a finitary enriched monad on $\CCat$ is, in our terminology,
  a strong monad relative to the embedding $J:\ACat\hookrightarrow
  \CCat$.
  A small worry is that $\ACat$ is a large category, and so they pick
  a set $N$ of representatives of each isomorphism class. These might
  be the basic types in our language.
  In the case of finitary monads on $\Set$, $\ACat=\FinSet$, and $N$
  can be the set of natural numbers.

  To present the theory, Kelly and Power begin with a
  signature, which is an assignment $B:N\to \CCat$. This is
  picking out for each type $A\in N$, a $\CCat$ object of basic
  computations of type $A$. This is not a functor.
  For probability, we would put $B2=\{\coin\}$, and $Bn=\emptyset$
  for $n\neq 2$. But in general, $N$ might not range over natural numbers.

  Kelly and Power then generate a relative monad $FB$ from $B$. 
  This is, in effect, terms built from an RMM calculus specialized to $\CCat$, 
  starting with basic operations $x:Jm \vdash t: FBn$ for each $n$.

  Kelly and Power then allow a system of equations, which are
  essentially 
  pairs from $FB$. They show that these generate an enriched finitary
  monad on $\CCat$, and that moreover, every enriched finitary monad arises in
  this way. 

  The approach in~\cite{kelly_adjunctions_1993} is focused on locally finitely
  presentable categories, but the procedure also goes through with
  other
  sound limit doctrines (e.g.~\cite{sam-freyd-lawvere}), giving complete approaches to presenting
  Freyd categories and distributive Freyd categories.
  These are theories of first-order programming
  languages with side effects, and
  RMM($\ACat$) for distributive Freyd categories is precisely the fragment of
  FGCBV~\cite{levy-environments} without function types. 

  For a precise example, we consider the presentation of a monad for local store
  in~\cite{staton-local-state}. There, the category $\CCat$ comprises Pitts' nominal
  sets~\cite{pitts-nom}, and $\ACat$ comprises the full subcategory of finitary strong nominal sets~\cite[Thm.~2.7]{pitts-nom}.
  We do not need the full definitions, but the important point is that
  $\CCat$ is like a category of sets
  but with a special object $\mathbb{A}$ of atoms, and $\ACat$ is a subcategory spanned by interpretations of first-order types. 
  Then~\cite{staton-local-state} presents local store as a strong monad relative to the
  inclusion $\ACat\to\CCat$. 
  These will be used
  as names of reference cells, and the equivariance of nominal sets maintains
  that they are abstract -- there is no pointer arithmetic.
  We also assume a strong nominal set~$\SVal$ of storable values. 

  The theory is generated by basic programs,
  \[
    x:J\Atoms,y:J\SVal\vdash x \gassgn y : T1
    \qquad
    x:J\Atoms\vdash \glkup x : T\Atoms
    \qquad
    y:J\SVal\vdash \gref( y) : T\Atoms
  \]
  for assign, lookup, and allocating a new reference cell, with equations such as \eqref{eqn:localstateA} and~\eqref{eqn:localstateB}.
  The paper discusses completeness of the system of 14 equations and shows that they present a monad,
  reasoning using a language that is essentially an instantiation of RMM. In equation~\eqref{eqn:localstateA}, the strength allows us to use the variable $x$ twice on the left-hand side, and $y$ twice on the right.

\section{Comparing notions of strong relative monads} \label{sec:comparing-srm}

Strong monads have several equivalent presentations~\cite{dylan-tarmo-strong}; however, these equivalences are not all maintained when they are transported to the relative setting. In this section, we recall notions of strong monads and describe their corresponding generalisations. We show when equivalent notions are preserved, and provide criteria for recovering them when they are not. Understanding these correspondences is crucial in the development of refined semantic type theories extending RMM.  In particular, our results about graded monads in \Cref{sec:gmm} rely on connections with \emph{enriched relative monads}~\cite{mcdermott_flexibly_2022, arkor_formal_2024}. Additionally, our semantics of the arrow calculus in \Cref{sec:armm} uses a weaker notion of strength.

We begin by recalling the categorical background for these correspondences.
\begin{definition}[\cite{kelly}]
   Let $(\VCat,\otimes,I)$ be a symmetric monoidal category. A $\VCat$-enriched category $\CCat$ consists of:
    \begin{itemize}
        \item A collection of objects $\ob\CCat$
        \item A $\VCat$-object of morphisms $\CCat[X,Y]$ for each $X,Y\in\ob\CCat$
        \item An identity morphism $\mathbf{id}_X: I \to \CCat[X,X]$ for each $X\in \ob\CCat$
        \item Composition morphisms $
        \bullet_{X,Y,Z}: \CCat[Y,Z] \otimes \CCat[X,Y] \to \CCat[X,Z]$ for $X,Y,Z\in\ob\CCat$
    \end{itemize}
    such that composition and identities satisfy associativity and unitality equations (see~\cite{kelly}).
    
    A \emph{$\VCat$-enriched functor} $J : \ACat \to \CCat$ between $\VCat$-enriched categories $\ACat$ and $\CCat$ consists of a mapping on objects $J : \ob{\ACat}\to \ob{\CCat}$ along with morphisms $J_{A,B} : \ACat[A,B]\to \CCat[JA,JB]$ in $\VCat$ preserving composition and identities.
\end{definition}
\newcommand{\daytimes}{\mathbin{\widehat\otimes}}

\begin{definition}[Day monoidal structure~\cite{day_closed_1970}] \label{def:day}
    Let $(\CCat, \otimes, I)$ be a small symmetric monoidal category. Then $(\widehat{\CCat}, \daytimes, \widehat I)$ is a symmetric monoidal closed category (i.e. $-\daytimes F \dashv F\multimap -$ for all $F\in\ob{\widehat{\CCat}}$), and in particular, is self-enriched. Here, $\widehat I = \yo I_{\CCat}$  and
    \[
    (F \daytimes G)(X) = \int^{Y,Z\in\CCat\op} \CCat(X, Y \otimes_\CCat Z) \times FY \times GZ.
    \]
\end{definition}
\begin{definition}[Fully faithful enrichment] \label{def:ffe}
If $J : \ob{\ACat} \to \ob{\CCat}$ is a mapping on objects and $\CCat$ is a $\VCat$-enriched category, the \emph{$J$-fully faithful} enrichment of $\ACat$ and $J$ is defined by $\ACat[A,B] = \CCat[JA,JB]$ and $J_{A,B} = \id_{\CCat[JA,JB]}$.
\end{definition}
These definitions allow us to present the following three notions, which give varying perspectives on strong monads. The first provides a natural semantics for monadic programming languages by allowing extension under any computational context. The second is a categorical viewpoint; it shows that a strong monad is analogous to the data for a monad in monoid form, but with additional coherent structure. The final perspective bridges the gap between these two; it says that strength is an \textit{internalisation} of the extension operation of a monad. This is necessary for programming languages, where the terms of the language need to \textit{internalise} the usage of bind to program with it.

\begin{proposition}[\cite{dylan-tarmo-strong}] \label{prop:equiv-strong-monad}
    Consider a small symmetric monoidal category $\CCat$. The following data are in bijective correspondence:
    \begin{enumerate}
        \item A strong monad on $\CCat$, as in Definition \ref{def:str-monad}.
        \item A monad on $\CCat$ with strength maps $\theta_{X,Y} : X\otimes TY \to T(X\otimes Y)$ satisfying compatibility conditions with the monoidal and monad structures (see ~\cite{kock_strong_1972}, also \Cref{def:strength-maps} with $J = \id_\CCat$).
        \item A $(\widehat{\CCat},\daytimes,\widehat{I})$-enriched monad (see~\cite{dylan-tarmo-strong}, also \Cref{def:enriched-relative} with $J = \id_\CCat$) on $\CCat$ with the $\yo$-fully faithful enrichment.
    \end{enumerate}
\end{proposition}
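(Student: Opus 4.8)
The plan is to establish the two bijections $(1)\leftrightarrow(2)$ and $(2)\leftrightarrow(3)$ independently; composing them gives the full three-way correspondence. In all three presentations the object assignment $X\mapsto TX$ and the unit $\eta$ are the same data, so only the ``multiplication/extension'' part and its coherence conditions need to be translated back and forth. I would therefore fix $(T,\eta)$ and compare the three kinds of extra structure laid over it.

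For $(1)\leftrightarrow(2)$, this is the strong analogue of the standard equivalence between the Kleisli-triple and monoid presentations of a monad. Starting from a strong monad $(T,\eta,(-)^*)$ as in \Cref{def:str-monad}, I would first pass to its underlying monad $(T,\eta,(-)^{*_U})$ via $f^{*_U}=(f\circ\lambda)^*\circ\lambda^{-1}$ (as recorded in the excerpt), giving a multiplication $\mu_X=(\id_{TX})^{*_U}$, and then read off a strength $\theta_{X,Y}=(\eta_{X\otimes Y})^*$, where the extension is taken with context $\Gamma=X$ applied to $\eta_{X\otimes Y}:X\otimes Y\to T(X\otimes Y)$. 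Conversely, given $(T,\eta,\mu,\theta)$ I would set $f^*=\mu_Y\circ Tf\circ\theta_{\Gamma,X}$ for $f:\Gamma\otimes X\to TY$. The naturality of $(-)^*$ in $\Gamma$ is exactly what guarantees that $\theta$ so defined is a genuine natural family, and the three Kleisli equations of \Cref{def:str-monad} unpack into the unit, associativity, and strength-compatibility laws of the monoid-plus-strength form, and back. Checking that these two translations are mutually inverse is a direct diagram chase.

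For $(2)\leftrightarrow(3)$, the engine is Day's correspondence between tensorial strengths and enrichments over the Day convolution. The crucial computation is to unwind the $\yo$-fully faithful enrichment of \Cref{def:ffe}: using the closed structure of $(\widehat{\CCat},\daytimes,\widehat I)$ from \Cref{def:day}, the strong monoidality of $\yo$, and the (co)Yoneda lemma, the enriched hom-object evaluates as
\[
    \CCat[X,Y](Z)\;\cong\;\widehat{\CCat}(\yo Z\daytimes\yo X,\yo Y)\;\cong\;\widehat{\CCat}(\yo(Z\otimes X),\yo Y)\;\cong\;\CCat(Z\otimes X,Y).
\]
Under this identification an enriched functor structure $\CCat[X,Y]\to\CCat[TX,TY]$ on $T$ is precisely a family $\CCat(Z\otimes X,Y)\to\CCat(Z\otimes TX,TY)$ natural in $Z$, which is exactly a tensorial strength on $T$ (evaluate at $Y=Z\otimes X$ and feed in the identity to recover $\theta$); enriched naturality of $\eta$ and $\mu$ becomes strength-compatibility of those transformations; and the enriched monad axioms become the monad laws in strong (monoid) form. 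This is the $J=\id_\CCat$ instance of \Cref{def:enriched-relative} and of the general correspondence \Cref{thm:hat-strong-corresp}.

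The main obstacle is bookkeeping rather than conceptual, namely verifying that no coherence is lost in translation. The two delicate points are: (i) in $(1)\leftrightarrow(2)$, confirming that the three axioms of \Cref{def:str-monad} match Kock's strength coherences together with the part-(2) monad laws, which requires care with the associator and unitor insertions; and (ii) in $(2)\leftrightarrow(3)$, checking that the end formula for $\daytimes$ together with the co-Yoneda reduction is compatible with composition, so that enriched functoriality (respecting $\bullet$ and $\mathbf{id}$) corresponds exactly to the two strength-coherence equations. Once the hom-object identification displayed above is established, everything else follows by naturality, so computing $\CCat[X,Y]$ is the technical heart of the argument.
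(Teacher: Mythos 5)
Your outline is correct, and the individual constructions are the ones the paper itself uses: the paper states this proposition as a citation to the literature, but its relative generalisations supply both legs on specialising to $J=\id_\CCat$ --- \Cref{thm:hat-strong-corresp} gives $(1)\leftrightarrow(3)$ via precisely your Yoneda computation $\CCat[X,Y]\cong \CCat(-\otimes X,Y)$, and \Cref{prop:J-strength} gives $(1)\leftrightarrow(2)$ via exactly your formulas $\theta_{X,Y}=(\eta_{X\otimes Y})^*$ and $f^*=\mu_Y\circ Tf\circ\theta_{\Gamma,X}$. The genuine difference is the decomposition: you pivot on $(2)$, proving $(1)\leftrightarrow(2)$ and then $(2)\leftrightarrow(3)$, whereas the paper pivots on $(1)$. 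Classically this is immaterial, but the paper's choice is deliberate: in the relative setting $(1)\leftrightarrow(3)$ survives while $(2)$ becomes strictly weaker, and your $(2)\leftrightarrow(3)$ leg quietly uses that $T$ is an endofunctor, so that $T(Z\otimes X)$ exists and the enriched functor structure can be reconstructed as $f\mapsto Tf\circ\theta_{Z,X}$ --- exactly the feature that fails for a general root $J$. Note also that $(3)$, read off from \Cref{def:enriched-relative}, is in Kleisli form: the extension morphism $*:\CCat[X,TY]\to\CCat[TX,TY]$ unwinds under the Yoneda identification directly to the natural family of $(1)$, making $(1)\leftrightarrow(3)$ the short leg, whereas your route through the enriched functor structure and $\mu$ silently invokes the (routine but not free) equivalence of Kleisli and monoid presentations for enriched monads. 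None of this is a gap for the proposition as stated; it is a difference in organisation, and the paper's organisation is the one that generalises.
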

In \Cref{sec:enriched-relative} and \Cref{sec:j-strengths}, we provide analogues for these three perspectives for relative monads. We show that while the first and third data still coincide, the second is no longer generally equivalent. It is weaker and not sufficient to provide semantics for RMM.

\subsection{Connection with enriched relative monads} \label{sec:enriched-relative}

In higher-order languages, strong monads yield an operation
$(A\to TB)\to (TA \to TB)$
providing an \textit{internalisation} of the Kleisli extension into the language. Such internalisations are given semantics using \textit{enriched category theory}. In this section, we connect our notion of strong relative monads (\Cref{def:str-rel-monad}) to enriched relative monads.

\begin{definition}[\cite{arkor_formal_2024}] \label{def:enriched-relative}
    Let $(\VCat,\otimes, I)$ be a symmetric monoidal category, $\ob\ACat$ a collection of objects, $\CCat$ a $\VCat$-enriched category, and $J : \ob{\ACat} \to \ob{\CCat}$ a mapping of objects. A \emph{$\VCat$-enriched $J$-relative monad} $(T,\eta, (-)^*)$, consists of:
    \begin{itemize}
        \item An object $TA\in\ob\CCat$ for each $A\in\ob\ACat$,
        \item A morphism $\eta_A: I \to \CCat[JA, TA]$ in $\VCat$ for each $A\in\ob\ACat$,
        \item A family of morphisms $*: \CCat[JA, TB] \to \CCat[TA, TB]$ in $\VCat$
    \end{itemize}
    making the following diagrams commute.
    \begin{mathpar}
        \begin{tikzcd}
    	I & {\CCat[JA,TA]} \\
    	& {\CCat[TA, TA]}
    	\arrow["{\eta_A}", from=1-1, to=1-2]
    	\arrow["{\mathbf{id}_{TA}
        }"', from=1-1, to=2-2]
    	\arrow["{*}", from=1-2, to=2-2]
        \end{tikzcd}
        \and
        \begin{tikzcd}
    	{\CCat[JA,TB]\otimes I} & {\CCat[TA,TB]\otimes \CCat[JA,TA]} \\
    	& {\CCat[JA,TB]}
    	\arrow["{*\otimes \eta_A}", from=1-1, to=1-2]
    	\arrow["{\rho_{\CCat[JA,TB]}}"', from=1-1, to=2-2]
    	\arrow["\bullet_{JA,TA,TB}", from=1-2, to=2-2]
        \end{tikzcd}
        \and
        \begin{tikzcd}[column sep=large]
    	{\CCat[JB,TC]\otimes \CCat[JA,TB]} && {\CCat[TB,TC]\otimes \CCat[TA,TB]} \\
    	{\CCat[TB,TC]\otimes \CCat[JA,TB]} & {\CCat[JA,TC]} & {\CCat[TA,TC]}
    	\arrow["{*\otimes *}", from=1-1, to=1-3]
    	\arrow["{*\otimes \CCat[JA,TB]}"', from=1-1, to=2-1]
    	\arrow["\bullet_{TA,TB,TC}", from=1-3, to=2-3]
    	\arrow["\bullet_{JA,TB,TC}"', from=2-1, to=2-2]
    	\arrow["{*}"', from=2-2, to=2-3]
        \end{tikzcd}
    \end{mathpar}
    A \textit{morphism of $\VCat$-enriched $J$-relative monads} $\gamma : (T,\eta,(-)^*) \to (S,\nu, (-)^\dagger)$ consists of, for each $A \in \ob\ACat$, maps $\gamma_A: I \to \CCat[TA, SA]$ satisfying compatibility conditions~\cite{arkor_formal_2024}.
\end{definition}

As in the unenriched setting, when $\ACat$ is an enriched category with objects $\ob\ACat$ and $J: \ob\ACat \to \ob\CCat$ extends to an enriched functor $J: \ACat \to \CCat$, $T$ extends to an enriched functor $T: \ACat\to \CCat$, where $T_{A,B}= *\circ \bullet_{JA,JB,TB} \circ (\eta_A \otimes \CCat[JA,JB])\circ \lambda^{-1}_{\CCat[JA,JB]} \circ J_{A,B}$.

\begin{theorem} \label{thm:hat-strong-corresp}
    Let $\ob\ACat$ be a collection of objects, $\CCat$ a small symmetric monoidal category, and $J: \ob\ACat \to \ob\CCat$ a mapping on objects. Then there is an isomorphism of categories between:
    \begin{enumerate}
        \item The category of $(\widehat{\CCat},\daytimes,\widehat{I})$-enriched $J$-relative monads, via the $\yo$-fully faithfully enrichment of $\CCat$.
        \item The category of strong $J$-relative monads.
    \end{enumerate}
\end{theorem}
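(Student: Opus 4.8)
The plan is to make the isomorphism concrete by first computing the $\yo$-fully faithful enrichment of $\CCat$ inside $(\widehat{\CCat},\daytimes,\widehat I)$. Since the Yoneda embedding is strong monoidal, $\yo Z \daytimes \yo X \cong \yo(Z\otimes X)$, so by the defining adjunction of the Day internal hom and the Yoneda lemma,
\[
\CCat[X,Y] \;=\; (\yo X \multimap \yo Y), \qquad (\yo X \multimap \yo Y)(Z) \;\cong\; \widehat{\CCat}(\yo(Z\otimes X),\yo Y) \;\cong\; \CCat(Z\otimes X,Y).
\]
Thus the self-enriched hom-object $\CCat[X,Y]$ is the presheaf $Z\mapsto \CCat(Z\otimes X,Y)$. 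I would record, in these explicit terms, the enriched identity $\mathbf{id}_X:\widehat I\to\CCat[X,X]$ (which names $\lambda_X\in\CCat(I\otimes X,X)$) and the composition $\bullet_{X,Y,Z}$, which under the representable identification is the evident substitution composite built from $\otimes$ and the associator $\alpha$.

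Next I would translate the data. A map $\eta_A:\widehat I\to\CCat[JA,TA]$ is, by Yoneda, an element of $\CCat[JA,TA](I)=\CCat(I\otimes JA,TA)$, i.e.\ via $\lambda_{JA}$ exactly a morphism $JA\to TA$, matching the unit of \Cref{def:str-rel-monad}. A presheaf map $*:\CCat[JA,TB]\to\CCat[TA,TB]$ is by definition a family of functions $\CCat(Z\otimes JA,TB)\to\CCat(Z\otimes TA,TB)$ natural in $Z$, which is precisely the strong Kleisli extension $(-)^*$ together with its required naturality in the context $\Gamma = Z$. Hence the underlying data of the two structures are in bijection, with the same object assignment $A\mapsto TA$ on both sides.

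The core of the proof is to show this bijection carries the three commuting diagrams of \Cref{def:enriched-relative} to the three equations of \Cref{def:str-rel-monad}. I would unwind each diagram using the explicit description of $\bullet$ and the Yoneda-transported coherences: the unit triangle becomes $(\eta_A\circ\lambda_{JA})^*=\lambda_{TA}$, the second diagram becomes $f^*\circ(\Gamma\otimes\eta_A)=f$, and the associativity square becomes the nested-extension law $g^*\circ(\Delta\otimes f^*)\circ\alpha = (g^*\circ(\Delta\otimes f)\circ\alpha)^*$. I expect the associativity square to be the main obstacle: one must verify that the Day-convolution associator and the enriched composition $\bullet$ correspond, under the representable identifications, to $\CCat$'s associator $\alpha$ and to the two-sided substitution appearing in the strong law, so that the enriched coherence reproduces exactly the reassociation demanded in \Cref{def:str-rel-monad}; threading the unitors and associators through the Yoneda isomorphisms is where the bookkeeping is most delicate.

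Finally, I would confirm functoriality of the correspondence. An enriched monad morphism $\gamma_A:\widehat I\to\CCat[TA,SA]$ corresponds by Yoneda to a map $TA\to SA$ in $\CCat$, and its enriched compatibility conditions unwind, by the same translation, to the defining equations of a strong $J$-relative monad morphism. Since the assignment is the identity on the objects $TA$, is a genuine bijection on the remaining data and on morphisms, and strictly preserves identities and composition, the two constructions are inverse to one another, yielding an isomorphism of categories rather than a mere equivalence.
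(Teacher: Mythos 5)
Your proposal is correct and follows essentially the same route as the paper's proof: identify the $\yo$-fully faithful hom-objects as the presheaves $\CCat(-\otimes X,Y)$ via Yoneda and the Day adjunction, transport the units and Kleisli extensions across the resulting bijections, and lift the coherence diagrams and the morphism conditions through those isomorphisms. The paper states these steps more tersely; your extra attention to the associativity square and to the naturality in the context $\Gamma$ is exactly the bookkeeping the paper leaves implicit.
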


\begin{proof}
    By the Yoneda lemma, there is an isomorphism $\CCat[X,Y]\cong \CCat(- \otimes X, Y)$. The bijection on the units of the monad is thus given by:
    \[
        \widehat{\CCat}(\widehat{I}, \CCat[JA, TA]) \cong \widehat{\CCat}(\yo I_{\CCat}, \CCat(-\otimes JA, TA)) \cong \CCat(I_{\CCat}\otimes JA, TA) \cong \CCat(JA, TA)
    \]
    The bijection on the Kleisli extensions is given by:
    \[
        \widehat{\CCat}(\CCat(-\otimes JA, TB), \CCat(-\otimes TA, TB)) \cong \widehat{\CCat}(\CCat[JA,TB], \CCat[TA, TB])
    \]
    The coherence conditions lift through these isomorphisms accordingly, and morphisms additionally correspond.
\end{proof}

We recall that there is a forgetful functor $\VCat(I,-)_\# : \VCat\Cat \to \Cat$ defined by applying the functor $\VCat(I,-)$ to the $\VCat$-objects of morphisms~\cite{kelly}. In the case of $\widehat{\CCat}$-enrichment, $\widehat{\CCat}(\widehat{I},-)$ is evaluation on $I$. The mapping from (1) to (2) in Theorem \ref{thm:hat-strong-corresp} can be understood to be applying this functor to the data of a $\widehat{\CCat}$-enriched relative monad, as it sends $(T,\eta,(-)^*)\mapsto (T,\eta_I(1_I),(-)^*_I)$. In particular, if $\ACat$ is an $\widehat{\CCat}$-enriched category with objects $\ob{\ACat}$ and $\widehat{J}$ is an $\widehat{\CCat}$-enriched functor, this mapping sends an enriched $\widehat{J}$-relative monad to a strong $\widehat{\CCat}(\widehat{I},-)_{\#}\widehat{J}$-relative monad. Since $J$ and $\ACat$ are not uniquely enriched, the correspondence in \Cref{thm:hat-strong-corresp} does not extend to the data of the induced functor on $T$. However, if we fix a choice of enrichment, we recover the correspondence for the induced functors, partly extending Proposition \ref{prop:equiv-strong-monad}.

\begin{corollary} \label{cor:functorial-enriched-correspondence}
    Let $(\CCat, \otimes, I)$ be a small symmetric monoidal category, $\ACat$ a $\widehat{\CCat}$-enriched category, and $\widehat{J} : \ACat \to \CCat$ a $\widehat\CCat$-enriched functor. Then, the category of $\widehat\CCat$-enriched $\widehat{J}$-relative monads is isomorphic to the category of strong $\widehat{\CCat}(\widehat{I},-)_{\#}\widehat{J}$-relative monads.

    Moreover, this isomorphism preserves the functor structure induced by $\widehat{J}$ and $\widehat{\CCat}(\widehat{I},-)_{\#}\widehat{J}$ for the enriched and strong relative monads, respectively.
\end{corollary}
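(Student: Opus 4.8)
The first assertion is essentially a restatement of \Cref{thm:hat-strong-corresp}. Both sides of the claimed isomorphism depend only on the underlying mapping on objects $J := \ob{\widehat J} : \ob\ACat \to \ob\CCat$. On the one hand, \Cref{def:enriched-relative} defines a $\widehat\CCat$-enriched relative monad with respect to a mere object assignment, so the $\widehat\CCat$-enriched $\widehat J$-relative monads are exactly the $\widehat\CCat$-enriched $J$-relative monads of the theorem. On the other hand, the forgetful functor $\widehat{\CCat}(\widehat{I},-)_{\#}$ is the identity on objects and, applied to the $\yo$-fully faithful self-enrichment of $\CCat$, returns $\CCat$ itself, since $\widehat\CCat(\widehat I, \CCat[X,Y]) \cong \CCat(X,Y)$; hence the strong $\widehat{\CCat}(\widehat{I},-)_{\#}\widehat J$-relative monads are exactly the strong $J$-relative monads of the theorem. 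Thus \Cref{thm:hat-strong-corresp} supplies the isomorphism of categories directly, and it remains only to verify the claim about induced functors.

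For that, write $F := \widehat{\CCat}(\widehat{I},-)_{\#}$ for the underlying-category functor and $J_0 := F\widehat J$. Applying $F$ to the enriched functor $T : \ACat \to \CCat$ induced by an enriched $\widehat J$-relative monad yields an ordinary functor $FT : F\ACat \to \CCat$, and I must show it coincides with the ordinary functor $T' : F\ACat \to \CCat$ induced by the corresponding strong $J_0$-relative monad, which on a morphism $f : A \to B$ of $F\ACat$ is $T' f = (\eta \circ J_0 f)^{*_U}$. The engine is the dictionary already extracted in the proof of \Cref{thm:hat-strong-corresp}: under the natural isomorphism $\widehat\CCat(\widehat I, \CCat[X,Y]) \cong \CCat(X,Y)$, the enriched unit $\eta$ is sent to the ordinary unit, the enriched extension $*$ is sent to the underlying strong Kleisli extension $(-)^{*_U}$ (the $\Gamma = I$ instance of the strength), and the enriched identities and composition $\bullet$ are sent to ordinary identities and ordinary composition. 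One then evaluates the defining composite for $T_{A,B}$ at $\widehat I$ and reads it off term by term: the factor $J_{A,B}$ contributes $J_0 f$; the unit factor $\eta \otimes \CCat[JA,JB]$ followed by $\bullet$ contributes the ordinary composite $\eta \circ J_0 f$; and the final $*$ contributes $(-)^{*_U}$, giving $(\eta \circ J_0 f)^{*_U} = T' f$. Since both functors visibly agree on objects, this yields $FT = T'$.

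The step that is not purely formal is the identification of the enriched composition $\bullet$ and unit $\eta$ with the ordinary composite $\eta \circ (-)$ under evaluation at $\widehat I$. This is really the statement that $F = \widehat{\CCat}(\widehat{I},-)_{\#}$ is the underlying-ordinary-category $2$-functor: composition in $F\CCat$ is by construction $\bullet$ transported along the (lax monoidal) structure map of $\widehat\CCat(\widehat I,-)$, and the standard fact that the underlying category of the $\yo$-fully faithful self-enrichment of $\CCat$ is $\CCat$ itself (traced through the coend defining $\daytimes$ in \Cref{def:day}, using $\widehat I = \yo I$) guarantees this transported composition is the original composition of $\CCat$. Granting this, the corollary reduces to the observation that $F$ commutes with the ``induced functor of a relative monad'' construction, because $F$ preserves the very operations out of which that functor is built and sends them, by \Cref{thm:hat-strong-corresp}, to their strong counterparts. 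I expect this monoidality/underlying-category check to be the only mildly delicate point; once it is in place, everything else is routine bookkeeping against the formula for $T_{A,B}$ and the bijections of \Cref{thm:hat-strong-corresp}.
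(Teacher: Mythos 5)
Your proposal is correct and follows essentially the same route as the paper: the paper derives this corollary directly from \Cref{thm:hat-strong-corresp} together with the observation (made in the surrounding text) that the direction from enriched to strong is exactly the forgetful functor $\widehat{\CCat}(\widehat{I},-)_{\#}$, i.e.\ evaluation at $I$, sending $(T,\eta,(-)^*)\mapsto(T,\eta_I(1_I),(-)^*_I)$, so that once the enrichment of $\ACat$ and $\widehat J$ is fixed the induced functor structures are carried into one another. Your term-by-term evaluation of the composite defining $T_{A,B}$ at $\widehat I$, and your identification of the underlying-category check for the $\yo$-fully faithful self-enrichment as the only non-formal step, is precisely the verification the paper leaves implicit.
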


\begin{example}
    If $J$ is a fully faithful functor, then the $J$-fully faithful enrichment of $\ACat$ and $J$ gives a $\widehat{J}$ satisfying $J=\widehat{\CCat}(\widehat{I},-)_{\#}\widehat{J}$ and so \Cref{cor:functorial-enriched-correspondence} applies.
\end{example}

There is a left adjoint $F_{\widehat{\CCat}} \dashv \widehat{\CCat}(\widehat{I},-)_{\#}$, given on objects by sending a category $\ACat$ to a $\widehat{\CCat}$-enriched category with the same objects and with $(F_{\widehat{\CCat}}\ACat)[A,B] = \ACat(A,B)\times \CCat(-,I)$. Hence, functors $J : \ACat \to \CCat$ correspond to $\widehat{\CCat}$-enriched functors $\widehat{J} : F_{\widehat{\CCat}}\ACat \to \CCat$, and so we can obtain a similar result in the other direction.
 
\begin{corollary} \label{cor:functorial-enriched-correspondence-2}
    Let $\CCat$ be a small symmetric monoidal category, and let $J : \ACat \to \CCat$ be any functor and $\widehat{J} : F_{\widehat{\CCat}}\ACat \to \CCat$ be the transpose under the adjunction $F_{\widehat{\CCat}} \dashv \widehat{\CCat}(\widehat{I},-)_{\#}$. Then, the category of $\widehat\CCat$-enriched $\widehat{J}$-relative monads is isomorphic to the category of strong $J$-relative monads.

    Moreover, when the relative monads are given the functor structure induced by $\widehat{J}$ and $J$, respectively, this isomorphism coincides with the transpose under the adjunction $F_{\widehat{\CCat}} \dashv \widehat{\CCat}(\widehat{I},-)_{\#}$.
\end{corollary}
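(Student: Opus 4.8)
The plan is to deduce this corollary from \Cref{cor:functorial-enriched-correspondence} by instantiating that result at the $\widehat{\CCat}$-enriched category $F_{\widehat{\CCat}}\ACat$ and the enriched functor $\widehat{J}$, and then transporting the outcome across the adjunction $F_{\widehat{\CCat}} \dashv \widehat{\CCat}(\widehat{I},-)_{\#}$. Throughout, write $R := \widehat{\CCat}(\widehat{I},-)_{\#}\widehat{J}$ for the strong root that \Cref{cor:functorial-enriched-correspondence} produces from $\widehat{J}$.

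First I would record the relevant bookkeeping about the adjunction. Both $F_{\widehat{\CCat}}$ and $\widehat{\CCat}(\widehat{I},-)_{\#}$ are identity on objects, under the identification of $\ob\CCat$ with the objects of its self-enrichment used in the proof of \Cref{thm:hat-strong-corresp}, and the transpose $\widehat{J}$ agrees with $J$ on objects; hence $R$ and $J$ carry the same underlying mapping $\ob\ACat \to \ob\CCat$. Moreover, writing $\iota_\ACat : \ACat \to \widehat{\CCat}(\widehat{I},-)_{\#}F_{\widehat{\CCat}}\ACat$ for the unit of the adjunction, the triangle identity gives $R \circ \iota_\ACat = J$ as ordinary functors $\ACat \to \CCat$.

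For the isomorphism of categories I would invoke \Cref{cor:functorial-enriched-correspondence} to obtain an isomorphism between the category of $\widehat{\CCat}$-enriched $\widehat{J}$-relative monads and the category of strong $R$-relative monads. Because \Cref{def:str-rel-monad} depends on the root only through its mapping on objects, and $R$ and $J$ share that mapping, $\mathbf{SMon}(R)$ and $\mathbf{SMon}(J)$ are literally the same category; this yields the claimed isomorphism with strong $J$-relative monads, with morphisms corresponding as in the cited corollary.

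It remains to verify the ``moreover''. By \Cref{cor:functorial-enriched-correspondence}, the isomorphism already identifies the enriched-functor structure induced by $\widehat{J}$ (with domain $F_{\widehat{\CCat}}\ACat$) with the ordinary-functor structure induced by $R$ (with domain $\widehat{\CCat}(\widehat{I},-)_{\#}F_{\widehat{\CCat}}\ACat$), the latter being $\widehat{\CCat}(\widehat{I},-)_{\#}$ applied to the enriched functor $T$. Precomposing that ordinary functor with $\iota_\ACat$ is, by definition, the adjoint transpose, so it suffices to check that this transpose equals the $J$-induced functor structure. Now the $R$-induced functor sends a morphism $g : A \to B$ of $\widehat{\CCat}(\widehat{I},-)_{\#}F_{\widehat{\CCat}}\ACat$ to $(\eta \circ Rg)^{*_U}$, while the $J$-induced functor sends $f : A \to B$ in $\ACat$ to $(\eta \circ Jf)^{*_U}$; since $R \circ \iota_\ACat = J$, restricting the former along $\iota_\ACat$ returns the latter. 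The step demanding the most care is precisely this alignment: that ``passing to the induced functor structure'' is natural in the root and hence commutes with the adjoint transpose. Granting it, the corollary follows formally from \Cref{cor:functorial-enriched-correspondence} and the adjunction identities, with no genuinely new diagram chase beyond those already carried out for \Cref{thm:hat-strong-corresp} and \Cref{cor:functorial-enriched-correspondence}.
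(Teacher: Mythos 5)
Your proposal is correct and follows essentially the argument the paper intends: the paper states this corollary without a separate proof, relying exactly on the observation that $J$ transposes to $\widehat{J}$ under $F_{\widehat{\CCat}} \dashv \widehat{\CCat}(\widehat{I},-)_{\#}$, that \Cref{cor:functorial-enriched-correspondence} then applies to $\widehat{J}$, and that $\mathbf{SMon}(-)$ depends on the root only through its object mapping. Your additional check that the $R$-induced functor structure restricts along the unit $\iota_\ACat$ to the $J$-induced one (via $R\circ\iota_\ACat=J$ and the formula $Tf=(\eta\circ Jf)^{*_U}$) is precisely the content of the ``moreover'' clause and is carried out correctly.
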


\begin{example}
  Since $J=\widehat{\CCat}(\widehat{I},-)_{\#}\widehat{J} \circ \iota_\ACat$, where $\iota$ is the unit of the adjunction $F_{\widehat{\CCat}} \dashv \widehat{\CCat}(\widehat{I},-)_{\#}$, we cannot directly obtain the functor structure of any strong $J$-relative monad by applying the forgetful functor $\widehat{\CCat}(\widehat{I},-)_{\#}$. However, if $I$ is initial or terminal, or more generally $\CCat(I, I)=1$, the unit of the adjunction $F_{\widehat{\CCat}} \dashv \widehat{\CCat}(I,-)_{\#}$ is an isomorphism. Then, the correspondence in \Cref{thm:hat-strong-corresp} extends to functors via \Cref{cor:functorial-enriched-correspondence-2}.
\end{example}

When a category has enough structure, rather than internalising the Kleisli extension in its category of presheaves, we can internalise the structure directly.

\begin{theorem} \label{thm:closed-enriched-strong-corresp}
    Let $\CCat$ be a symmetric monoidal closed category. Then $\CCat$ is self-enriched, and there is an isomorphism of categories between the category of $\CCat$-enriched $J$-relative monads and the category of strong $J$-relative monads.
\end{theorem}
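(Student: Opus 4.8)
The plan is to run the argument of \Cref{thm:hat-strong-corresp} essentially verbatim, but with the Day-convolution presheaf enrichment replaced by the self-enrichment coming from the closed structure. Concretely, I take $\CCat[X,Y]$ to be the internal hom $X \multimap Y$, with $\mathbf{id}_X : I \to \CCat[X,X]$ and $\bullet_{X,Y,Z}$ the adjoint transposes of $\lambda_X$ and of the iterated evaluation, as usual. The single fact driving everything is the tensor--hom adjunction isomorphism $\CCat(\Gamma \otimes X, Y) \cong \CCat(\Gamma, \CCat[X,Y])$, natural in $\Gamma, X, Y$, which plays exactly the role that the Yoneda isomorphism $\CCat[X,Y] \cong \CCat(-\otimes X, Y)$ played in the proof of \Cref{thm:hat-strong-corresp}. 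A pleasant consequence of working with the genuine internal hom rather than presheaves is that no smallness hypothesis on $\CCat$ is needed.

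On units, a $\CCat$-enriched unit $\eta_A : I \to \CCat[JA, TA]$ transposes to a map $I \otimes JA \to TA$ and hence, via $\lambda_{JA}$, to a map $JA \to TA$; this is the unit of the strong relative monad, and the assignment is a bijection by the adjunction. On extensions, I use representability in $\CCat$: an enriched extension $* : \CCat[JA, TB] \to \CCat[TA, TB]$ induces, by postcomposition, a transformation $\CCat(\Gamma, \CCat[JA,TB]) \to \CCat(\Gamma, \CCat[TA,TB])$ natural in $\Gamma$, which under the adjunction is precisely a family $(-)^* : \CCat(\Gamma \otimes JA, TB) \to \CCat(\Gamma \otimes TA, TB)$ natural in $\Gamma$ as in \Cref{def:str-rel-monad}. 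Explicitly, $f^*$ is the adjoint transpose of $* \circ \overline{f}$, where $\overline{f} : \Gamma \to \CCat[JA,TB]$ is the adjoint transpose of $f$. Conversely, evaluating $(-)^*$ at $\Gamma = \CCat[JA,TB]$ on the counit $\CCat[JA,TB]\otimes JA \to TB$ and transposing recovers $*$; the two passages are mutually inverse by the Yoneda lemma.

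It then remains to check that the three commuting diagrams of \Cref{def:enriched-relative} correspond, under these bijections, exactly to the three equations of \Cref{def:str-rel-monad}: the left-unit triangle $* \circ \eta_A = \mathbf{id}_{TA}$ matches $(\eta_A \circ \lambda_{JA})^* = \lambda_{TA}$, the right-unit square matches $f^* \circ (\Gamma \otimes \eta_A) = f$, and the associativity pentagon matches the Kleisli-composition law. I expect this to be the only real work: it requires unfolding the enriched composition $\bullet$, the enriched identity $\mathbf{id}$, and the unitor $\rho$ into their adjoint transposes and chasing them through the tensor--hom isomorphism, using its naturality in all variables. Finally, the same transpose dictionary sends morphisms of $\CCat$-enriched $J$-relative monads to morphisms of strong $J$-relative monads and back, and respects identities and composition, so the correspondence is an isomorphism of categories.

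As a consistency check in the small case, one can instead reduce to \Cref{thm:hat-strong-corresp}: the Yoneda embedding $\yo : \CCat \to \widehat{\CCat}$ is fully faithful and strong monoidal for Day convolution, and when $\CCat$ is closed one has $\CCat(-\otimes X, Y) \cong \yo(X \multimap Y)$, so the $\yo$-fully faithful enrichment of \Cref{def:ffe} is the image under $\yo$ of the self-enrichment. Transporting enriched relative monads along the fully faithful strong monoidal $\yo$ then identifies self-enriched and $\widehat{\CCat}$-enriched $J$-relative monads, and composing with \Cref{thm:hat-strong-corresp} gives the theorem; but the direct argument above is preferable, since it dispenses with smallness.
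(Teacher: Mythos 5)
Your proposal is correct, but your primary argument takes a different route from the paper's. The paper proves this theorem by \emph{reduction}: it enlarges the set-theoretic universe so that $\CCat$ may be treated as small, observes that the Yoneda embedding $\yo : \CCat \to \widehat{\CCat}$ is fully faithful and strong closed symmetric monoidal (so $\yo X \multimap \yo Y \cong \yo(X \multimap Y)$), unravels a $\widehat{\CCat}$-enriched $J$-relative monad into a $\CCat$-enriched one via the Yoneda lemma, and then invokes \Cref{thm:hat-strong-corresp}. This is exactly the ``consistency check'' you relegate to your final paragraph, except that the paper handles the size issue by a change of universe rather than assuming smallness. Your main argument instead re-runs the proof of \Cref{thm:hat-strong-corresp} directly inside $\CCat$, with the tensor--hom adjunction $\CCat(\Gamma \otimes X, Y) \cong \CCat(\Gamma, \CCat[X,Y])$ standing in for the Yoneda isomorphism and representability recovering the enriched extension $*$ from the natural family $(-)^*$; this is sound, and its genuine payoff is that it needs no smallness hypothesis or universe enlargement at all. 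Both proofs defer the verification that the three coherence diagrams of \Cref{def:enriched-relative} match the three equations of \Cref{def:str-rel-monad} to a routine transpose-chase, so you are not cutting any corner the paper does not also cut. In short: the paper buys brevity by leaning on \Cref{thm:hat-strong-corresp} at the cost of a universe argument; you buy self-containment and size-independence at the cost of redoing the diagram chases internally.
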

\begin{proof}
    We can choose a suitable universe of sets so that $\CCat$ is small. The Yoneda embedding $\yo : \CCat \to \widehat\CCat$ is a fully faithful, strong closed symmetric monoidal functor. In particular, $\yo X \multimap \yo Y \cong \yo (X\multimap Y)$, and the data of a $\widehat{\CCat}$-enriched $J$-relative monad unravels via the Yoneda lemma to the data of a $\CCat$-enriched $J$-relative monad. The result then follows from Theorem~\ref{thm:hat-strong-corresp}.
\end{proof}
\subsection{Refined notions of strengths for relative monads} \label{sec:j-strengths}
The notion of strength introduced in \Cref{def:str-rel-monad} validates the following rule of RMM.
\newcounter{infrule}
\newcommand{\ruletag}[1]{%
  \refstepcounter{infrule}%
  \label{#1}%
  \text{(T-bind)}%
}
\begin{mathpar}
    \inferrule{\tj {\Gamma} u {TA} \and \tj {\Gamma, x:JA} t {TB}}{\tj {\Gamma} {\doin{x}{u}{t}} {TB}}
    \ruletag{rule:sequencing}
\end{mathpar}

This allows us to sequence computations in arbitrary contexts $\Gamma$. However, for some languages using the relative monadic metalanguage framework, it is sufficient to validate \hyperref[rule:sequencing]{T-bind} for restricted contexts. For example, the nominal language for local state discussed in \Cref{sec:finitary} only requires \hyperref[rule:sequencing]{T-bind} for contexts of shape $\Gamma = x_1:JA_1,\dots,x_n:JA_n$. Furthermore, in \Cref{sec:armm}, we show that the arrow calculus~\cite{arrow-calc} is also a relative monadic language that restricts the shape of the context for sequencing computations. This motivates the following definition, which validates \hyperref[rule:sequencing]{T-bind}  for contexts of shape $\Gamma=x_1: WM_1, \dots, x_n: WM_n$.

\begin{definition}
    Let $\ob{\ACat}$ be a collection of objects, $\CCat$ and $\MCat$ be symmetric monoidal categories, and $J : \ob{\ACat}\to \ob{\CCat}$ be a mapping of objects. If $(W,\kappa,\iota) : \MCat \to \CCat$ is a strong symmetric monoidal functor~\cite{mac_lane_categories_1998}, a \emph{$W$-strong $J$-relative monad} consists of:
 \begin{itemize}
        \item An object $TA\in \ob{\CCat}$ for each $A \in \ob{\ACat}$
        \item A morphism $\eta_A: JA \to TA$ for each $A \in \ob{\ACat}$
        \item  A collection of maps $(-)^*: \CCat(W\Gamma \otimes JA,TB)\to \CCat(W\Gamma \otimes TA,TB)$ natural in $\Gamma \in \ob\MCat$
    \end{itemize}
    all such that for $f: W\Gamma \otimes JA \to TB$ and $g: W\Delta \otimes JB \to TC$, the following equations hold:
    \begin{mathpar}
        (\eta_A \circ \lambda_{JA} \circ (\iota^{-1}\otimes JA))^* = \lambda_{TA} \circ(\iota^{-1}\otimes TA) \and
        f^*\circ (W\Gamma \otimes \eta_A) = f \and
        g^* \circ (W\Delta \otimes f^*) \circ \alpha_{W\Delta,W\Gamma,TA} \circ (\kappa^{-1}_{\Delta,\Gamma}\otimes TA) = (g^* \circ (W\Delta \otimes f) \circ \alpha_{W\Delta,W\Gamma,JA} \circ (\kappa^{-1}_{\Delta,\Gamma}\otimes JA))^*
    \end{mathpar}
    A \emph{morphism of $W$-strong $J$-relative monads}  $ \gamma : (T,\eta,(-)^*) \to (S,\nu, (-)^\dagger)$ consists of, for each $A \in \ob\ACat$, maps $\gamma_A: TA \to SA$ satisfying, for any $f : W\Gamma \otimes JA \to TB$:
    \begin{mathpar}
        \nu_A = \gamma_A \circ \eta_A \and
        \gamma_B \circ f^* = (\gamma_B\circ f)^\dagger \circ (W\Gamma \otimes \gamma_A)
    \end{mathpar}
    We write $W\mathbf{SMon}(J)$ for the category of $W$-strong $J$-relative monads.
\end{definition}

\begin{example}
    When $\MCat=\CCat$ and $W=\id_\CCat$, we recover the definition of strong relative monad in \Cref{def:str-rel-monad}.
\end{example}
As with \Cref{def:str-rel-monad}, a $W$-strong $J$-relative monad has an underlying $J$-relative monad $(T, \eta, (-)^{*_U})$, given by $f^{*_U} = (f \circ \lambda_{JA}\circ (\iota^{-1}\otimes JA))^* \circ  (\iota \otimes TA) \circ\lambda^{-1}_{TA}$ for $f \in \CCat(JA, TB)$, which defines the object mapping for a forgetful functor $W\mathbf{SMon}(J) \to \Mon(J)$. Furthermore, there is a forgetful functor $U_W: \mathbf{SMon}(J) \to W\mathbf{SMon}(J)$ that provides a factorisation for the forgetful functor $\mathbf{SMon}(J) \to \mathbf{Mon}(J)$. Under additional conditions, $U_W$ is an isomorphism and so $W$-strong $J$-relative monads coincide with strong $J$-relative monads.
\begin{proposition} \label{prop:J-strength-is-strong}
    Assume that $\CCat$ is a symmetric monoidal category such that $X\otimes -$ preserves colimits for all $X \in \ob{\CCat}$. If $W: \MCat \to \CCat$ is a dense strong symmetric monoidal functor, then $U_W : \mathbf{SMon}(J) \to W\mathbf{SMon}(J)$ is an isomorphism of categories.
    \end{proposition}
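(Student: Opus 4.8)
The plan is to exhibit an explicit inverse to $U_W$ built from the density of $W$. First observe that $U_W$ leaves the underlying data $(T,\eta)$ untouched and merely restricts the extension operator: a strong $J$-relative monad has $(-)^*$ defined for every context $X\in\ob\CCat$, and $U_W$ forgets all of these except those of the form $X=W\Gamma$ with $\Gamma\in\ob\MCat$. Likewise, a strong morphism is a family $\gamma_A:TA\to SA$ whose defining condition is quantified over all $f:X\otimes JA\to TB$, and $U_W$ retains only the instances $X=W\Gamma$. Hence it suffices to prove (i) every $W$-strength extends uniquely to a strength on the same $(T,\eta)$, and (ii) the $W$-strong morphism condition already implies the full strong morphism condition. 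Both follow from a single reconstruction principle.

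For the reconstruction, density of $W$ means that each $X\in\ob\CCat$ is canonically the colimit of the diagram $(W\downarrow X)\to\MCat\xrightarrow{W}\CCat$, with coprojections $q_{(\Gamma,x)}:W\Gamma\to X$ indexed by $x:W\Gamma\to X$. Since $-\otimes JA$ and $-\otimes TA$ preserve colimits (this is where we use that $Y\otimes-$, equivalently $-\otimes Y$ by symmetry, is cocontinuous), both $X\otimes JA$ and $X\otimes TA$ are the colimits of $(\Gamma,x)\mapsto W\Gamma\otimes JA$ and $(\Gamma,x)\mapsto W\Gamma\otimes TA$, with coprojections $q_{(\Gamma,x)}\otimes JA$ and $q_{(\Gamma,x)}\otimes TA$. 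Given $f:X\otimes JA\to TB$, set $f_{(\Gamma,x)} = f\circ(q_{(\Gamma,x)}\otimes JA)$ and apply the $W$-strength to obtain $(f_{(\Gamma,x)})^*:W\Gamma\otimes TA\to TB$. Naturality of the $W$-strength in $\Gamma\in\MCat$ is exactly the statement that this family is a cocone over $(\Gamma,x)\mapsto W\Gamma\otimes TA$, so it factors through a unique $f^*:X\otimes TA\to TB$. This both defines the candidate inverse $V_W$ and shows the extension is forced: restricting $f^*$ along $q_{(\Gamma,\id)}$ for $\Gamma\in\ob\MCat$ recovers the original $W$-strength, giving $U_W\circ V_W=\id$, while uniqueness of the factorisation gives $V_W\circ U_W=\id$.

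Next, I would verify the axioms. The right unit law $f^*\circ(X\otimes\eta_A)=f$ is checked after composing with each coprojection: using that $X\otimes\eta_A$ is the colimit of $W\Gamma\otimes\eta_A$, each composite reduces to the $W$-strong right unit law $(f_{(\Gamma,x)})^*\circ(W\Gamma\otimes\eta_A)=f_{(\Gamma,x)}$, and joint epicness of the coprojections concludes. The left unit law $(\eta_A\circ\lambda_{JA})^*=\lambda_{TA}$ lives over $X=I$, which the unit comparison $\iota:I\cong WI_\MCat$ identifies with a $W$-object, so it is literally the $W$-strong left unit law. For the associativity law one presents both $\Delta$ and $\Gamma$ as density colimits (with components $W\Delta_i$ and $W\Gamma_j$) and uses cocontinuity of the tensor in both slots to present $\Delta\otimes\Gamma\otimes(-)$ as a double colimit; the strong monoidal comparison $\kappa_{\Delta_i,\Gamma_j}:W\Delta_i\otimes W\Gamma_j\cong W(\Delta_i\otimes\Gamma_j)$ places each summand in the image of $W$, so that composing both sides of the equation with the corresponding coprojection reduces them to the $W$-strong associativity law. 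Naturality of $f^*$ in arbitrary $X\in\ob\CCat$ follows from the same universal property.

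Finally, for morphisms and the obstacle: the $W$-strong morphism condition $\gamma_B\circ f^*=(\gamma_B\circ f)^\dagger\circ(W\Gamma\otimes\gamma_A)$ is, like the strength itself, cocontinuous in the context, so composing the full condition with each coprojection $q_{(\Gamma,x)}\otimes(-)$ reduces it to the $W$-instance; this gives (ii), and together with the bijection on objects shows $U_W$ is an isomorphism. I expect the main obstacle to be the associativity law: one must track the interaction of the associators $\alpha$, the monoidal comparisons $\kappa$, and the coprojections carefully enough that, summand by summand, the reassociation $\alpha_{\Delta,\Gamma,-}$ on the reconstructed side matches $\alpha_{\Delta_i,\Gamma_j,-}$ on the $W$-strong side after transporting along $\kappa$. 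The coherence of $(W,\kappa,\iota)$ as a symmetric monoidal functor is what makes this bookkeeping consistent, and the hypothesis that $X\otimes-$ preserves colimits (used in both tensor slots) is what lets the verification be reduced to the generating summands in the first place.
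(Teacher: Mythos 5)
Your proof is correct and takes essentially the same route as the paper's: the paper defines the inverse to $U_W$ by the composite $\CCat(X\otimes JA,TB)\cong\lim_{M\in(W\downarrow X)}\CCat(WM\otimes JA,TB)\to\lim_{M\in(W\downarrow X)}\CCat(WM\otimes TA,TB)\cong\CCat(X\otimes TA,TB)$, which is exactly your cocone factorisation through the density colimit of $X$ combined with cocontinuity of the tensor. The paper leaves the axiom checks and the mutual-inverse property to the reader (``one can check''), so your elaboration of the unit, associativity, and morphism conditions is a more detailed rendering of the same argument rather than a different one.
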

   \begin{proof} 
   Given a $W$-strong $J$-relative monad $(T, \eta, (-)^*)$, we define a strong $J$-relative monad ($T$,$\eta$,$(-)^\dagger)$ by
    \begin{equation*}
        (-)^\dagger : \CCat(X\otimes JA, TB)
        \cong \lim_{M\in(W\downarrow X)}\CCat(WM \otimes JA, TB)
        \to \lim_{M\in(W\downarrow X)}\CCat(WM\otimes TA, TB)
        \cong \CCat(X \otimes TA, TB),
    \end{equation*}
    where $(W\downarrow X)$ is a comma category~\cite{mac_lane_categories_1998}. One can check that this provides an inverse to the restriction map defined by $U_W$, and that the strong relative monad laws follow.
\end{proof}
\begin{remark}
    As noted in \Cref{sec:monads-relative-monads}, relative monads are often studied for \emph{well-behaved} roots $J$. In this case, for any $J$-relative monad $T$, the left Kan extension $\Lan_J T$~\cite{mac_lane_categories_1998} exists and receives a monad structure which makes it initial in the comma category $(T\downarrow \res_J)$~\cite{altenkirch_monads_2015}. Under the additional conditions of \Cref{prop:J-strength-is-strong}, with $W=J$, this result extends to \emph{strong} relative monads. In particular, if $T$ is a $J$-strong $J$-relative monad, then $\Lan_J T$ has a strong monad structure which makes it initial in the comma category $(T\downarrow \sres_J)$.
\end{remark}
If $\ACat$ is additionally a symmetric monoidal category and $J$ is a strong symmetric monoidal functor, we can recover the full correspondence of \Cref{prop:equiv-strong-monad} for $J$-strong $J$-relative monads. In this case, we can equivalently express the strength via maps $\theta_{A,B} : JA\otimes TB\to T(A\otimes B)$. This coincides with previous definitions of strong relative monad~\cite{tarmo-grm}, and is also closer to the historical presentation~\cite{kock_strong_1972}.

\begin{definition}\label{def:strength-maps}
    Let $\ACat$ and $\CCat$ be symmetric monoidal categories and $(J,\monmult,\monunit) :  
    \ACat \to \CCat$ a strong symmetric monoidal functor. A \emph{strength map} for a $J$-relative monad $(T, \eta, (-)^*)$ consists of morphisms
    \[
        \theta_{A,B} : JA\otimes TB \to T(A\otimes B),
    \]
    natural in $A,B\in \ob\ACat$, that make the following diagrams commute~\cite{tarmo-grm}.
    \begin{mathpar}
        \begin{tikzcd}[column sep=small]
            {I\otimes TB} & TB \\
            {JI\otimes TA} & {T(I\otimes B)}
            \arrow["{{\lambda_{TB}}}", from=1-1, to=1-2]
            \arrow["{{\monunit\otimes TB}}"', from=1-1, to=2-1]
            \arrow["{{\theta_{I,B}}}"', from=2-1, to=2-2]
            \arrow["{{T\lambda_B}}"', from=2-2, to=1-2]
        \end{tikzcd}
        \and
        \begin{tikzcd}[column sep=scriptsize]
            {(JA\otimes JB)\otimes TC} & {J(A\otimes B)\otimes TC} & {T((A\otimes B)\otimes C)} \\
            {JA\otimes (JB\otimes TC)} & {JA\otimes T(B\otimes C)} & {T(A\otimes (B\otimes C))}
            \arrow["{{\monmult_{A,B}\otimes TC}}", from=1-1, to=1-2]
            \arrow["{{\alpha_{JA,JB,TC}}}"', from=1-1, to=2-1]
            \arrow["{{\theta_{A\otimes B,C}}}", from=1-2, to=1-3]
            \arrow["{{T\alpha_{A,B,C}}}", from=1-3, to=2-3]
            \arrow["{{JA\otimes \theta_{B,C}}}"', from=2-1, to=2-2]
            \arrow["{{\theta_{A,B\otimes C}}}"', from=2-2, to=2-3]
        \end{tikzcd}
        \and
        \begin{tikzcd}
            {JA\otimes JB} & {J(A\otimes B)} \\
            {Ja\otimes Tb} & {T(a\otimes B)}
            \arrow["{\monmult_{A,B}}", from=1-1, to=1-2]
            \arrow["{JA\otimes \eta_B}"', from=1-1, to=2-1]
            \arrow["{\eta_{A\otimes B}}", from=1-2, to=2-2]
            \arrow["{\theta_{A,B}}"', from=2-1, to=2-2]
        \end{tikzcd}
        \and
        \begin{tikzcd}
            {JA\otimes TB} & {T(A\otimes B)} \\
            {JA\otimes TC} & {T(A\otimes C)}
            \arrow["{\theta_{A,B}}", from=1-1, to=1-2]
            \arrow["{JA\otimes f^*}"', from=1-1, to=2-1]
            \arrow["{(\theta_{A,C}\circ (JA\otimes f)\circ \monmult_{A,B}^{-1})^*}", from=1-2, to=2-2]
            \arrow["{\theta_{A,C}}"', from=2-1, to=2-2]
        \end{tikzcd}
    \end{mathpar}
    A morphism $\gamma : (T,\eta,(-)^*) \to (S,\nu, (-)^\dagger)$ of $J$-relative monads is said to \emph{preserve strength} if
    \[
        \gamma_{A\otimes B} \circ \theta_{A,B} = \theta'_{A,B} \circ (JA \otimes \gamma_B)
    \]
\end{definition}

\begin{proposition} \label{prop:J-strength}
    Let $\ACat$ and $\CCat$ be symmetric monoidal categories and $(J,\monmult,\monunit) :  
    \ACat \to \CCat$ a strong symmetric monoidal functor. The following categories are isomorphic:
   \begin{enumerate} 
    \item The category of $J$-strong $J$-relative monads
    \item The category of $J$-relative monads with strength maps and morphisms preserving strength
    \item The category of $(\widehat{\ACat},\widehat{\otimes},\widehat{I})$-enriched $\widehat{J}$-relative monads, where 
    \begin{itemize}
        \item $\ACat$ is given the $\yo$-fully faithful $\widehat{\ACat}$-enrichment
        \item $\CCat$ is enriched in $\widehat{\ACat}$ by $\CCat[X,Y]=\CCat(J{-} \otimes X,Y)$
        \item $\widehat{J}$ is the enriched functor with components 
        \begin{displaymath}
            \ACat[A,B] \cong \ACat({-} \otimes A,B)
            \to \CCat(J({-} \otimes A),JB)
            \cong \CCat(J{-} \otimes JA,JB)
            \cong \CCat[JA,JB].
        \end{displaymath}
    \end{itemize}
    \end{enumerate}
\end{proposition}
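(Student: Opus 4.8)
The plan is to fix the $J$-strong $J$-relative monad of $(1)$ as a pivot and establish the two isomorphisms $(1)\cong(3)$ and $(1)\cong(2)$ separately, then assemble the constructions into mutually inverse functors and check that they respect morphisms. This parallels \Cref{prop:equiv-strong-monad}, with $\widehat{\CCat}$ replaced by $\widehat{\ACat}$ and full strength weakened to $J$-strength. The two new inputs are that $J$ is strong symmetric monoidal, so its structure isomorphism $\monunit : I \to JI$ reconciles the units, and that the $J$-strong extension maps are indexed exactly by contexts $J\Gamma$ with $\Gamma\in\ACat$ --- precisely the data internalised by the enrichment $\CCat[X,Y]=\CCat(J{-}\otimes X,Y)$ over $\widehat\ACat$.

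For $(1)\cong(3)$ I would rerun the Yoneda argument of \Cref{thm:hat-strong-corresp}, now in $\widehat\ACat$. Since $\widehat I=\yo I$, the Yoneda lemma gives on units
\[
  \widehat{\ACat}(\widehat I,\CCat[JA,TA]) \cong \CCat[JA,TA](I) = \CCat(JI\otimes JA,TA)\cong\CCat(JA,TA),
\]
the final step using $\monunit$ and $\lambda$; this identifies the enriched unit with $\eta_A$. A map $*:\CCat[JA,TB]\to\CCat[TA,TB]$ in $\widehat\ACat$ is by definition a family $\CCat(J\Gamma\otimes JA,TB)\to\CCat(J\Gamma\otimes TA,TB)$ natural in $\Gamma$, which is exactly the $J$-strong extension. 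I would then verify that the three diagrams of \Cref{def:enriched-relative}, pushed through these bijections, become the three equations of a $J$-strong $J$-relative monad, and that the induced enriched functor structure on $T$ corresponds. This is formally the same unravelling as \Cref{thm:hat-strong-corresp}, relocated to $\widehat\ACat$.

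For $(1)\cong(2)$ I would exhibit the explicit translation. From a strength map $\theta$ define, for $f:J\Gamma\otimes JA\to TB$,
\[
  f^* \;=\; (f\circ\monmult_{\Gamma,A}^{-1})^{*_U}\circ\theta_{\Gamma,A},
\]
with $(-)^{*_U}$ the extension of the underlying $J$-relative monad; conversely recover $\theta_{A,B}=(\eta_{A\otimes B}\circ\monmult_{A,B})^*$. One round trip is immediate, since $(\eta_{A\otimes B}\circ\monmult_{A,B}\circ\monmult_{A,B}^{-1})^{*_U}=(\eta_{A\otimes B})^{*_U}=\id$, and I would check the other using naturality of $\theta$ and the unit laws. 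Under this dictionary the unit diagram of \Cref{def:strength-maps} should match the $\lambda$-unit law, the $\eta$-compatibility diagram the law $f^*\circ(J\Gamma\otimes\eta_A)=f$, and the associativity and Kleisli-compatibility diagrams together the composition law; morphisms preserving strength correspond to morphisms of $J$-strong relative monads.

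The main obstacle is the coherence bookkeeping in $(1)\cong(2)$: threading the structure isomorphisms $\monmult,\monunit$ of $J$ and the monoidal coherences $\lambda,\alpha$ through the four strength-map diagrams and confirming they reassemble into exactly the three $J$-strong equations, with no residual coherence cells. By contrast $(1)\cong(3)$ is largely mechanical once \Cref{thm:hat-strong-corresp} is available, the only delicate preliminaries being that $\CCat[X,Y]=\CCat(J{-}\otimes X,Y)$ is genuinely a $\widehat\ACat$-enriched category under Day convolution and that $\widehat J$ is enriched --- both presupposed by the statement.
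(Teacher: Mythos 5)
Your proposal matches the paper's proof essentially exactly: the same pivot on (1), the same explicit dictionary $\theta_{A,B}=(\eta_{A\otimes B}\circ\monmult_{A,B})^*$ and $f^{\dagger}=(f\circ\monmult^{-1})^{*}\circ\theta$ for $(1)\cong(2)$, and the same Yoneda unravelling of $\widehat{\ACat}(\widehat I,\CCat[JA,TA])$ and of the extension maps for $(1)\cong(3)$. The paper is equally terse about the remaining coherence checks, so there is nothing substantive to add or correct.
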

\begin{proof}
For the equivalence of $(1)$ and $(2)$, if $(T,\eta,(-)^*)$ is a $J$-strong $J$-relative monad, then
\[
    \theta_{A,B} = (\eta_{A\otimes B} \circ \monmult_{A, B})^*
\]
define strength maps for the underlying relative monad of $(T,\eta,(-)^*)$. (See also the similar result in ~\cite{andrew_slattery_strong_nodate}.)
Conversely, given strength maps, we define
\[
    f^\dagger= (f \circ \monmult^{-1}_{A,B})^* \circ \theta_{A,B}
\]
for any $f: JA\otimes JB \to TC$. One can check that this defines a $J$-strong $J$-relative monad, and that these mappings are mutual inverses.
For the equivalence of $(1)$ and $(3)$, the data provides an enrichment of $\CCat$ and $J$, using the fact that $J$ is strong symmetric monoidal.
The correspondence on units is given by 
\begin{displaymath}
    \widehat{\ACat}(\widehat{I},\CCat[JA,TA]) \cong \widehat{\ACat}(\widehat{I},\CCat(J{-}\otimes JA,TA))
    \cong \CCat(JI\otimes JA,TA)
    \cong \CCat(JA,TA),
\end{displaymath}
and on the extensions by
\begin{displaymath}
    \widehat{\ACat}(\CCat[JA,TB],\CCat[TA,TA]) = \widehat{\ACat}(\CCat(J{-}\otimes JA,TB),\CCat(J{-} \otimes TA,TB)).
\end{displaymath}
The coherence conditions lift through these isomorphisms.
\end{proof}

One motivating example for $J$-strengths is when $J$ is the Yoneda embedding $\yo : \CCat \to \widehat{\CCat}$ of a small category $\CCat$, as in Example \ref{ex:yoneda-relative}. Then a $J$-strong $J$-relative monad is a \textit{strong promonad}~\cite{tarmo-grm} or a notion of \textit{arrow}~\cite{hughes-arrows}. In this case, $J$-strong $J$-relative monads, in fact, coincide with the notion of strong $J$-relative monad in \Cref{def:str-rel-monad}.

\begin{corollary}
    The Yoneda embedding satisfies the conditions of \Cref{prop:J-strength-is-strong}, and so $U_\yo: \mathbf{SMon}(\yo) \to \yo\mathbf{SMon}(\yo)$ is an isomorphism of categories.
\end{corollary}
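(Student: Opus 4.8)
The plan is to instantiate \Cref{prop:J-strength-is-strong} with the ambient monoidal category taken to be $(\widehat{\CCat}, \daytimes, \widehat{I})$, with $\MCat = \CCat$, and with the functor $W$ taken to be the Yoneda embedding $\yo : \CCat \to \widehat{\CCat}$ (so here $W = J = \yo$). Once the two hypotheses of that proposition are checked for this data, its conclusion gives immediately that $U_\yo : \mathbf{SMon}(\yo) \to \yo\mathbf{SMon}(\yo)$ is an isomorphism of categories.

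The first hypothesis asks that $F \daytimes -$ preserve colimits for every $F \in \ob{\widehat{\CCat}}$. This is immediate from \Cref{def:day}: the Day convolution makes $\widehat{\CCat}$ symmetric monoidal closed, so $- \daytimes F$ is a left adjoint (to $F \multimap -$) and therefore preserves all colimits, and the symmetry transports this to $F \daytimes -$.

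The second hypothesis asks that $\yo$ be a dense strong symmetric monoidal functor. Density is the standard density of the Yoneda embedding, namely that every presheaf is canonically the colimit of the representables mapping into it. For strong monoidality, the unit comparison is handled by $\widehat{I} = \yo I$ from \Cref{def:day}, and for the tensor comparison I would substitute two representables into the defining coend and apply the co-Yoneda lemma in each coend variable:
\[
(\yo A \daytimes \yo B)(X) = \int^{Y,Z} \CCat(X, Y \otimes Z) \times \CCat(Y, A) \times \CCat(Z, B) \cong \CCat(X, A \otimes B) = \yo(A \otimes B)(X).
\]
This yields a natural isomorphism $\yo A \daytimes \yo B \cong \yo(A \otimes B)$, with naturality in $A, B$ inherited from functoriality of the coend.

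I expect the only real work to be in confirming that these comparison isomorphisms satisfy the associativity and unit coherence axioms required of a strong symmetric monoidal functor. This is a well-known property of Day convolution, for which I would either cite the standard source or discharge the coherence diagrams directly from the universal property of the defining coends together with the coherence of $\otimes$ on $\CCat$; no genuinely new difficulty arises, so the verification is routine once the isomorphisms above are in hand.
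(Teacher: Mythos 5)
Your proposal is correct and matches the argument the paper leaves implicit: instantiate \Cref{prop:J-strength-is-strong} with $W = J = \yo$ in $(\widehat{\CCat},\daytimes,\widehat I)$, obtain colimit preservation of the tensor from the closedness of Day convolution, and verify that $\yo$ is dense and strong symmetric monoidal via the co-Yoneda computation $\yo A \daytimes \yo B \cong \yo(A\otimes B)$. The paper states the corollary without proof, and your verification of the two hypotheses is exactly the intended justification.
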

\section{Encoding graded monads in the linear-non-linear relative monadic metalanguage} \label{sec:gmm}

Graded monads~\cite{katsumata_parametric_2014} are a semantic tool for analysing the usage of effects. They stratify computation via annotation with grades, permitting a more fine-grained analysis of programs. Graded monads have been used for memory and output analysis for the state and writer monad, non-determinism with cuts~\cite{katsumata_flexible_2022, katsumata_parametric_2014}, and to encode epistemic uncertainty and programmable inference in probabilistic programs~\cite{liell-cock_compositional_2025, lew_trace_2020}. Graded monads are also understood to be instances of enriched relative monads~\cite{mcdermott_flexibly_2022}, and in \Cref{thm:strong-graded-as-relative}, we show that strong graded monads can be understood as strong relative monads. We exploit this semantic correspondence to refine RMM to a term calculus for graded monads, based on the linear-non-linear calculus~\cite{benton-lnl}.
\newcommand{\mplus}{\oplus}
\begin{definition}[\cite{katsumata_parametric_2014}] \label{def:graded-monad}
    Let $(\MCat,\mplus,e)$ be a symmetric monoidal category. A \emph{strong $\MCat$-graded monad} on a Cartesian category $(\ACat,\times,1)$ consists of
    \begin{itemize}
        \item Functors $T_{(-)} A : \MCat\op \to \ACat$ for each $A \in \ob{\ACat}$
        \item A unit map $\eta_A : A \to T_e A$ for each $A \in \ob{\ACat}$
        \item A collection of maps $(-)^*_{m,n} : \ACat(\Gamma \times A,T_nB)\to \ACat(\Gamma \times T_m A,T_{m\mplus n} B)$ natural in $\Gamma \in \ob{\ACat}$ and $n \in \ob{\MCat}$ and extranatural in $m\in \ob{\MCat}$
    \end{itemize}
    such that for any $f : \Gamma \times A \to T_m B$ and $g :  \Gamma \times B\to T_n C$ the following equations hold,
    \begin{align*}
        (\eta_A \circ \pi_A)^*_{m,e} &= T_{\rho_m} A \circ \pi_{T_mA}, \\
        (f)^*_{e,m}\circ (\Gamma \times \eta_A) &= T_{\lambda_m} B \circ f, \\
        (g)^*_{l\mplus  m,n} \circ (\pi_\Gamma,(f)^*_{l,m}) &= T_{\alpha_{l,m,n}}C \circ 
        ((g)^*_{m,n} \circ (\pi_\Gamma,f))^*_{l,m \mplus  n},
    \end{align*}
    where $\pi_\Gamma : \Gamma \times A \to \Gamma$ and $\pi_A : \Gamma \times A \to A$ are the projections.
\end{definition}

\begin{example}
    The list monad on $\Set$ is a  strong $(\mathbb{N},\geq,\cdot,1)$ graded monad, where $T_nX=\{\vec{x} \in X^* \mid \mathsf{length}(\vec{x})\leq n \}$.
\end{example}
Monadic style programming can be refined into a graded monadic metalanguage (GMM), by grading the computation types by a symmetric monoidal category $\MCat$~\cite{katsumata_parametric_2014}. The language GMM$(\MCat)$ is defined as follows.
\begin{alignat*}{4}
    &\text{Types:} \quad & A,B &::= 1 ~|~ A\times B ~|~ T_m A & \tag{$m \in \ob{\MCat}$} \\
    &\text{Contexts:} \quad & \Gamma &::= \emptyset ~|~ \Gamma, x:X &
\end{alignat*}
GMM(\MCat) has standard Cartesian term constructors along with the following terms for graded monadic sequencing:
\begin{mathpar}
\inferrule{\tj \Gamma u {A}}{\tj \Gamma {\ret{u}} {T_e A}}
\and
\inferrule{\tj {\Gamma} u {T_m A} \and \tj {\Gamma, x:A} t {T_n B}}{\tj {\Gamma} {\doin{x}{u}{t}} {T_{m\mplus n}B}}
\and
\inferrule{\tj \Gamma {u} {T_n A}}
{\tj \Gamma {\mathbf{T_\xi}A \, u} {T_m A}}(\xi \in \MCat(m,n))
\end{mathpar}
In particular, $\mathsf{return}$ assigns a value at the unit grading, while sequencing computations tensors the grades of the individual computations. These terms satisfy associativity and unitality equations (see \Cref{subsec:GMM-equations} for the full equational theory and term calculus).
\begin{definition}
    A model of GMM(\MCat) consists of 
    \begin{itemize}
        \item A Cartesian category $\CCat$
        \item A strong $\MCat$-graded monad
    \end{itemize}
\end{definition}
The graded monadic metalanguage was introduced in a richer setting with sum types~\cite{katsumata_parametric_2014}, and the models of that type theory have additional structure. However, the same soundness results apply in our restricted setting, and we obtain completeness.
\begin{theorem}[\cite{katsumata_parametric_2014}] \label{thm:completeness-gmm}
    The models of GMM(\MCat) provide a sound and complete semantics.
\end{theorem}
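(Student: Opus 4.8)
The plan is to follow the same two-part recipe used for URMM (\Cref{prop:URMM-completeness}) and RMM: soundness by structural induction, and completeness by a term model. The attribution to \cite{katsumata_parametric_2014} means the essential content is already known for the richer language with sum types, so the real task is to confirm that nothing is lost when we restrict the constructors, and to make the term-model construction explicit in our cut-down setting.

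For soundness, I would first establish a substitution lemma, $\den{\tj \Gamma {t[u/x]} Y} = \den{\tj {\Gamma,x:X} t Y} \circ (1_{\den{\Gamma}}, \den{\tj \Gamma u X})$, proved by induction on $t$. The only nonroutine cases are the graded constructors: for $\ret{u}$ one uses naturality of $\eta$; for the $\mathsf{do}$ term one uses naturality of $(-)^*_{m,n}$ in $\Gamma$; and for the coercion $\mathbf{T_\xi}A$ one uses functoriality of $T_{(-)}A$ in $\ACat$. With this in hand, soundness reduces to checking that each equation of the GMM equational theory becomes an identity of morphisms in an arbitrary model. The Cartesian equations are standard, while the three graded sequencing equations are exactly the unitality and associativity axioms of \Cref{def:graded-monad}, once the coercions $\mathbf{T_\xi}A$ are read as the functorial action carrying the unitors $\rho_m$, $\lambda_m$ and associator $\alpha_{l,m,n}$ of $\MCat$. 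Because our calculus has strictly fewer constructors than the language of \cite{katsumata_parametric_2014}, no new soundness obligations arise.

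For completeness, I would build the classifying (term) model $(\MCF, T)$ as in \Cref{prop:URMM-completeness}: objects are GMM($\MCat$) types (closed under $1$, $\times$, and $T_m(-)$), and a morphism $X \to Y$ is an equivalence class $[\tj{x:X}{u}{Y}]$ of terms modulo provable equality. The unit, pairing, and projection terms with their equations make $\MCF$ Cartesian. It then remains to equip $\MCF$ with a strong $\MCat$-graded monad: the object assignment is $A \mapsto T_m A$; the functor $T_{(-)}A : \MCat\op \to \MCF$ acts on a grade morphism $\xi$ by the coercion $\mathbf{T_\xi}A$; the unit $\eta_A$ is $[\ret{x}]$; and the extension $(-)^*_{m,n}$ is read off the $\mathsf{do}$ term. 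Each graded monad axiom of \Cref{def:graded-monad} then holds in $\MCF$ because the corresponding GMM equation is derivable, and naturality in $\Gamma$ and $n$ together with extranaturality in $m$ follow from the substitution lemma and the coercion-sequencing equations of the full theory. Since the interpretation of a term in $\MCF$ is (the class of) the term itself, any equation holding in every model holds in $\MCF$, which says precisely that it is derivable.

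I expect the main obstacle to be bookkeeping rather than conceptual: making the coercions $\mathbf{T_\xi}A$ interact correctly with $\eta$ and $(-)^*_{m,n}$ so that the three axioms of \Cref{def:graded-monad}—which explicitly carry the monoidal coherence isomorphisms $\rho_m$, $\lambda_m$, $\alpha_{l,m,n}$ of $\MCat$—match the GMM equations on the nose, and verifying the extranaturality of $(-)^*_{m,n}$ in $m$ from the equational theory. One could alternatively transport completeness along the correspondence between strong graded monads and strong relative monads (\Cref{thm:strong-graded-as-relative}) and the completeness of RMM, but the direct term-model argument is self-contained and makes clear that dropping sum types leaves both the soundness obligations and the construction intact.
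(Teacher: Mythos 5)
Your proposal is correct and follows exactly the route the paper intends: the paper gives no proof of its own for this theorem, instead citing \cite{katsumata_parametric_2014} and remarking that soundness transfers to the sum-free fragment and that completeness follows, which is precisely the substitution-lemma-plus-term-model recipe you spell out (and the same recipe the paper uses explicitly for URMM and RMM). Your attention to the extranaturality of $(-)^*_{m,n}$ in $m$ and the matching of the coercions $\mathbf{T_\xi}$ with the monoidal coherence isomorphisms correctly identifies where the only real work lies, and those obligations are discharged by the coercion--sequencing equations in the full equational theory.
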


\subsection{Strong graded monads as strong relative monads}

The data of an $\MCat$-graded monad is in bijective correspondence with the data for an $\widehat{\MCat}$-enriched relative monad for a suitably chosen root ~\cite[Thm. 1]{mcdermott_flexibly_2022}. In \Cref{thm:strong-graded-as-relative}, we extend this to strong graded monads. We begin by explicitly describing this connection for $\MCat$-graded monads on $\Set$, which are always strong. In this case, the enrichment of \cite[Thm. 1]{mcdermott_flexibly_2022} is equivalent to strength of the relative monad by \Cref{thm:closed-enriched-strong-corresp}. Hence, we can understand $\MCat$-graded monads on $\Set$ as strong $J$-relative monads, where $J:\Set \to \widehat{\MCat}$ is defined by
\begin{displaymath}
    JX= \MCat(-,e) \cdot X \text,
\end{displaymath}
with $Y\cdot X$ denoting the copower of $X$ by $Y$ and $\widehat{\MCat}$ is considered with its Day monoidal structure. We will now explicitly describe how this corresponds to a strong graded monad. Note that $J$ has a right adjoint, $J\dashv R$, where $R$ is given by evaluation on $e$, since
\begin{displaymath}
    \widehat{\MCat}(JX,F) \cong \Set(X, \widehat{\MCat}(\MCat(-,e), F(-))) \cong \Set(X,Fe).
\end{displaymath}
Hence, the correspondence on the data of the units is via the adjunction:
\begin{displaymath}
    \widehat{\MCat}(JX,TY) \cong \Set(X,T_eY)
\end{displaymath}
Let $W: \MCat \times \Set \to \widehat{\MCat}$ be defined by $W(m,\Gamma)=J\Gamma\otimes \MCat(-,m)$. Then, the following diagram gives a correspondence between extension operators for a strong $\MCat$-graded monad on $\Set$ and extension operators for a $W$-strong $J$-relative monad.
\[\begin{tikzcd}[column sep = small]
	{\Set(\Gamma \times X,T_nY)} & {\widehat{\MCat}(J(\Gamma \times X),\MCat(-,n)\multimap TY)} & {\widehat{\MCat}((J\Gamma \otimes \MCat(-,n))\otimes JX,TY)} \\
	{\widehat{\MCat}(\Gamma \times T_{-}X,T_{-\oplus n}Y)} & {\widehat{\MCat}(J\Gamma \otimes TX,\MCat(-,n)\multimap TY)} & { \widehat{\MCat}((J\Gamma \otimes \MCat(-,n))\otimes TX,TY)}
	\arrow["\cong"{description}, draw=none, from=1-1, to=1-2]
	\arrow["{(=)^*_{-,n}}"', from=1-1, to=2-1]
	\arrow["\cong"{description}, draw=none, from=1-2, to=1-3]
	\arrow["{(=)^*}", from=1-3, to=2-3]
	\arrow["\cong"{description}, draw=none, from=2-1, to=2-2]
	\arrow["\cong"{description}, draw=none, from=2-2, to=2-3]
\end{tikzcd}\]
Note that we have used the natural isomorphisms $Fn \cong R(\MCat(-,n)\multimap F)$ and $(J\Gamma\otimes F)(m)\cong X\times Fm$, which hold for any presheaf $F: \MCat\op \to \Set$, and that $J$ is strong monoidal.
By Proposition \ref{prop:J-strength-is-strong}, this also corresponds to the data for a strong $J$-relative monad. We now generalise this example to any Cartesian closed category with small limits and colimits.

\newcommand{\edaytimes}{\mathbin{\widebar\otimes}}
\begin{theorem} \label{thm:strong-graded-as-relative}
    Let $\CCat$ be a Cartesian closed category with all small limits and colimits, and $\MCat$ a small symmetric monoidal category. Let $\widebar{\MCat} = [\MCat\op,\CCat]$ and define the functor $J : \CCat \to \widebar{\MCat}$ by
    \begin{displaymath}
        JX=\MCat(-,e)\cdot X 
    \end{displaymath}
    Then there is a bijective correspondence between strong $J$-relative monads on $(\widebar{\MCat},\edaytimes,\MCat(-,e)\cdot 1)$ and strong $\MCat$-graded monads on $\CCat$, where $\edaytimes$ is the $\CCat$-enriched Day convolution~\cite{day_closed_1970}:
    \begin{displaymath}
        (F \edaytimes G)(m)= \int^{n,p\in \MCat} \MCat(m,n\mplus  p)\cdot(Fn \times Gp)
    \end{displaymath}
\end{theorem}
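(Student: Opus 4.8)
The plan is to mirror the explicit $\CCat=\Set$ construction carried out just above the theorem, replacing $\Set$ by the Cartesian closed category $\CCat$ and $\Set$-valued presheaves by $\CCat$-valued ones, and then to invoke \Cref{prop:J-strength-is-strong} to pass from a $W$-strength to a genuine strength. First I would record the ambient structure: since $\CCat$ is Cartesian closed with all small limits and colimits and $\MCat$ is small, the $\CCat$-enriched Day convolution makes $\widebar{\MCat}=[\MCat\op,\CCat]$ symmetric monoidal closed with unit $\MCat(-,e)\cdot 1$~\cite{day_closed_1970}; in particular $-\edaytimes X$ has a right adjoint $X\multimap-$, so each $X\edaytimes-$ preserves colimits, which is one hypothesis of \Cref{prop:J-strength-is-strong}. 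I would then establish the adjunction $J\dashv R$, with $R=\mathrm{ev}_e$ evaluation at $e$, via the co-Yoneda computation $\widebar{\MCat}(JX,F)\cong\CCat(X,Fe)$, and note that setting $T_mX=(TX)(m)$ converts the functor $T:\CCat\to\widebar{\MCat}$ of a $J$-relative monad into the graded family $T_{(-)}X:\MCat\op\to\CCat$, with units corresponding across the adjunction through $\widebar{\MCat}(JX,TX)\cong\CCat(X,T_eX)$.

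Next I would treat the extension operators. Define $W:\MCat\times\CCat\to\widebar{\MCat}$ by $W(m,\Gamma)=J\Gamma\edaytimes(\MCat(-,m)\cdot 1)$; the two coend identities $(J\Gamma\edaytimes F)(m)\cong\Gamma\times Fm$ and $Fn\cong R((\MCat(-,n)\cdot 1)\multimap F)$ — both consequences of co-Yoneda, the unit law of $\MCat$, and the fact that copowers commute with products and coends — reduce $W(m,\Gamma)$ to the copower $\MCat(-,m)\cdot\Gamma$ of a representable. Chaining these isomorphisms exactly as in the displayed square for $\CCat=\Set$ should give a bijection between the graded extensions $(-)^*_{-,n}$ of a strong $\MCat$-graded monad and the extension operators of a $W$-strong $J$-relative monad, under which the three equations of \Cref{def:graded-monad} match the three $W$-strong relative-monad equations, and morphisms correspond likewise. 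It then remains to collapse $W$-strength to strength: $W$ is strong symmetric monoidal (since $J$ is, as $(J\Gamma\edaytimes J\Gamma')(p)\cong\MCat(p,e)\cdot(\Gamma\times\Gamma')$, and $(\MCat(-,m)\cdot 1)\edaytimes(\MCat(-,m')\cdot 1)\cong\MCat(-,m\mplus m')\cdot 1$), and it is dense because $W(m,\Gamma)\cong\MCat(-,m)\cdot\Gamma$ and copowers of representables are dense in $[\MCat\op,\CCat]$ by the density formula $F\cong\int^{m}\MCat(-,m)\cdot Fm$. Hence \Cref{prop:J-strength-is-strong} applies and identifies $W$-strong $J$-relative monads with strong $J$-relative monads, closing the correspondence.

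I expect the main obstacle to be verifying that the chain of isomorphisms is genuinely natural in all variables — in particular, that naturality in $n$ and extranaturality in $m$ of the graded extension transport to naturality in $\Gamma\in\MCat\times\CCat$ of the $W$-strong extension, and that the three monad \emph{laws} correspond rather than just the underlying data. In the enriched setting these checks cannot be read off pointwise as freely as over $\Set$, so care is needed with the coend manipulations (Fubini, co-Yoneda, and the interaction of the copower with $\times$ and $\multimap$) and with the coherence isomorphisms $\lambda,\rho,\alpha$ of the Day structure that appear, twisted by $T_{\lambda},T_{\rho},T_{\alpha}$, in the graded equations. Everything else is a routine transcription of the $\Set$ argument together with the hypotheses already assembled for \Cref{prop:J-strength-is-strong}.
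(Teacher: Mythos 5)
Your overall architecture matches the paper's: establish the Day-convolution monoidal closed structure on $\widebar{\MCat}$, the adjunction $J \dashv \mathrm{ev}_e$, and the chain of co-Yoneda isomorphisms transporting the graded extension operators $(-)^*_{-,n}$ to the extension operators of a $W$-strong $J$-relative monad with $W(m,\Gamma) \cong \MCat(-,m)\cdot\Gamma$; this part is correct and is essentially verbatim what the paper does. The divergence is in the last step, and it contains a genuine gap: you assert that \Cref{prop:J-strength-is-strong} \emph{applies directly}, justifying the density hypothesis by the coend formula $F \cong \int^m \MCat(-,m)\cdot Fm$. That formula exhibits $F$ as a \emph{weighted} ($\CCat$-enriched) colimit of copowers of representables, i.e.\ it establishes \emph{enriched} density of $W$. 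But \Cref{prop:J-strength-is-strong} is stated and proved for ordinary density: its proof rewrites $\CCat(X\otimes JA,TB)$ as a conical limit over the comma category $(W\downarrow X)$ and uses preservation of the corresponding conical colimit by $X\otimes{-}$. When $\CCat\neq\Set$, enriched density of the copowers of representables in $[\MCat\op,\CCat]$ does not follow from, nor immediately yield, the comma-category colimit presentation the proposition's proof relies on (and here $(W\downarrow\Gamma)$ is moreover a large category, since $\CCat$ is not assumed small). The paper is explicit on this point: it states that \Cref{prop:J-strength-is-strong} ``does not directly apply, but its argument can be generalised,'' and replaces the comma-category limit by the small end
\begin{equation*}
    \widebar{\MCat}(\Gamma \edaytimes JX,TY) \cong \int_n \widebar{\MCat}(W(n,\Gamma n) \edaytimes JX,TY) \to \int_n \widebar{\MCat}(W(n,\Gamma n) \edaytimes TX,TY) \cong \widebar{\MCat}(\Gamma \edaytimes TX,TY),
\end{equation*}
which is exactly the enriched-density form of the argument.

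So the missing idea is not the density \emph{fact} (your coend formula is the right one) but the recognition that you must rerun the proof of \Cref{prop:J-strength-is-strong} in the enriched setting — replacing the conical limit over $(W\downarrow\Gamma)$ by the end over $n\in\MCat$ displayed above, and checking that the outer isomorphisms are instances of $\Gamma \cong \int^n W(n,\Gamma n)$ together with $({-})\edaytimes JX$ preserving this coend — rather than cite the proposition as a black box. Everything else in your proposal (the reduction of $W(m,\Gamma)$ to $\MCat(-,m)\cdot\Gamma$, the strong monoidality computations, and your closing caveats about naturality, extranaturality, and the transport of the three axioms) is consistent with the paper's proof, which likewise leaves those verifications implicit.
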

\begin{proof}
    The $\CCat$-enriched Day convolution for the free $\CCat$-enrichment of $\MCat$ defines a symmetric monoidal closed structure on $\widebar{\MCat}$. $J$ has a right adjoint $R$ given by evaluation on $e$ and is strong monoidal. The isomorphisms described above for $\Set$ apply in the general case, giving the correspondence between units and the extension operators of a strong graded monad and a $W$-strong $J$-relative monad, where $W : \MCat \times \CCat \to \widebar{\MCat}$ is given by $W(m,X)=\MCat(-,m)\cdot X$. \Cref{prop:J-strength-is-strong} does not directly apply, but its argument can be generalised to give extension operators
    \begin{equation*}
        \widebar{\MCat}(\Gamma \edaytimes JX,TY) \cong \int_n \widebar{\MCat}(W(n,\Gamma n) \edaytimes JX,TY) \to \int_n \widebar{\MCat}(W(n,\Gamma n) \edaytimes TX,TY) \cong \widebar{\MCat}(\Gamma \edaytimes TX,TY),
    \end{equation*}
    since $W$ is enriched dense.
\end{proof}
\newcommand{\aj}[3]{#1 \vdash_{\ACat} #2:#3}
\newcommand{\cj}[4]{#1;#2\vdash_{\CCat} #3:#4}

\subsection{The linear-non-linear RMM}

The linear-non-linear calculus (LNL)~\cite{benton-lnl} is a language that accommodates both linear and non-linear programs. It provides a type system that incorporates \emph{resource sensitivity} via its linear part, with the expressivity of the simply typed lambda calculus, which allows functions such as $\lambda(x:\mathsf{int}).x+x$. \emph{Grades} can be interpreted as cost-sensitive computations, so by this analogy and \Cref{thm:strong-graded-as-relative}, we integrate LNL and RMM to produce a graded monadic metalanguage that is a conservative extension of GMM.
The LNL calculus consists of the following types:
\begin{alignat*}{4}
    &\text{$\ACat$ types:} \quad & A,B &::= 1 ~|~ A\times B ~|~ A\to B ~|~ RX & \\
    &\text{$\CCat$ types:} \quad & X,Y &::= I ~|~ X\otimes Y ~|~ X \multimap Y ~|~ JA & \\
    &\text{$\ACat$ contexts:} \quad & \Gamma,&::= \emptyset ~|~ \Gamma, a:A & \\
    &\text{$\CCat$ contexts:} \quad & \Delta &::= \emptyset ~|~ \Delta, x:X &
\end{alignat*}
There is a \textit{non-linear $\ACat$-judgement} $\aj \Gamma u A$ and a \textit{mixed linear-non-linear $\CCat$-judgement} $\cj \Gamma \Delta t X$. Beyond the terms for the simply typed lambda calculus for $\ACat$ types and terms for the linear lambda calculus for $\CCat$ types (Appendix \ref{app:LNL-RMM}), there are linear-non-linear interaction terms.
\newcommand{\derek}[1]{\mathsf{derelict} \, #1}
\newcommand{\app}[2]{#1 \, #2}
\begin{mathpar}
    \inferrule{\aj \Gamma u A}{\cj \Gamma \emptyset {J(u)} {JA}}
    \and
    \inferrule{\cj \Gamma \emptyset t X}{\aj \Gamma {R(t)} {RX}}
    \\
    \inferrule{\cj \Gamma {\Delta_1} t {JA} \and \cj {\Gamma,a:A} {\Delta_2} s X}{ \cj \Gamma {\Delta_1, \Delta_2} {\letin{J(a)}{t}{s}} X}
    \and
    \inferrule{\aj \Gamma u {RX}}{\cj \Gamma \emptyset {\derek {u}} X}
\end{mathpar}
The following models provide a sound semantics for LNL~\cite{benton-lnl} with the equations in \Cref{app:LNL-RMM}.
\begin{definition} \label{def:LNL-model}
    An LNL model consists of 
    \begin{itemize}
        \item A Cartesian closed category $\ACat$
        \item A symmetric monoidal closed category $\CCat$
        \item A strong symmetric monoidal functor $J : \ACat \to \CCat$ with a right adjoint $J \dashv R$
    \end{itemize}  
\end{definition}
\begin{example} \label{ex:LNL-model}
    If $\CCat$ is a Cartesian closed category with all small limits and colimits, and $\MCat$ is a small symmetric monoidal category, then the triple $(\CCat,\widebar{\MCat},J)$ given in Theorem \ref{thm:strong-graded-as-relative} is an LNL model. The right adjoint $R$ is given by evaluation on $e$.
\end{example}
In view of this example, \Cref{thm:strong-graded-as-relative} shows that we can obtain a graded monad on this LNL model from any strong $J$-relative monad. In fact, this construction applies to any LNL model. We extend LNL accordingly, with a strong relative monad and additional \textit{grade types} $\underline{m}$ which allow the interpretation of the objects of $\MCat$ inside $\CCat$. This defines a new language LNL-RMM($\MCat$) for $\MCat$-graded monads, defined as follows.
\begin{alignat*}{4}
    &\text{$\ACat$ types:} \quad & A,B &::= 1 ~|~ A\times B ~|~ A\to B ~|~ RX & \\
    &\text{$\CCat$ types:} \quad & X,Y &::= I ~|~ \underline{m} ~|~ X\otimes Y ~|~ X \multimap Y ~|~ JA ~|~ TA  \tag{$m \in \ob{\MCat}$} &
\end{alignat*}
The linear RMM terms are given by:
\begin{mathpar}
    \inferrule{\cj \Gamma \Delta  t {JA}}{\cj \Gamma \Delta {\ret{t}} {TA}}
    \and
    \inferrule{\cj \Gamma {\Delta_1} t {TA} \and \cj \Gamma {\Delta_2, x:JA} s {TB}}{\cj \Gamma {\Delta_1,\Delta_2} {\doin{x}{t}{s}} {TB}}
\end{mathpar}
The embedding of the grade types is given by:
\newcommand{\merg}[1]{{\mathsf{merge} \ #1}}
\newcommand{\unmerg}[1]{{\mathsf{unmerge} \ #1}}
\begin{mathpar}
    \inferrule{\cj \Gamma {\Delta}  t {\underline{m}}}{\cj \Gamma \Delta {\underline{\xi} \, t} {\underline{n}}}(\xi \in \MCat(m,n))
    \and 
    \inferrule{\cj \Gamma \Delta  t {\underline{e}}}{\cj \Gamma \Delta {\unmerg t} {I}}
    \and
    \inferrule{\cj \Gamma \Delta  t I}{\cj \Gamma \Delta {\merg t} {\underline{e}}}
    \and
    \inferrule{\cj \Gamma \Delta  t {\underline{m\mplus n}}}{\cj \Gamma \Delta {\unmerg t} {\underline{m}\otimes \underline{n}}}
    \and 
   \inferrule{\cj \Gamma \Delta  t {\underline{m}\otimes \underline{n}}}{\cj \Gamma \Delta {\merg t} {\underline{m\mplus n}}}
\end{mathpar}
The full term calculus and its equations are given in \Cref{app:LNL-RMM}. We can now define a strong $\MCat$-graded monad on the $\ACat$-types in the term calculus of LNL-RMM(\MCat). The graded monad type constructor $T_m A$ on $\mathcal A$ is given by:
\begin{equation}\label{eqn:GMasRM}
    T_mA=R(\underline{m} \multimap TA)
\end{equation}
The unit of the graded monad is then: 
\[
\lambda (a:A).\ R(\lambda (x:\underline{e}).\letin{\lunit}{\unmerg x}{\ret{J(a)}}) : A\to T_e A,
\]
The bind $\bind \ : T_mA\to (A\to T_nB) \to T_{m\mplus n}B$ is:
\begin{multline*}
  \lambda (f:T_mA).\ \lambda (g:A\to T_nB).\ 
R\big(\lambda (s:\underline{m \oplus n}).\letin{(t,r)}{\unmerg s} \\{\big(\doin{x}{\app{(\derek{f}){t}}}{(\letin{J(a)}{x}{\app{(\derek{\app{g}{a}})}{r}}})}\big)\big)
\end{multline*}
For $\xi : m \to n$, the regrading morphism $T_\xi X : T_n X \to T_mX$ is given by:
\begin{displaymath}
    \lambda(f:T_n X).R\big(\lambda(s:\underline{m}).\app{(\derek{f})}(\app {\underline{\xi}} {s})\big)
\end{displaymath}
\begin{example}
    If a symmetric monoidal category $\MCat$ is \emph{finitely generated}, then we can simplify the signature of LNL-RMM($\MCat$) by only adding generator grade types, and encoding the relations in the equational theory. This makes LNL-RMM($\MCat$) suitable for implementations of a graded monad. For example, an $(\mathbb{N},\geq,+,0)$-graded monad can be specified by the following type signature.
\begin{alignat*}{4}
    &\text{$\ACat$ types:} \quad & A,B &::= 1 ~|~ A\times B ~|~ A\to B ~|~ RX & \\
    &\text{$\CCat$ types:} \quad & X,Y &::= I ~|~ \underline{1} ~|~ X\otimes Y ~|~ X \multimap Y ~|~ JA ~|~ TA &
\end{alignat*}
The additional terms and equations required are:
\begin{mathpar}
    \inferrule{ }{\cj \Gamma \emptyset * \underline{1}}
    \and
    \inferrule{\cj \Gamma \emptyset s {\underline{m}\multimap \underline{n}} \and \cj \Gamma \emptyset t {\underline{m}\multimap \underline{n}}}{\cj \Gamma \emptyset {s=t} {\underline{m}\multimap \underline{n}}}
\end{mathpar}
where $\underline{m}$ is syntactic sugar for the $m$-fold product $\underline{1}\otimes\dots\otimes\underline{1}$.
\end{example}

We can illustrate the utility of the linear and non-linear aspects of the language for graded monads via two example programs. 

\begin{example} The non-linearity of the term calculus for $\ACat$ types allows us to define a term that self-sequences an effectful function, which would not be possible in a purely linear language.
\begin{equation*}
\lambda(a:A).\lambda(f:A\to T_mA).(f \, a \bind f): A\to T_{m\oplus m} A
\end{equation*}
\end{example}
\begin{example}
    The linearity of the term calculus for $\CCat$-types provides a suitable constraint for reasoning about grade types. Linear type theories restrict which terms are derivable by preventing the discarding and duplication of variables. Consider a program in LNL-RMM,
\begin{displaymath}
    \mathsf{runBoth} : T_mA\times T_mB\to T_{m\oplus m}(A\times B) \text,
\end{displaymath}
which runs each computation of cost $m$, returning the paired result necessarily of cost $m\mplus m$.
If we extended the $\CCat$ term derivations to a non-linear type theory, allowing the discarding and duplication of variables, then we could duplicate $\underline{m}$, now permitting the following program:
\begin{displaymath}
    \mathsf{duplicateAndRunBoth}: T_mA\times T_m B\to T_m(A\times B)
\end{displaymath}
Hence, the linearity restriction on derivable terms for $\CCat$-types facilitates effect-usage-sensitive computations.
\end{example}
The strong graded monad on $\ACat$-types defines an embedding of GMM(\MCat) into the term calculus of LNL-RMM(\MCat). This means that there is a map on GMM(\MCat) types and terms (denoted $[-]$) into the $\ACat$-types and derivable terms of LNL-RMM(\MCat) that preserves the equational theory of GMM(\MCat). We will show that this embedding is \emph{conservative}, so that it additionally \emph{reflects} the equational theory of GMM(\MCat):
\[\tj \Gamma {u=v} A \iff \aj {[\Gamma]} {[u]=[v]} {[A]}\]
We begin by identifying the models of LNL-RMM(\MCat).
\begin{definition}
    A model of LNL-RMM(\MCat) is an LNL model $(\ACat,\CCat,J\dashv R)$ along with:
    \begin{itemize}
        \item A strong symmetric monoidal functor $\MCat \to \CCat$ which we denote by $m \mapsto \underline{m}$
        \item A strong $J$-relative monad $(T,\eta,(-)^*)$ 
    \end{itemize}
\end{definition}
\begin{theorem} \label{thm:completeness-lnl-rmm}
    The models of LNL-RMM(\MCat) are a sound semantics.
\end{theorem}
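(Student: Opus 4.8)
The plan is to prove soundness by the standard route: fix the interpretation of types, contexts, and terms, establish substitution lemmas, and then verify by induction on the derivation of an equation that every axiom of the equational theory is validated in an arbitrary model. First I would interpret the $\ACat$-types as objects of $\ACat$ (with $RX$ sent to $R\den X$) and the $\CCat$-types as objects of $\CCat$, where $JA$ uses the functor $J$, the grade type $\underline m$ uses the strong symmetric monoidal functor $m \mapsto \underline m$, and $TA$ uses the object part of the strong $J$-relative monad. An $\ACat$-context is interpreted as a product $\prod_i \den{A_i}$ in $\ACat$ and a $\CCat$-context as a tensor $\bigotimes_j \den{X_j}$ in $\CCat$. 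The crucial point is the mixed judgement: I would interpret $\cj \Gamma \Delta t X$ as a morphism $J\den\Gamma \otimes \den\Delta \to \den X$ in $\CCat$, and a pure judgement $\aj \Gamma u A$ as a morphism $\den\Gamma \to \den A$ in $\ACat$. Because $J$ is strong symmetric monoidal, $J\den\Gamma$ carries a commutative comonoid structure inherited from the Cartesian diagonal and terminal maps of $\ACat$; this is exactly what lets the non-linear context $\Gamma$ be shared, duplicated, and discarded across subterms while $\Delta$ is threaded linearly.

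The backbone of the argument is a pair of substitution lemmas, stated and proved simultaneously by mutual structural induction on terms. The non-linear substitution lemma says that substituting $u$ for an $\ACat$-variable $a$ corresponds to precomposing $\den t$ with the map $J\den\Gamma \to J\den\Gamma \otimes J\den A$ that copies $J\den\Gamma$ via its comonoid structure and feeds $J\den u$ into the $a$-slot; the linear substitution lemma says that substituting for a $\CCat$-variable $x$ composes $\den u$ into the appropriate tensor factor with no duplication. Both forms of the lemma are needed for the $\ACat$- and $\CCat$-judgements, so they are handled in one induction.

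With these in place, soundness follows by induction on the derivation of $t = s$. Reflexivity, symmetry, transitivity, and the congruence rules are immediate, and the substitution-compatibility rules are discharged by the substitution lemmas. The remaining work is to check each axiom: the $\beta$- and $\eta$-laws for $1$, $\times$, $\to$, $I$, $\otimes$, and $\multimap$ hold by the universal properties of the Cartesian and symmetric monoidal closed structure; the laws governing $J$, $R$, the $\mathsf{let}\,J(a)$ eliminator, and $\derek{(-)}$ hold by the triangle identities of the adjunction $J \dashv R$ together with the strong monoidality of $J$; the $\mathsf{merge}/\mathsf{unmerge}$ and regrading laws hold because $m \mapsto \underline m$ is strong symmetric monoidal, so the coherence isomorphisms $\underline e \cong I$ and $\underline{m\oplus n} \cong \underline m \otimes \underline n$ are mutually inverse and natural in $\xi$; and the three $\mathsf{do}$-notation laws reduce precisely to the three equations of a strong $J$-relative monad in \Cref{def:str-rel-monad}, once sequencing is interpreted via the Kleisli extension $(-)^*$ exactly as in RMM.

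I expect the main obstacle to be the sequencing rule and its laws, where the linear and non-linear worlds meet. Interpreting $\doin x t s$ requires applying $(-)^*$ under the combined context object $J\den\Gamma \otimes \den{\Delta_2}$, and verifying the associativity law for $\mathsf{do}$ forces one to commute the strength past the comonoid duplication of $\Gamma$ and past the re-association of the split linear context $\Delta_1,\Delta_2$. Pinning down this interface — ensuring that the naturality of $(-)^*$ in the context object interacts correctly with the strong-monoidal structure of $J$ and with the associators and symmetries used to reshuffle $\Delta$ — is the delicate step; everything else is routine diagram chasing once it is settled.
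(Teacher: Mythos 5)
Your proposal is correct in substance but organised quite differently from the paper's proof. The paper does not redo the structural induction: it observes that LNL-RMM($\MCat$) is assembled modularly from LNL, linear RMM, and the grade types, with no equations crossing the fragment boundaries, so a model of LNL-RMM($\MCat$) is simply a model of LNL (whose soundness is taken from \cite{benton-lnl}) together with a strong $J$-relative monad validating the $\mathsf{do}$-laws and a strong symmetric monoidal functor validating the $\mathsf{merge}/\mathsf{unmerge}$ and regrading laws; only the latter two pieces are verified directly. You instead carry out the whole soundness argument monolithically: interpreting the mixed judgement as a map $J\den\Gamma\otimes\den\Delta\to\den X$, exploiting the comonoid structure that $J$ transports from the Cartesian diagonal, proving linear and non-linear substitution lemmas by mutual induction, and checking every axiom including the LNL ones. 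That is the standard route and all the ingredients you name are the right ones — in particular your identification of the sequencing rule as the point where the Kleisli extension must be taken at the composite context object $J\den\Gamma\otimes\den{\Delta_2}$, using naturality of $(-)^*$ in that object, is exactly where the strong relative monad structure earns its keep. What your approach buys is self-containedness (you do not lean on the prior LNL soundness result, which the paper notes has no published term-model completeness companion anyway); what the paper's approach buys is brevity and a clean separation of concerns, so that only the genuinely new fragments need checking. The one thing your sketch glosses over is the large family of commuting-conversion equations in the appendix (the $\mathsf{let}$-floating laws for $I$, $\otimes$, $J$, and $\mathsf{do}$), which are not instances of the three monad laws and must be discharged separately from naturality of the interpretation; this is routine but worth flagging explicitly rather than folding into ``routine diagram chasing.''
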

\begin{proof}
    The construction of the type theory is modular; i.e. we add terms from linear RMM with no additional equations with the terms of LNL. A model of LNL-RMM(\MCat) is therefore a model of LNL and (linear) RMM. One can also directly verify that the additional terms and equations for the grade types $\MCat$ are validated by a strong symmetric monoidal functor.
\end{proof}
In categorical semantics, it is common to show completeness of a class of models by constructing a term model from the types, terms and equations. We are not aware of a proof of completeness of LNL in this fashion for the models in \Cref{def:LNL-model}, and that result is not within the scope of this paper. It is worth noting that due to the modularity of our type theory, a completeness result for LNL-RMM(\MCat) would follow from the completeness of LNL. However, the soundness of LNL-RMM(\MCat) (\Cref{thm:completeness-lnl-rmm}) combined with the completeness of GMM(\MCat) (\Cref{thm:completeness-gmm}), is sufficient for us to show the conservativity of the interpretation of graded monads in LNL-RMM(\MCat).

\begin{theorem} \label{thm:conservative-gmm}
    The interpretation of GMM(\MCat) types and terms in the term calculus of LNL-RMM(\MCat) is a conservative extension.
\end{theorem}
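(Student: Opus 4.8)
The forward implication, that $\tj\Gamma{u=v}A$ entails $\aj{[\Gamma]}{[u]=[v]}{[A]}$, is exactly the soundness of the embedding: the constructor $T_mA = R(\underline{m}\multimap TA)$ of~\eqref{eqn:GMasRM}, together with the unit, bind and regrading terms given above, assembles into a strong $\MCat$-graded monad inside the $\ACat$-fragment of \emph{every} model of LNL-RMM(\MCat), so the translation $[-]$ necessarily sends a derivable GMM equation to a derivable LNL-RMM equation. The substance of the theorem is therefore the reflection property, that $\aj{[\Gamma]}{[u]=[v]}{[A]}$ implies $\tj\Gamma{u=v}A$.

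To prove reflection I would invoke the completeness of GMM(\MCat) (\Cref{thm:completeness-gmm}): it suffices to show that $\sem{u}=\sem{v}$ in every model of GMM(\MCat). So fix such a model, consisting of a Cartesian category $\CCat_0$ and a strong $\MCat$-graded monad on it. The obstruction to feeding this directly into \Cref{thm:strong-graded-as-relative} and \Cref{ex:LNL-model} is that $\CCat_0$ need be neither Cartesian closed nor complete and cocomplete. I would remedy this by passing to the presheaf category $\widehat{\CCat_0}=[\CCat_0\op,\Set]$, which \emph{is} Cartesian closed and has all small limits and colimits, and by transporting the strong graded monad along the Yoneda embedding $\yo:\CCat_0\to\widehat{\CCat_0}$ via left Kan extension. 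Since $\yo$ is fully faithful one has $\Lan_\yo(\yo\circ T_m)\circ\yo\cong\yo\circ T_m$, so the extended graded monad restricts to the original on representables, and $\yo$ — which is fully faithful and preserves finite products — becomes a morphism of graded-monad models.

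With a strong graded monad now living on the well-behaved category $\widehat{\CCat_0}$, I would apply \Cref{thm:strong-graded-as-relative} to obtain the corresponding strong $J$-relative monad on $[\MCat\op,\widehat{\CCat_0}]$, and \Cref{ex:LNL-model} to package $(\widehat{\CCat_0}, [\MCat\op,\widehat{\CCat_0}], J\dashv R)$ with the grade-type functor $m\mapsto\underline{m}$ into a genuine model of LNL-RMM(\MCat). By the way the correspondence is built, the graded monad induced on the $\ACat$-part of this model through $T_mA=R(\underline{m}\multimap TA)$ is (isomorphic to) the transported one, so the interpretation of a translated term $[u]$ in this LNL-RMM model agrees with the interpretation of $u$ under the transported graded monad. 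Soundness of LNL-RMM(\MCat) (\Cref{thm:completeness-lnl-rmm}) then forces $\sem{[u]}=\sem{[v]}$, hence $\sem{u}=\sem{v}$ in $\widehat{\CCat_0}$; and because $\yo$ preserves the relevant structure and is faithful, $\sem{u}=\sem{v}$ already holds in $\CCat_0$. As the GMM model was arbitrary, completeness of GMM(\MCat) yields $\tj\Gamma{u=v}A$, which is reflection.

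The step I expect to be the main obstacle is the transport in the middle paragraph: one must check that the left Kan extension of the strong $\MCat$-graded monad along $\yo$ is \emph{again} a strong graded monad, with unit, extension, and regrading data all extended coherently and with the strength preserved, so that $\yo$ is truly a morphism of models that reflects equality. The remaining ingredients — the soundness of $[-]$, the soundness of LNL-RMM, and the identification of $R(\underline{m}\multimap TA)$ with the transported graded monad — are either routine or follow directly from \Cref{thm:strong-graded-as-relative}, \Cref{ex:LNL-model} and \Cref{thm:completeness-lnl-rmm}.
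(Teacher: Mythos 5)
Your proposal is correct and follows essentially the same route as the paper: embed an arbitrary small GMM(\MCat) model (in particular the term model) into its presheaf category via $\yo$, transport the strong graded monad there, apply \Cref{thm:strong-graded-as-relative} and \Cref{ex:LNL-model} to obtain a model of LNL-RMM(\MCat) whose induced graded monad $R(\underline{m}\multimap TA)$ agrees (up to isomorphism) with the transported one, and conclude by soundness of LNL-RMM(\MCat) together with completeness of GMM(\MCat). The one step you flag as the main obstacle --- that the data transported along $\yo$ is again a strong graded monad --- is discharged in the paper by first re-expressing the strong graded monad in its monoid form (endofunctors $T_m$, strength maps $\theta_m$, regradings $T_\xi$, unit and multiplications $\mu_{m,n}$ satisfying equations) and then observing that $\ACat \mapsto \widehat{\ACat}$ is a 2-functor, so all of this data and the equations between its components transport automatically, with $\yo$ fully faithful and strictly preserving the interpretation of GMM(\MCat) types and terms.
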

\begin{proof}
        Suppose $(\ACat,T)$ is a small model of GMM(\MCat) (where $T$ denotes a strong $\MCat$-graded monad). A strong $\MCat$-graded monad on a small Cartesian category $\ACat$ can equivalently be given by 
        \begin{itemize}
           \item Endofunctors $T_m : \CCat \to \CCat$
           \item Maps $(\theta_m)_{A,B} : A\otimes T_mB \to T_m(A\otimes B)$ natural in $A,B \in \ob{\ACat}$
           \item Natural transformations $T_\xi : T_n \to T_m$ for each $\xi : m \to n$
           \item A natural transformation $\eta : \id_\ACat \to T_e$
           \item Natural transformations $\mu_{m,n} : T_m T_n \to T_{m\oplus n}$
        \end{itemize}
        satisfying equations~\cite{katsumata_parametric_2014}. By applying the 2-functor $\ACat \mapsto \widehat{\ACat}$ to this data, we obtain a strong $\MCat$-graded monad on $\CCat=\widehat{\ACat}$. We thus have a model $(\CCat,\widehat{T})$ of GMM($\MCat$) and a full and faithful functor $\yo : \ACat \to \CCat$ which strictly preserves the interpretation of GMM($\MCat$) types and terms. Now, since $\CCat$ is a Cartesian closed category with all small limits and colimits, by \Cref{ex:LNL-model}, the adjunction described in \Cref{thm:strong-graded-as-relative} provides an LNL model. Furthermore, since $\widehat{T}$ is a strong graded monad on $\CCat$, by the correspondence in \Cref{thm:strong-graded-as-relative}, we obtain a model of LNL-RMM(\MCat), where the strong symmetric monoidal functor $\MCat\to \widebar{\MCat}$ is given by the enriched Yoneda embedding $\underline{m}=\MCat(m,-)\cdot 1$. One can also verify that the correspondence in \Cref{thm:strong-graded-as-relative} yields (up to isomorphism) the strong graded monad on $\CCat$, defined by $\den{R(m\multimap TX)}$. Hence, any small model of GMM embeds fully faithfully into a model of LNL-RMM(\MCat), and so by letting $\ACat$ be the term model of GMM(\MCat), we see that the embedding is a conservative extension.
    \end{proof}

\section{The arrow calculus in the relative monadic metalanguage} \label{sec:armm}

\newcommand{\fst}[1]{\mathsf{fst} \ #1}
\newcommand{\snd}[1]{\mathsf{snd} \ #1}
\newcommand{\bang}{\mathbin{!}}
\newcommand{\aabs}[3]{{\lambda(#1:#2).#3}}
\newcommand{\aapp}[2]{{#1 \bullet #2}}
\newcommand{\fapp}[2]{{#1 \ #2}}
\newcommand{\extract}[4]{{\mathsf{extract}(#1#2.#3, #4)}}
\newcommand{\K}[3]{{K(#1.#2,#3)}}
\newcommand{\J}[5]{{J(#1#2.#3, #4, #5)}}

Arrows are programming constructs that generalise monads to computations that are not strictly sequential~\cite{hughes-arrows}. A calculus for arrows and corresponding semantics has been developed over the years~\cite{hughes-arrows, Paterson2001, ATKEY201119, SANADA_2024}, culminating in \emph{the arrow calculus}~\cite{arrow-calc}, which has a strong resemblance to the RMM framework introduced in this paper, with analogous unit and bind terms. On the semantic side, this is not surprising since arrows have been described as strong monads relative to the Yoneda embedding (see \cite{tarmo-grm} and \Cref{sec:j-strengths}).
 The arrow calculus extends the simply typed lambda calculus with a binary type constructor $\leadsto$.
\begin{alignat*}{4}
    &\text{Types:} \quad & A,B &::= 1 ~|~ A\times B ~|~ A\to B ~|~ A \leadsto B & \\
    &\text{Contexts:} \quad & \Gamma,\Delta &::= \emptyset ~|~ \Gamma, x : A &
\end{alignat*}
Beyond the standard typing derivations of the simply typed lambda calculus, there is a new \emph{command} typing judgement with two contexts:
\[
    \Gamma; \Delta \vdash t \bang A
\]
The additional command terms are given by an arrow abstraction and application, and a unit and bind for arrow types.
\begin{mathpar}
    \inferrule{
    \Gamma; x:A \vdash t \bang B
    }{
    \Gamma \vdash \lambda^\bullet x.t : A \leadsto B
    }
    \and
    \inferrule{
    \Gamma \vdash u : A \leadsto B
    \and
    \Gamma, \Delta \vdash v : A
    }{
    \Gamma; \Delta \vdash u \bullet v \bang B
    }
    \\
    \inferrule{
    \Gamma, \Delta \vdash u : A
    }{
    \Gamma; \Delta \vdash \ret{u} \bang A
    }
    \and
    \inferrule{
    \Gamma; \Delta \vdash t \bang A
    \and
    \Gamma; \Delta, x:A \vdash s \bang B
    }{
    \Gamma; \Delta \vdash {\doin{x}{t}{s}} \bang B
    }
\end{mathpar}
Note the adjusted syntax for the return and bind terms for consistency with the languages of this paper. There are equations for beta and eta equality of the arrow abstraction, and the monad laws.
\begin{mathparpagebreakable}    
    \inferrule{\Gamma; x:A \vdash t \bang B \and \Gamma,\Delta \vdash u : A}{\Gamma; \Delta \vdash \lambda^\bullet x.t \bullet u \ = \ t[ u/ x] \bang B}
    \and
    \inferrule{\Gamma \vdash u : A \leadsto B}{\Gamma \vdash \lambda^\bullet x.(u \bullet x) \ = \ u : A \leadsto B}
    \and
    \inferrule{\Gamma, \Delta \vdash u : A
    \and
    \Gamma; \Delta, x:A \vdash t \bang B}
    {\Gamma; \Delta \vdash \doin{x}{\ret{u}}{t} \ = \ t[u / x] \bang B}
    \and
    \inferrule{\Gamma; \Delta \vdash t \bang B}{\Gamma; \Delta \vdash \doin{x}{t}{\ret{x}} \ = \ t \bang B}
    \and
    \inferrule{\Gamma; \Delta \vdash r \bang A
    \and
    \Gamma; \Delta, x:A \vdash s \bang B
    \and
    \Gamma; \Delta, y:B \vdash t \bang C}{
    \Gamma; \Delta \vdash \doin{x}{r}{(\doin{y}{s}{t})} \ = \ \doin{y}{(\doin{x}{r}{s})}{t} \bang C}
\end{mathparpagebreakable}
\newcommand{\AAA}{{\ACat\Cat(\ACat\op, \ACat)}}

Motivated by the similarities to the RMM, we provide a sound and complete denotational semantics for the arrow calculus using a refined notion of strong relative monads (\Cref{sec:j-strengths}).

\begin{definition} \label{def:arrow-model}
    An arrow calculus model consists of
    \begin{itemize}
        \item A Cartesian closed category $\ACat$
        \item A $W$-strong $J$-relative monad $(T,\eta, (-)^*)$
    \end{itemize}
    where $J$ and $W$ are defined as follows. If $\ACat$ is a Cartesian closed category, we write $\AAA$ for the category of $\ACat$-enriched functors $\ACat\op \to \ACat$ and enriched natural transformations~\cite{kelly}. Then $J,K: \ACat \to \AAA$ are defined by the \emph{enriched Yoneda embedding}, $JAB = A^{B}$, and the constant preseaf, $KAB = A$, respectively. Finally, $W: \ACat\times \ACat \to \AAA$ is defined as $W(A,B) = KA \times JB$ and intuitively captures the two contexts in the arrow calculus command judgements.
\end{definition}

We interpret judgements $\tj \Gamma u A$ as morphisms in $\ACat$ and the terms of the simply typed lambda calculus by their standard interpretation in the Cartesian closed category $\ACat$~\cite{jacobs_categorical_2001, lambek_introduction_1986}. The new type is interpreted as
\begin{equation}
    \den{A \leadsto B} = T\den{B} \den{A}\text, \label{eq:command-j}
\end{equation}
and the new typing judgement is interpreted as a morphism in $\ACat$. \[ \den{\Gamma; \Delta \vdash s \bang B} :\den{\Gamma} \to T\den{B}\den{\Delta} \]
The interpretation of the remaining terms is defined inductively.
\begin{align*}
    \den{\Gamma \vdash \lambda^\bullet x.t : A \leadsto B} &= \den{\Gamma; x:A \vdash t \bang B} \\
    \den{\Gamma; \Delta \vdash u \bullet v \bang B} &= \mathsf{ev}_{T\den{B}\den{\Delta},T\den{B}\den{A}} \circ (T\den{B}_{\den{A},\den{\Delta}}
    \circ \mathsf{curry}_{\den{\Gamma},\den{\Delta}, \den{A}} (\den{\Gamma, \Delta \vdash v: A}), \\
    &\qquad\quad \den{\Gamma \vdash u : A \leadsto B}) \\
    \den{\Gamma; \Delta \vdash \ret{u} \bang A} &= \chi_{\den{\Gamma},\den{\Delta},T\den{A}}(\eta_A \circ \chi_{\den{\Gamma}, \den{\Delta},J\den{A}}^{-1} (\mathsf{curry}_{\den{\Gamma},\den{\Delta},\den{A}} (\den{\Gamma, \Delta\vdash u : A}))) \\
    \den{\Gamma; \Delta \vdash \doin{x}{t}{s} \bang B} &= \chi_{\den{\Gamma},\den{\Delta},T{\den{B}}}\Big(\Big(\chi_{\den{\Gamma},\den{\Delta}\times{\den{A}}, T\den{B}}^{-1}(\den{\Gamma; \Delta, x:A \vdash s \bang B}) \\
    & \qquad  \circ (\id_{K\den{\Gamma}} \times \kappa_{\den{\Delta}, \den{A}})\Big)^* \circ \big(\id_{K\den{\Gamma} \times J \den{\Delta}}, \chi_{\den{\Gamma},\den{\Delta}, T\den{A}}^{-1}(\den{\Gamma; \Delta \vdash t \bang A})\big)\Big)
\end{align*}
Here, $\mathsf{ev}_{A,B}: A^B \times B \to A$ is the evaluation morphism given by the closed structure, $\mathsf{curry}_{A,B,C}: \ACat(A\times B, C) \to \ACat(A, C^B)$ is the closed structure adjunction, $TA_{B,C} : B^C \to TAC^{TAB}$ is the enriched functor components, $\kappa_{A,B} \colon JA\times JB\to J(A\times B)$ is the canonical map, and $\chi_{A,B,F}: \AAA(KA \times JB, F)  \xrightarrow{\sim}
 \ACat(A, FB)$ is the adjunction given by $\theta \mapsto \lambda a. \theta_B(a, \lambda b.b)$ in the forward direction and $f \mapsto \theta$, where $\theta_C = \mathrm{ev}_{FC,FB} \circ (F_{B,C} \circ \pi_{B^C}, f\circ \pi_A) : A \times B^C \to FC$, in the reverse direction.
\begin{theorem} \label{thm:completeness-arrow}
    The models of the arrow calculus are a sound and complete semantics.
\end{theorem}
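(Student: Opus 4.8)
The plan is to prove soundness and completeness separately, obtaining completeness from a syntactic term model in the style of \Cref{prop:URMM-completeness}.

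For soundness I would argue by structural induction on derivations, checking that the interpretation validates each arrow-calculus equation. The $\beta$- and $\eta$-laws for $\lambda^\bullet$/$\bullet$ reduce to the statement that $\chi_{\den\Gamma,\den\Delta,F}$ is a natural bijection together with \eqref{eq:command-j}: the abstraction and application terms are interpreted as the two directions of this adjunction, so their composites collapse by the triangle identities. The three monad laws for the sequencing term $\doin{x}{t}{s}$ should translate, after cancelling the $\chi$- and $\kappa$-bookkeeping in the interpretations of $\ret{u}$ and $\doin{x}{t}{s}$, into exactly the three defining equations of a $W$-strong $J$-relative monad from \Cref{sec:j-strengths}. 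The two contexts $\Gamma;\Delta$ are precisely what the $W$-strength accommodates: the pure context $\Gamma$ is carried by the constant part $K\den\Gamma$ and the linear context $\Delta$ by the enriched-Yoneda part $J\den\Delta$, matching $W(\Gamma,\Delta)=K\Gamma\times J\Delta$. I would first prove a substitution lemma for both the term and command judgements, since the arrow $\beta$-law and the left-unit law are both stated using substitution.

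For completeness I would build the term model. Let $\ACat$ be the classifying Cartesian closed category of the simply typed fragment: its objects are arrow-calculus types (including $\leadsto$-types) and its morphisms $A\to B$ are equivalence classes of terms $x:A\vdash u:B$ under provable equality, with cartesian closure supplied by the usual simply typed lambda-calculus structure. On this $\ACat$ I take $J,K,W$ exactly as in \Cref{def:arrow-model} and define the candidate relative monad on objects by $TB\,A=(A\leadsto B)$, with the contravariant enriched reindexing of $TB$ along a map $A'\to A$ encoded by arrow abstraction. The unit $\eta$ and the extension $(-)^*$ are read off from the command terms $\ret{u}$ and $\doin{x}{t}{s}$ through the bijection $\chi$. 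The design ensures that in this model $\den{\Gamma;\Delta\vdash t\bang B}$ is the $\chi$-transpose of the equivalence class of $t$ itself, so two commands are identified in the term model if and only if they are provably equal; since any equation valid in all models holds in particular here, completeness follows.

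The main obstacle is verifying that this syntactic $T$ is genuinely a $W$-strong $J$-relative monad, which splits into two layers. First one must show that $TB$ is a well-defined $\ACat$-enriched functor $\ACat\op\to\ACat$, i.e. that the arrow-abstraction reindexing is functorial and enriched; this rests on the $\beta\eta$-laws for $\lambda^\bullet$/$\bullet$. Second, and more delicate, one must confirm that the extension maps extracted from $\doin{x}{t}{s}$ are natural in the object of $\ACat\times\ACat$ and satisfy the three $W$-strength equations, with the associator and the $\kappa$-coherence cells of \Cref{sec:j-strengths} matching the bracketing and context weakening in the syntactic associativity law. Because the interpretations of $\ret{u}$ and $\doin{x}{t}{s}$ are laden with $\chi$, $\kappa$, $\mathsf{curry}$ and $\mathsf{ev}$, the bulk of the proof is a careful but routine diagram chase in which these coherence isomorphisms cancel to leave precisely the arrow-calculus equations; I expect no conceptual difficulty here once the substitution lemma and the naturality of $\chi$ are in hand, only sustained bookkeeping.
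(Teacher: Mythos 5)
Your proposal matches the paper's proof in both halves: soundness via substitution lemmas for the term and command judgements followed by structural induction, and completeness via the term model whose Cartesian closed structure comes from the simply typed fragment, with $({-}\leadsto B)$ as the object mapping, arrow abstraction giving the enriched functorial action, and the $\chi$-transposes of the $\ret{u}$ and $\doin{x}{t}{s}$ equivalence classes supplying the $W$-strong $J$-relative monad structure. This is essentially the same argument, with your additional remarks on verifying the $W$-strength axioms making explicit what the paper leaves implicit.
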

\begin{proof}
Soundness is deduced from the following substitution lemmas, which are proved by structural induction.
\begin{align*}
    \sem{\Gamma \vdash u[v / b] : B} &= \sem{\Gamma, a:A \vdash u : B} \circ (\id_{\sem{\Gamma}}, \sem{\Gamma \vdash v : A}) \\
    \sem{\Gamma; \Delta \vdash t[u / a] \bang B} &= \sem{\Gamma, a : A; \Delta \vdash t \bang B} \circ (\id_{\sem{\Gamma}}, \sem{\Gamma \vdash u : A}) \\
    \sem{\Gamma; \Delta \vdash t[u / a] \bang B} &= \chi_{\sem{\Gamma}, \sem{\Delta}, T\sem{B}}\Big( \chi^{-1}_{\sem{\Gamma}, \sem{\Delta}\times \sem{A}, T\sem{B}}(\sem{\Gamma; \Delta, a:A\vdash t \bang B}) \\
    &\qquad \circ (\pi_{K\sem{\Gamma}}, \kappa_{\sem\Delta, \sem A} \circ (\pi_{J\sem{\Delta}}, \chi^{-1}_{\sem{\Gamma}, \sem{\Delta}, J\sem{A}} (\mathsf{curry}_{\den{\Gamma},\den{\Delta},\den{A}} (\sem{\Gamma, \Delta \vdash u : A})))) \Big)
\end{align*}
Completeness is proved via the term model construction. The simply typed lambda calculus terms give the Cartesian closed structure \cite{jacobs_categorical_2001, lambek_introduction_1986}. The object mapping for the $W$-strong $J$-relative monad is $({-}\leadsto B)$, where the enriched functorial action is given by $\lambda^\bullet a.\aapp{t}{\app{f}{a}}$ for $f : A'\to A$ and $t: A\leadsto B$. Next, every equivalence class of terms $[\Gamma; \Delta \vdash t \bang B]$ corresponds under the adjunction $\chi$ to a natural map $K\sem\Gamma \times J\sem\Delta \to ({-}\leadsto \sem B)$. The equivalence classes of $\ret{t}$ and $\doin{x}{p}{q}$ terms in contexts $\Gamma; \Delta$ under the adjunction then define the $W$-strong $J$-relative monad structure.
\end{proof}

\newcommand{\coerce}[1]{{\mathsf{coerce} \ #1}}
\subsection{The arrow relative monadic metalanguage (ARMM)} \label{sec:armm-rmm}

Identifying the models of the arrow calculus as a $W$-strong relative monad suggests that it can be extended to the framework of RMM. We begin by identifying the key features of the models in \Cref{def:arrow-model} to achieve this. The command judgement~\eqref{eq:command-j}, while interpreted as a morphism in $\ACat$, is primarily used in the semantics as a morphism in $\AAA$. The critical link is via the following isomorphism:
\begin{equation}
    \AAA(KA\times JB, TC)\cong \ACat(A, TCB). \label{eq:aaa-iso}
\end{equation}

The objects in $\AAA$ can be thought of as arrow types with parametric input. In \Cref{def:arrow-model}, the semantics of the types $JA$ is given by the enriched Yoneda embedding, and so can be understood as pure arrow types with output $A$. Types $KA$ have semantics via constant presheaves and do not interact with the relative monad structure. The role of these types is via \eqref{eq:aaa-iso}; they permit the movement of variables between $\ACat$ and $\AAA$ contexts. Finally, types $TA$ are effectful arrow types with output $A$, parametric over some input.

However, the role of $\AAA$ is not essential apart from providing canonical interpretations of $J$ and $K$. We can therefore abstract the explicit category $\AAA$ to an arbitrary Cartesian category $\CCat$, while maintaining Cartesian functors $J, K: \ACat \to \CCat$ satisfying certain conditions. In particular, we require that parametric arrow types in $\CCat$ can be abstracted over a pure input (capturing the isomorphism \eqref{eq:aaa-iso}), and abstracting pure arrows results in function types:
\begin{align}
    \CCat(KA \times JB, X) &\cong \ACat(A, B\Rightarrow X) \label{eqn:arr-abs} \\
    \ACat(A \times B, C) &\cong \ACat(A, B\Rightarrow JC) \label{eqn:fun-abs}
\end{align}
Then, we can define the original arrow types as $A \leadsto B ::= A \Rightarrow TB$ and function types as $A \to B ::= A \Rightarrow JB$. We additionally require a natural coercion from $K$ to $J$ to allow variables in the $J$ context to be moved to variables in the $K$ context. This is admissible in the arrow calculus and is semantically validated by the canonical map from the Yoneda embedding to the constant presheaf functor. It is derivable from \eqref{eqn:arr-abs} and \eqref{eqn:fun-abs} by applying the following isomorphism to the identity morphism.
\begin{displaymath}
    \ACat(A, A) \cong \ACat(A \times 1, A) \cong \ACat(A, 1 \Rightarrow JA) \cong \CCat(KA \times J1, JA) \cong \CCat(KA, JA)
\end{displaymath}
We thus define the arrow relative monadic metalanguage (ARMM) as follows.
\begin{alignat*}{4}
    &\text{$\ACat$ types:} \quad & A,B &::= 1 ~|~ A\times B ~|~ A\Rightarrow X & \\
    &\text{$\CCat$ types:} \quad & X,Y &::= 1 ~|~ X\times Y ~|~ JA ~|~ KA ~|~ TA & \\
    &\text{$\ACat$ contexts:} \quad & \Gamma, \Delta &::= \emptyset ~|~ \Gamma, a:A & \\
    &\text{$\CCat$ contexts:} \quad & \Phi &::= \emptyset ~|~ \Phi, x:X &
\end{alignat*}
\newcommand{\ej}[5]{#1;#2;#3 \vdash_{\CCat} #4:#5}

The ARMM terms for $\ACat$-types are the standard terms for a Cartesian category. The ARMM terms for $\CCat$-types beyond the standard Cartesian ones are:
\begin{mathpar}
    \inferrule{\ej \Gamma \Delta \Phi {t} {KA} 
    \and
    \ej {\Gamma, a:A} {\Delta} {\Phi} {s} {X}}{
    \ej \Gamma \Delta {\Phi} {\letin{K(a)}{t}{s}} X
    }
    \and
    \inferrule{\ej \Gamma \Delta \Phi {t} {JA} 
    \and
    \ej {\Gamma} {\Delta, a:A} {\Phi} {s} {X}}{
   \ej \Gamma \Delta {\Phi} {\letin{J(a)}{t}{s}} X
    }
    \and
    \inferrule{\ej \Gamma \Delta \Phi t {JA}}{\ej \Gamma \Delta \Phi {\ret{t}} {TA}}
    \and
    \inferrule{\ej \Gamma \Delta {\Phi} t {TA} 
    \and
    \ej {\Gamma} {\Delta} {x:JA} s {TB}
    }{
    \ej \Gamma \Delta \Phi {\doin{x}{t}{s}} {TB}
    }
\end{mathpar}
Note that $\CCat$ judgements have three contexts. The first two typing rules validate that context $\Gamma$ is a $K$ context and context $\Delta$ is a $J$ context, similar to the style of the arrow calculus. The third context $\Phi$ can have arbitrary $\CCat$ types, such as computations, and permits categorical semantics via the general RMM framework. The latter two typing rules are the standard $J$-relative monad terms for $T$, where binding is restricted to $K$ and $J$ contexts. The ARMM terms for the interaction between $\ACat$-types and $\CCat$-types are:
\begin{mathpar}
    \inferrule{\aj \Gamma u {A\Rightarrow X} \and \aj {\Gamma,\Delta} {v} A}{\ej \Gamma \Delta \Phi {\aapp u v} {X}}
    \and
    \inferrule{\ej \Gamma {a:A} \emptyset t X}{\aj \Gamma {\aabs a {A} t} {A\Rightarrow X}}
    \\
    \inferrule{\aj {\Gamma} u A}{\ej \Gamma \Delta \Phi {K(u)} {KA}}
    \and
    \inferrule{\aj {\Gamma,\Delta} u A}{\ej {\Gamma} {\Delta} \Phi {J(u)} {JA}}
    \and
    \inferrule{\aj {\Gamma} u {A \Rightarrow JB} \and
    \aj \Gamma v A}{\aj {\Gamma} {\fapp{u}{v}} {B}}
\end{mathpar}
The full term calculus and equational theory of ARMM is given in \Cref{app:armm-equations}.
Regular application and abstraction under the derivable function type $A \to B ::= A \Rightarrow JB$ are:
\begin{mathpar}
    \inferrule{\aj \Gamma u {A \to B} \and \aj \Gamma v A}{\aj \Gamma {\fapp{u}{v}} B}
    \and
    \inferrule{\aj {\Gamma, a:A} u {B}}{\aj \Gamma {\aabs{a}{A}{J(u)}} {A \to B}}
\end{mathpar}

\begin{definition}A model of ARMM consists of 
\begin{itemize}
    \item Cartesian categories $\ACat$ and $\CCat$
    \item Cartesian functors $J,K : \ACat \to\ \CCat$
    \item A $W$-strong $J$-relative monad $(T,\eta,(-)^*)$, where $W(A,B) = KA \times JB$.
    \item An adjunction $(K{-}\times JA) \dashv (A \Rightarrow {-})$ for each $A \in \ob{\ACat}$.
    \item An adjunction $({-}\times A) \dashv (A \Rightarrow J{-})$ for each $A \in \ob{\ACat}$.
\end{itemize}
\end{definition}

The $\ACat$ judgements are interpreted by morphisms in $\ACat$ as usual, while the $\CCat$ judgements are given by morphisms in $\CCat$:
    \begin{displaymath}
        \den{\ej \Gamma \Delta \Phi t X} : K\den{\Gamma} \times J\den{\Delta} \times \den{\Phi} \to \den{X}
    \end{displaymath}
    The non-standard interaction terms are interpreted inductively as:
    \begin{align*}
        \den{\ej \Gamma \Delta \Phi {\aapp{u}{v}} X} &= \chi_{\den{\Gamma},\den{A}, \den{X}}^{-1}(\den{\aj \Gamma u {A\Rightarrow X}}) \\
        & \qquad  \circ (\pi_{K\den{\Gamma}},J\den{\aj {\Gamma,\Delta} {v} A} \circ \kappa^J_{\den{\Gamma},\den{\Delta}} \circ (\psi_{\den{\Gamma}} \circ \pi_{K\den{\Gamma}}, \pi_{J\den{\Delta}})) \\
        \den{\aj \Gamma {\aabs{x}{A}{t}} A \Rightarrow X} &= \chi_{\den{\Gamma}, \den{A}, \den{X}}(\den{\ej \Gamma {x:A} \emptyset t X}) \\
        \den{\ej \Gamma \Delta \Phi {K(u)} {KA}} &= K\den{\aj \Gamma u A} \circ \pi_{K\den{\Gamma}} \\
        \den{\ej \Gamma \Delta \Phi {J(u)} {JA}} &= J\den{\aj {\Gamma,\Delta} u A} \circ \kappa^J_{\den{\Gamma},\den{\Delta}} \circ (\psi_{\den{\Gamma}} \circ \pi_{K\den{\Gamma}}, \pi_{J\den{\Delta}}) \\
        \den{\aj \Gamma {\fapp{u}{v}} B} &= \mathsf{curry}_{\den{\Gamma},\den{A}, \den{B}}^{-1}(\den{\aj \Gamma u {A \Rightarrow JB}}) \circ (\id_\den{\Gamma}, \den{\aj \Gamma v A})
    \end{align*}
    where $\chi_{A,B,X}: \CCat(KA\times JB, X) \to \ACat(A, B\Rightarrow X)$ and $\mathsf{curry}_{A,B,C}: \ACat(A\times B, C) \to \ACat(A, B \Rightarrow JC)$ are the components of the adjunctions, respectively, $\psi_A : KA \to JA$ are the components of the natural coercion, and $\kappa^J : JA\times JB \to J(A\times B)$ and $\kappa^K : KA\times KB \to K(A\times B)$ are the canonical isomorphisms.

\begin{example} \label{ex:armm-model}
    If $\ACat$ is a Cartesian closed category and $T : \ACat \to \ACat\Cat(\ACat\op,\ACat)$ is a $W$-strong $J$-relative monad, then $(\ACat,J,K,T)$ is a model of ARMM where $J$ and $K$ are as in \Cref{def:arrow-model}.
\end{example}
\begin{theorem} \label{thm:completeness-armm}
    The models of ARMM are a sound semantics.
\end{theorem}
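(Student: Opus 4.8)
The plan is to follow the pattern of the soundness argument for \Cref{thm:completeness-arrow}: establish a family of substitution lemmas by structural induction, and then check that each axiom of the equational theory in \Cref{app:armm-equations} is validated by the interpretation. The inference rules for reflexivity, symmetry, transitivity, and congruence are handled uniformly, since equal denotations are preserved under composition and under each term-forming operation of the interpretation. Soundness therefore reduces to these two ingredients, and much of the required computation can be imported from \Cref{thm:completeness-arrow}, because the arrow calculus term model there is built from exactly the same $W$-strong $J$-relative monad data that an ARMM model provides.

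First I would prove the substitution lemmas. Because $\CCat$-judgements $\ej \Gamma \Delta \Phi t X$ carry three contexts, there are several cases: substitution of an $\ACat$-term into an $\ACat$-term; substitution for a variable in the $K$-context $\Gamma$; substitution for a variable in the $J$-context $\Delta$; and substitution of a $\CCat$-term for a variable in $\Phi$. The first is the standard Cartesian substitution lemma. Substitution into $\Gamma$ is comparatively mild: since $K$ is a Cartesian functor and $\Gamma$ enters only through $K\den{\Gamma}$, it amounts to precomposition with $K$ applied to a tupling, together with naturality of the coercion $\psi$ and of $\kappa^K$. Substitution into $\Phi$ follows directly from the universal property of the product and the inductive hypothesis. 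The delicate case, and the \emph{main obstacle}, is substitution for a variable in the $J$-context $\Delta$: here the variable is routed through $J\den{\Delta}$, the canonical isomorphism $\kappa^J$, the coercion $\psi$, and the adjunction $\chi$, and in the $\ret{u}$- and $\doin{x}{t}{s}$-clauses it additionally meets the Kleisli extension $(-)^*$. Proving this lemma requires carefully matching its statement to the $\Delta$-substitution lemma of \Cref{thm:completeness-arrow} and using naturality of $\chi$ in its middle argument, together with the strength unit law $f^* \circ (W\Gamma \otimes \eta_A) = f$, to commute the substitution past the extension operator.

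With the substitution lemmas in hand, verifying the axioms splits into three groups. The Cartesian equations for $\ACat$- and $\CCat$-types hold by the usual universal-property calculations. The $\beta$- and $\eta$-laws for arrow abstraction $\aabs{a}{A}{t}$ and application $\aapp{u}{v}$, as well as those for the derivable function type $A\to B ::= A\Rightarrow JB$, follow from the triangle identities of the two adjunctions $(K{-}\times JA)\dashv (A\Rightarrow{-})$ and $({-}\times A)\dashv(A\Rightarrow J{-})$, i.e.\ the invertibility and naturality of $\chi$ and $\mathsf{curry}$, combined with the $J$-context substitution lemma. Finally, the three monad laws governing $\ret{u}$ and $\doin{x}{t}{s}$ translate, under the isomorphism $\chi$ of \eqref{eq:aaa-iso}-type form and the natural map $\psi$, into precisely the three defining equations of a $W$-strong $J$-relative monad; hence they are validated in any model essentially by definition. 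This last step most directly reuses the calculation behind \Cref{thm:completeness-arrow}, so the work lies almost entirely in setting up the correct $J$-context substitution lemma described above.
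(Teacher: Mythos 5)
Your proposal follows essentially the same route as the paper's proof: establish the four substitution lemmas (for $\ACat$-terms, the $K$-context $\Gamma$, the $J$-context $\Delta$, and the $\CCat$-context $\Phi$) by structural induction, then verify the equations of \Cref{app:armm-equations} case by case, with the adjunction $\chi$ yielding the $\beta$ and $\eta$ laws for arrow abstraction and application and the $W$-strong $J$-relative monad equations yielding the monad laws. Your additional commentary on which substitution case is delicate is consistent with, and a reasonable elaboration of, the paper's terser argument.
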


\begin{proof}
Soundness is shown via structural induction on the following substitution lemmas.
\begin{align*}
    \sem{\Gamma \vdash u[v / b] : B} &= \sem{\Gamma, a:A \vdash u : B} \circ (\id_{\sem{\Gamma}}, \sem{\Gamma \vdash v : A}) \\
    \sem{\ej \Gamma \Delta \Phi {t[u / a]} X} &= \sem{\ej {\Gamma, a : A} \Delta \Phi t X} \\
    &\qquad \circ (\kappa^K_{\sem\Gamma, \sem A} \circ (\pi_{K\sem\Gamma}, K\sem{\aj \Gamma u A} \circ \pi_{K\sem\Gamma}), \pi_{J\sem\Delta}, \pi_{\sem{\Phi}}) \\
    \sem{\ej \Gamma \Delta \Phi {t[u / a]} X} &= \sem{\ej {\Gamma} {\Delta, a : A} \Phi t X} \circ (\pi_{K\sem\Gamma}, \\
    &\qquad \kappa^J_{\sem\Delta, \sem A} \circ (\pi_{J\sem\Delta}, J\sem{\aj {\Gamma,\Delta} u A} \circ (\psi_{\sem\Gamma} \circ \pi_{K\sem\Gamma}, \pi_{J\sem\Delta})), \pi_{\sem\Phi}) \\
    \sem{\ej \Gamma \Delta \Phi {t[s / y]} X} &= \sem{\ej \Gamma \Delta \Phi {t[s / y]} X} \circ (\id_{K\sem\Gamma \times J\sem\Delta \times \sem\Phi}, \sem{\ej \Gamma \Delta \Phi s Y})
\end{align*}
The equations in \Cref{app:armm-equations} are also validated by structural induction. The adjunction $\chi$ gives the $\beta$ and $\eta$ laws for application and abstraction of arrows.
\end{proof}

With this result, we can show that our calculus subsumes the arrow calculus. The syntactic interpretation of their types is $\sden{A \to B} = (A \Rightarrow JB)$ and $\sden{A \leadsto B} = (A \Rightarrow TB)$. The command judgement \eqref{eq:command-j} is interpreted as
\begin{displaymath}
    \sden{\Gamma;\Delta \vdash t \bang A} \quad = \quad \ej \Gamma \Delta \emptyset t {TA}
\end{displaymath}
while their terms are interpreted as:
\begin{align*}
    \sden{\Gamma \vdash \lambda^\bullet x.t : A \leadsto B} &\quad=\quad \aj \Gamma {\aabs{x}{A}{t}} {A\Rightarrow TB} \\
    \sden{\Gamma; \Delta \vdash u\bullet v \bang B} &\quad=\quad \ej \Gamma \Delta \emptyset {\aapp{u}{v}} {TB} \\
    \sden{\Gamma; \Delta \vdash \ret{u} \bang A} &\quad=\quad \ej \Gamma \Delta \emptyset {\ret{J(u)}} {TA} \\
    \sden{\Gamma; \Delta \vdash \doin{x}{t}{s} \bang B} &\quad=\quad \ej \Gamma \Delta \emptyset {\doin{y}{t}{(\letin{J(x)}{y}{s}})} {TB}
\end{align*}

\begin{theorem} \label{thm:conservative-armm}
    The interpretation of the arrow calculus in ARMM is a conservative extension.
\end{theorem}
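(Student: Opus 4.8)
The plan mirrors the conservativity argument for graded monads (\Cref{thm:conservative-gmm}): I would split ``conservative extension'' into the \emph{extension} direction (arrow-calculus derivable equalities translate to ARMM-derivable ones) and the \emph{conservativity} direction (ARMM-derivable equalities between translations reflect back to arrow-calculus derivable ones). Because only soundness is available for ARMM (\Cref{thm:completeness-armm}), the two directions must be attacked by different means: the extension direction syntactically, and the conservativity direction semantically, exploiting the completeness of the arrow calculus (\Cref{thm:completeness-arrow}).

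For the extension direction, I would verify that $\sden{-}$ sends each axiom of the arrow calculus to a derivable equation of ARMM (\Cref{app:armm-equations}); congruence and the substitution lemmas of \Cref{thm:completeness-armm} then propagate this to all provable equalities. The $\beta$ and $\eta$ laws for $\lambda^\bullet$ follow from the abstraction and application laws for $\Rightarrow$, while the unit and associativity laws for $\mathsf{do}$ require unfolding the translation $\sden{\Gamma;\Delta\vdash\doin{x}{t}{s}\bang B}=\ej{\Gamma}{\Delta}{\emptyset}{\doin{y}{t}{(\letin{J(x)}{y}{s})}}{TB}$ and combining the linear-RMM $\mathsf{do}$ equations with the $\letin{J(-)}{-}{-}$ interaction equations.

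For conservativity, suppose $\sden{t}=\sden{u}$ is ARMM-derivable for arrow terms $t,u$. By completeness of the arrow calculus it suffices to show $\den{t}=\den{u}$ in every arrow-calculus model $(\ACat,T)$. Such a model is exactly the data required by \Cref{ex:armm-model}, so it induces an ARMM model on $\CCat=\AAA$ with the same $J$, $K$ and $T$. I would then prove a \emph{commutation lemma}: the arrow-calculus interpretation of each judgement agrees with the ARMM interpretation of its translation in the induced model. At the type level this is immediate, since there $A\Rightarrow X$ is interpreted by evaluation $X\den{A}$, giving $\sden{A\leadsto B}=A\Rightarrow TB\mapsto T\den{B}\den{A}$ and $\sden{A\to B}=A\Rightarrow JB\mapsto \den{B}^{\den{A}}$, matching \eqref{eq:command-j}. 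At the term level, an arrow command $\den{\Gamma;\Delta\vdash s\bang B}\colon\den{\Gamma}\to T\den{B}\den{\Delta}$ must agree with $\den{\sden{\Gamma;\Delta\vdash s\bang B}}\colon K\den{\Gamma}\times J\den{\Delta}\times 1\to T\den{B}$ under the adjunction $\chi$ of \eqref{eq:aaa-iso}. Granting the lemma, ARMM soundness gives $\den{\sden{t}}=\den{\sden{u}}$ in the induced model, whence $\den{t}=\den{u}$ in $(\ACat,T)$; as the model was arbitrary, completeness of the arrow calculus yields $t=u$.

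The main obstacle is the commutation lemma, which I would establish by structural induction on arrow-calculus derivations. The variable, product, and $\lambda^\bullet$ cases reduce to naturality of $\chi$ and of currying, but the application $u\bullet v$ and the $\mathsf{do}$ cases are delicate: both the explicit arrow-calculus interpretations preceding \Cref{thm:completeness-arrow} and the ARMM interpretations route through several adjunction isomorphisms and the relative-monad extension $(-)^*$. The crucial verification is that the $\letin{J(x)}{y}{-}$ inserted by the translation of $\mathsf{do}$ implements precisely the conversion between the arrow-calculus value variable $x:A$ of the $\Delta$-context and the ARMM variable $y:JA$ of the $\CCat$-context, so that the single underlying extension $(-)^*$ of $T$ is applied consistently on both sides. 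Once this is matched, the remaining reconciliation of $\kappa$, $\mathsf{ev}$, and the enriched functor components $T\den{B}_{-,-}$ is routine bookkeeping.
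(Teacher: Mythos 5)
Your proposal is correct and follows essentially the same route as the paper's (much terser) proof: both hinge on \Cref{ex:armm-model} turning every arrow-calculus model into an ARMM model, an agreement lemma stating that the arrow-calculus interpretation coincides (via $\chi$) with the ARMM interpretation of the translation, and then soundness of ARMM combined with completeness of the arrow calculus via its term model. The only cosmetic differences are that you quantify over all arrow-calculus models and invoke \Cref{thm:completeness-arrow} abstractly where the paper applies the argument directly to the term model, and that you additionally spell out the extension direction, which the paper leaves implicit.
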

\begin{proof}
    From \Cref{ex:armm-model}, all models of the arrow calculus $(\ACat, T)$ are models of ARMM. Furthermore, the semantics of the interpretation of an arrow calculus term coincides, or corresponds via $\chi$, with its semantics as a model of the arrow calculus. Hence, the term model must validate all the equations of ARMM.
\end{proof}

\begin{remark}
The ARMM calculus could be generalised further by allowing binds in arbitrary contexts.
\begin{mathpar}
    \inferrule{\ej \Gamma \Delta {\Phi} t {TA} 
    \and
    \ej {\Gamma} {\Delta} {\Phi, x:JA} s {TB}
    }{
    \ej \Gamma \Delta \Phi {\doin{x}{t}{s}} {TB}
    }
\end{mathpar}
This would make the models consist of a strong $J$-relative monad, rather than a $W$-strong $J$ relative monad. We cannot apply \Cref{prop:J-strength-is-strong} to equate these notions of strength because $\ACat$ is not cocomplete. However, an analogous conservative extension result may be possible by embedding~$\ACat$ into a cocomplete category, such as $\widehat{\ACat}$.
\end{remark}

\section{Conclusion and future work}

We have introduced a language RMM for strong relative monads and demonstrated a significant advantage to working with term calculi based on it. RMM generalises the monadic metalanguage~\cite{moggi_notions_1991} in two directions:
\begin{itemize}
    \item It allows us to \textit{restrict} our notion of computation by programming with $\CCat$ types that only allow computations on another sort of types $\ACat$. This was done in \Cref{sec:finitary} to restrict to computations on \textit{finite} datatypes and thus provide a more fine-grained approach to computation.
    \item It allows us to \textit{expand} our notion of computation. By programming with $\ACat$ types with \textit{partial access} to an additional sort $\CCat$, it permits a broader notion of computations than what is allowed in $\ACat$. We saw in \Cref{sec:gmm} and \Cref{sec:armm} that this allowed programming with cost-sensitive computations through graded monads and a wider variety of effects through arrows.
\end{itemize}
In particular, the languages LNL-RMM introduced in \Cref{sec:gmm} for graded monads and ARMM in \Cref{sec:armm} for arrows:
\begin{enumerate}
    \item Are no less general than their predecessors since we prove conservative extension theorems, \Cref{thm:conservative-gmm} and \Cref{thm:conservative-armm}.
    \item Are more expressive than their predecessors, where, in particular, LNL-RMM has grade types in the language, which allows for more flexible programming.
    \item Come with built-in sound denotational semantics from the RMM, as shown in \Cref{thm:completeness-lnl-rmm} and \Cref{thm:completeness-armm}.
\end{enumerate}
The denotational semantics of all the example languages in the paper, including the new semantics provided for the arrow calculus (\Cref{thm:completeness-arrow}), was possible due to the notion of strong relative monad we introduced in \Cref{sec:strong-strong}, and its generalisations in \Cref{sec:j-strengths}. We justified that this definition of strong relative monads is the appropriate one for semantics of programming languages by establishing its equivalence to enriched relative monads (\Cref{thm:hat-strong-corresp}). We also extended the broader literature on relative monads~\cite{arkor_formal_2024, arkor_nerve_2024, arkor_relative_2025, tarmo-grm} to facilitate the design of these languages.

\paragraph{Future work.}
We have focused on two applications of relative monads: graded monads and arrows.
The positive results suggest that other applications of relative monads will also benefit from taking the relative monadic metalanguage as a starting point. One example is the \textit{simple relational specification monads}~\cite{maillard_next_2020}, which are (strong) relative monads on the product functor $\Set \times \Set \to \Set$. 

Since we have shown that the relative monadic metalanguage is a unifying starting point for
diverse aspects of programming languages, we envisage that more advanced methods from monad theory will benefit from a general analysis in terms of relative monads. We have in mind methods from program logics, such as monad liftings for logical relations~\cite{katsumata_semantic_2005,shin-ya_katsumata_codensity_2018}, which have not yet been investigated from the relative point of view. 

\paragraph{Acknowledgements.}
This work was funded in part by ERC Grant BLAST, AFOSR under award number FA9550-21-1-0038, grants from ARIA Safeguarded AI, and the Clarendon Scholarship.

We thank the anonymous reviewers and many people for helpful discussions and suggestions, including Dylan McDermott, Nathanael Arkor and Nathan Corbyn.

\bibliographystyle{ACM-Reference-Format}
\bibliography{refs}

\appendix

\section{Bistrong relative monads on non-symmetric monoidal categories} \label{app:bistrength}

If $\CCat$ is a non-symmetric monoidal category, \Cref{def:str-rel-monad} no longer allows sequencing of computations inside arbitrary contexts. It permits a bind for terms with contexts of the form $\Gamma,x:JA$, but not the generic form of $\Gamma,x:JA,\Gamma'$. In this more general setting, the notion described in \Cref{def:str-rel-monad} is referred to as a \emph{left strong} monad. A \emph{right strong} monad can be defined analogously, and formally understood as a left strong monad on the \emph{reverse monoidal category}.
\begin{definition}
    If $\CCat$ is a monoidal category, its reverse $\CCat\rev$ is defined to have the same underlying category, but $X\otimes\rev Y= Y\otimes X$, $\rho\rev=\lambda$, $\lambda\rev=\rho$ and $\alpha\rev=\alpha^{-1}$
\end{definition}
 In the non-relative case, when $\CCat$ is not a symmetric monoidal category, it is common to consider a \emph{bistrong} monad on $\CCat$. Usually, a bistrength for a monad is presented as a compatible left and right strength~\cite{dylan-tarmo-strong}. However, as we have seen, strength maps are insufficient for relative monads. The Kleisli form of a bistrong relative monad provides the more suitable notion. The following is an instantiation of a \textit{multicategorical strong relative monad}~\cite{slattery-thesis} for a representable multicategory~\cite{hermida_representable_2000}:

\begin{definition} \label{def:bistrong-relative}
    A \emph{bistrong $J$-relative monad} $(T, \eta, (-)^*)$, consists of:
    \begin{itemize}
        \item An object $TA\in \ob{\CCat}$ for each $A \in \ob{\ACat}$,
        \item A morphism $\eta_A: A \to TA$ for each $A \in \ob{\ACat}$,
        \item A collection of maps $(-)^*: \CCat(\Gamma \otimes (JA\otimes \Delta),TB)\to \CCat(\Gamma\otimes (TA\otimes \Delta),TB)$ natural in $\Gamma,\Delta\in \CCat$
    \end{itemize}
    all such that for $f: \Gamma\otimes (JA\otimes \Delta) \to TB$ and $g: \Gamma'\otimes (JB\otimes \Delta') \to TC$, the following equations hold:
    \begin{align*}
        (\eta_A \circ \rho_{JA} \circ \lambda_{JA\otimes I})^* &= \rho_{TA}\circ \lambda_{TA \otimes I} \\
        f^*\circ (\Gamma\otimes (\eta_A\otimes \Delta)) &= f \\
        g^* \circ (\Gamma'\otimes (f^*\otimes \Delta')) \circ \widetilde{\alpha}_{\Gamma',\Gamma,TA,\Delta,\Delta'} &= (g^* \circ (\Gamma'\otimes (f\otimes \Delta')) \circ \widetilde{\alpha}_{\Gamma',\Gamma,JA,\Delta,\Delta'})^*
    \end{align*}
    where $\widetilde{\alpha}_{\Gamma',\Gamma,X,\Delta,\Delta'} = (\Gamma'\otimes (\alpha^{-1}_{\Gamma,X\otimes \Delta,\Delta'}\circ (\Gamma\otimes \alpha^{-1}_{X,\Delta,\Delta'}))) \circ \alpha_{\Gamma',\Gamma,X\otimes (\Delta\otimes \Delta')}$.
    
    When $\CCat$ is symmetric with symmetry $\sigma$, a bistrong $J$-relative monad is additionally said to be \textit{symmetric} when for $f:(\Gamma\otimes \Delta)\otimes (JA\otimes \Gamma') \to TB$,
    \[
        f^* \circ \widetilde\sigma_{\Gamma,TA,\Delta,\Gamma'} = (f \circ \widetilde\sigma_{\Gamma,JA,\Delta,\Gamma'})^*
    \]
    where $\widetilde{\sigma}_{\Gamma,X,\Delta,\Gamma'} : \Gamma\otimes (X\otimes (\Delta\otimes \Gamma')) \to (\Gamma\otimes \Delta) \otimes (X\otimes \Gamma')$ is given by
    \[
        \widetilde{\sigma}_{\Gamma,X,\Delta,\Gamma'} = \alpha^{-1}_{\Gamma,\Delta,{X\otimes \Gamma'}}\circ(\Gamma\otimes (\alpha_{\Delta,X,\Gamma'}\circ (\sigma_{X,\Delta}\otimes \Gamma') \circ\alpha^{-1}_{X,\Delta,\Gamma'})).
    \]

A \textit{morphism of (symmetrically) bistrong $J$-relative monads} $ \gamma : (T,\eta,(-)^*) \to (S,\nu, (-)^\dagger)$ consists of a family of maps $\gamma_A : TA \to SA$ satisfying, for any $f : \Gamma\otimes (JA\otimes \Delta)  \to TB$,
\begin{mathpar}
    \nu_A = \gamma_A \circ \eta_A
    \and
    \gamma_B \circ f^* = (\gamma_B \circ f)^\dagger \circ  (\Gamma\otimes (\gamma_A\otimes \Delta))
\end{mathpar}
We now write $\mathbf{SMon}_B(J)$, $\mathbf{SMon}_L(J)$, and $\mathbf{SMon}_R(J)$ for the categories of bistrong, left strong, and right strong $J$-relative monads and morphisms, respectively.
\end{definition}
There are forgetful functors: 
\[\begin{tikzcd}
	& {\mathbf{SMon}_B(J)} \\
	{\mathbf{SMon}_L(J)} && {\mathbf{SMon}_R(J)} \\
	& {\mathbf{Mon}(J)}
	\arrow["{U_L}"', from=1-2, to=2-1]
	\arrow["{U_R}", from=1-2, to=2-3]
	\arrow[from=2-1, to=3-2]
	\arrow[from=2-3, to=3-2]
\end{tikzcd}\]
which commute with the \textit{underlying $J$-relative monad} functors. When $\CCat$ is symmetric monoidal, and we restrict $\mathbf{SMon}_B(J)$ to the full subcategory of symmetrically bistrong $J$-relative monads, $U_L$ and $U_R$ become isomorphisms.
\begin{remark}
    For a jointly left and right strong $J$-relative monad $(T,\eta,(-)^{*_L},(-)^{*_R})$ (such that the underlying $J$-relative monads coincide), we can formulate a compatibility condition for any $f:\Gamma\otimes JA \to JB$, $g:JA\otimes \Delta \to JC$, $h:JB\otimes \Delta \to TD$, and $k: \Gamma\otimes JC \to TD$:
    \[\begin{tikzcd}[column sep=2.25em]
        {(\Gamma\otimes JA)\otimes \Delta} & {\Gamma\otimes (JA\otimes \Delta)} \\
        & {\Gamma\otimes JC} \\
        {JB\otimes \Delta} & TD
        \arrow["{\alpha_{\Gamma,JA,\Delta}}", from=1-1, to=1-2]
        \arrow["{{{f\otimes \Delta}}}"', from=1-1, to=3-1]
        \arrow["{{{\Gamma\otimes g}}}", from=1-2, to=2-2]
        \arrow["k", from=2-2, to=3-2]
        \arrow["h"', from=3-1, to=3-2]
    \end{tikzcd} \implies \begin{tikzcd}[column sep=2.25em]
        {(\Gamma\otimes TA)\otimes \Delta} & {\Gamma\otimes (TA\otimes \Delta)} \\
        & {\Gamma\otimes TC} \\
        {TB\otimes \Delta} & TD
        \arrow["{\alpha_{\Gamma,TA,\Delta}}", from=1-1, to=1-2]
        \arrow["{{{(\eta_B \circ f)^{*_L}\otimes \Delta}}}"', from=1-1, to=3-1]
        \arrow["{{{\Gamma\otimes (\eta_C \circ g)^{*_R}}}}", from=1-2, to=2-2]
        \arrow["{{{k^{*_L}}}}", from=2-2, to=3-2]
        \arrow["{{{h^{*_R}}}}"', from=3-1, to=3-2]
    \end{tikzcd}\]
    Any bistrong $J$-relative monad satisfies this condition, and when $J=\id_\CCat$, this recovers the definition of a bistrong monad, but this is not generally sufficient to define the structure of a bistrong $J$-relative monad. For linear type-theoretic applications, \Cref{def:bistrong-relative} is more suitable, as it allows computations to be sequenced anywhere in a context.
\end{remark}

The notion of $J$-strength can also be extended to $J$-bistrength in a canonical way. We note this definition because \Cref{prop:J-strength-is-strong} holds at the generality of (symmetric) bistrong monads and $J$-relative monads with a (symmetric) $J$-bistrength.

\begin{definition}
Let $\ACat$ and $\CCat$ be monoidal categories and $J \colon \ACat\to \CCat$ a strong monoidal functor. Given a $J$-relative monad $(T,\eta,(-)^*)$, a left $J$-strength $\theta^L$ (\Cref{def:strength-maps}), and right $J$-strength $\theta^R$ (i.e. a left $J$-strength on $\CCat\rev$), we say $(\theta^L,\theta^R)$ is a \textit{$J$-bistrength} for $T$ if the following diagram commutes.
   \[\begin{tikzcd}
	{(JA\otimes TB)\otimes JC} & {T(A\otimes B)\otimes JC} & {T((A\otimes B)\otimes C)} \\
	{JA\otimes (TB\otimes JC)} & {JA\otimes T(B\otimes C)} & {T(A\otimes (B\otimes C))}
	\arrow["{{\theta^L_{A,B}\otimes JC}}", from=1-1, to=1-2]
	\arrow["{{\alpha_{JA,TB,JC}}}"', from=1-1, to=2-1]
	\arrow["{{\theta^R_{A\otimes B,C}}}", from=1-2, to=1-3]
	\arrow["{{T\alpha_{A,B,C}}}", from=1-3, to=2-3]
	\arrow["{{JA\otimes \theta^R_{B,C}}}"', from=2-1, to=2-2]
	\arrow["{{\theta^L_{A,B\otimes C}}}"', from=2-2, to=2-3]
\end{tikzcd}\]
If $\ACat$ and $\CCat$ are symmetric monoidal categories and $J$ is strong symmetric monoidal, we say $T$ has \textit{symmetric $J$-bistrength} if the following diagram commutes.
\[\begin{tikzcd}
	{JA\otimes TB} & {T(A\otimes B)} \\
	{TB\otimes JA} & {T(B\otimes A)}
	\arrow["{\theta^L_{A,B}}", from=1-1, to=1-2]
	\arrow["{\sigma_{JA,TB}}"', from=1-1, to=2-1]
	\arrow["{T\sigma_{A,B}}", from=1-2, to=2-2]
	\arrow["{\theta^R_{A,B}}"', from=2-1, to=2-2]
\end{tikzcd}\]
\end{definition}

\section{Full term calculi and their equational theories} \label{app:eqns}

We require the following reflexivity, symmetry, transitivity, and substitution laws to hold for all judgements for the remainder of this appendix:
\begin{mathpar}
    \inferrule{\tj \Gamma u X}{\tj \Gamma {u=u} X}
    \and
    \inferrule{\tj \Gamma {u=v} X}{\tj \Gamma {v=u} X}
    \and
    \inferrule{\tj \Gamma {u=v} X \and \tj \Gamma {v=w} X}{\tj \Gamma {u=w} X}
    \and
    \inferrule{\tj \Gamma u X \and \tj {\Gamma,x:X} {s=t} Y}{\tj \Gamma {s[u/x]=t[u/x]} Y}
\end{mathpar}
We also require that all term constructors satisfy \textit{congruence} equations, e.g. for product types:
\begin{mathpar}
  \inferrule{\tj \Gamma {u=v} {X\times Y}}{\tj \Gamma {\pi_1 u=\pi_1 v} X}
    \and
    \inferrule{\tj \Gamma {u=v} {X\times Y}}{\tj \Gamma {\pi_2 u=\pi_2 v} Y}
    \and
    \inferrule{\tj \Gamma {u=v} {X} \and \tj \Gamma {t=s} {Y}}{\tj \Gamma {(u,t)=(v,s)} {X\times Y}}  
\end{mathpar}

\subsection{Relative monadic metalanguage (RMM)} \label{app:RMM-equations}

The types and contexts of RMM are:
\begin{alignat*}{3}
   &\text{Types:} \quad &X,Y &::= 1 ~|~ X\times Y ~|~  JA ~|~ TA  \tag{$A\in \ob\ACat$} \\
   &\text{Contexts:} \quad &\Gamma &::= \emptyset ~|~ \Gamma,x:X 
\end{alignat*}
The terms are:
\begin{mathpar}
    \inferrule{ }{\tj {\Gamma,x:X} x {X}}
    \and
    \inferrule{ }{\tj {\Gamma} \lunit {1}}
    \and
    \inferrule{\tj \Gamma u {X\times Y}}{\tj \Gamma {\pi_1 u} {X}}
    \and
    \inferrule{\tj \Gamma u {X\times Y}}{\tj \Gamma {\pi_2 u} {Y}}
    \and
    \inferrule{\tj {\Gamma} u {X} \and \tj {\Gamma} t {Y}}{\tj {\Gamma} {(u,t)} {X\times Y}}
    \and
    \inferrule{\tj \Gamma {u} {JA}}{\tj \Gamma {\mathbf{f} \, u} {JB}}(f \in \ACat(A,B))
    \and
    \inferrule{\tj \Gamma u {JA}}{\tj \Gamma {\ret{u}} {TA}}
    \and
    \inferrule{\tj {\Gamma} u {TA} \and \tj {\Gamma, y:JA} t {TB}}{\tj {\Gamma} {\doin{y}{u}{t}} {TB}}
\end{mathpar}
The additional equations beyond the symmetry, reflexivity, transitivity, congruence and substitution laws are:
\begin{mathparpagebreakable}
     \inferrule{\tj \Gamma u 1 }{\tj \Gamma {u= \lunit} 1}
    \and
    \inferrule{\tj \Gamma {u} {X} \and \tj \Gamma {t} {Y}}{\tj \Gamma {\pi_1 (u,t)=u} {X}}
    \and
    \inferrule{\tj \Gamma {u} {X} \and \tj \Gamma {t} {Y}}{\tj \Gamma {\pi_2 (u,t)=t} {Y}}
    \and
    \inferrule{\tj \Gamma {u} {X\times Y}}{\tj \Gamma {(\pi_1 u,\pi_2 u)=u} {X\times Y}}
    \and
    \inferrule{\tj \Gamma {u} {JA}}{\tj \Gamma {\mathbf{id_A} \; u = u} {JA}}
    \and
    \inferrule{\tj \Gamma {u} {JA}}{\tj \Gamma {\mathbf{(g\circ f)} \; u = \mathbf{g} \; ( \mathbf{f} \; u)} {JC}}(f \in \ACat(A,B),g\in \ACat(B,C))
    \and
    \inferrule{\tj \Gamma u {JA}
\and \tj {\Gamma,y:JA} t {TB}}{\tj \Gamma {\doin{y}{\ret{u}}{t} = t[u/y]} {TB}}
    \and
    \inferrule{\tj \Gamma u {TA}
}{\tj \Gamma {\doin{y}{u}{\ret y} = u} {TA}}
    \and
    \inferrule{\tj \Gamma u {TA} \and \tj {\Gamma,y:JA} t {TB} \and \tj {\Gamma,z:JB} v {TC}}{\tj \Gamma {\doin{z}{(\doin{y}{u}{t})}{v}= \doin{y}{u}{(\doin{z}{t}{v})}} {TC}}
\end{mathparpagebreakable}
\subsection{Graded monadic metalanguage (GMM)} \label{subsec:GMM-equations}

The types and contexts of GMM are:

\begin{alignat*}{3}
   &\text{Types:} \quad &X,Y &::= 1 ~|~ X\times Y ~|~ T_m X  \tag{$m \in \ob{\MCat}$} \\
   &\text{Contexts:} \quad &\Gamma &::= \emptyset ~|~ \Gamma,x:X 
\end{alignat*}
The terms are:
\begin{mathparpagebreakable}
    \inferrule{ }{\tj {\Gamma,x:X} x {X}}
    \and
    \inferrule{ }{\tj {\Gamma} \lunit {1}}
    \and
    \inferrule{\tj \Gamma u {X\times Y}}{\tj \Gamma {\pi_1 u} {X}}
    \and
    \inferrule{\tj \Gamma u {X\times Y}}{\tj \Gamma {\pi_2 u} {Y}}
    \and
    \inferrule{\tj {\Gamma} u {X} \and \tj {\Gamma} t {Y}}{\tj {\Gamma} {(u,t)} {X\times Y}}
    \and
    \inferrule{\tj \Gamma u X}{\tj \Gamma {\ret{u}} {T_e X}}
    \and
    \inferrule{\tj {\Gamma} u {T_m X} \and \tj {\Gamma, x:X} t {T_n Y}}{\tj {\Gamma} {\doin{x}{u}{t}} {T_{m\mplus n}Y}}
    \and
    \inferrule{\tj \Gamma {u} {T_n X}}{\tj \Gamma {\mathbf{T_\xi}X \, u} {T_m X}}( \xi \in \MCat(m,n))
\end{mathparpagebreakable}
The additional equations beyond the symmetry, reflexivity, transitivity, congruence and substitution laws are:
\begin{mathparpagebreakable}
    \inferrule{\tj \Gamma u 1 }{\tj \Gamma {u= \lunit} 1}
    \and
    \inferrule{\tj \Gamma {u} {X} \and \tj \Gamma {t} {Y}}{\tj \Gamma {\pi_1 (u,t)=u} {X}}
    \and
    \inferrule{\tj \Gamma {u} {X} \and \tj \Gamma {t} {Y}}{\tj \Gamma {\pi_2 (u,t)=t} {Y}}
    \and
    \inferrule{\tj \Gamma {u} {X\times Y}}{\tj \Gamma {(\pi_1 u,\pi_2 u)=u} {X\times Y}}
    \and
    \inferrule{\tj \Gamma {u} {T_m X}}{\tj \Gamma {\mathbf{T}_{\id_m}X \; u = u} {T_m X}}
    \and
    \inferrule{\tj \Gamma {u} {T_l X}}{\tj \Gamma {\mathbf{T_{\varphi \circ \xi}}X \; u = \mathbf{T_\xi}X\; ( \mathbf{T_\varphi} X \; u)} {T_m} X}( \xi \in \MCat(m,n), \varphi \in \MCat(n,l)) 
    \and
    \inferrule{\tj {\Gamma} u {T_l X} \and \tj {\Gamma, x:X} t {T_n Y}}{\tj {\Gamma} {{\doin{x}{u}{(\mathbf{T_\xi} Y \;t)}= \mathbf{T}_{l\mplus \xi} Y \; (\doin{x}{u}{t}})} {T_{l\mplus m}Y}}( \xi \in \MCat(m,n))
    \and
     \inferrule{\tj {\Gamma} u {T_n X} \and \tj {\Gamma, x:X} t {T_l Y}}{\tj {\Gamma} {{\doin{x}{(\mathbf{T_\xi} X \;u)}{t}= \mathbf{T}_{\xi\mplus l} Y \; (\doin{x}{u}{t}})} {T_{m\mplus l}Y}}( \xi \in \MCat(m,n))
     \and
    \inferrule{\tj \Gamma u X 
\and \tj {\Gamma,x:X} t {T_m Y}}{\tj \Gamma {\doin{x}{\ret{u}}{t} =\mathbf{T}_{\lambda_m} Y \; t[u/x]} {T_{e\mplus m}Y}}
\and
\inferrule{\tj \Gamma u {T_m X}
}{\tj \Gamma {\doin{x}{u}{\ret x} = \mathbf{T}_{\rho_m} X \; u} {T_{m\mplus e}X}}
 \and
    \inferrule{\tj \Gamma u {T_l X} \and \tj {\Gamma,x:X} t {T_m Y} \and \tj {\Gamma,y:Y} v {T_n Z}}{\tj \Gamma {\doin{y}{(\doin{x}{u}{t})}{v}= \mathbf{T}_{\alpha_{l,m,n}} Z (\doin{x}{u}{(\doin{y}{t}{v}))}} {T_{(l\mplus m)\mplus n}Z}}
    \end{mathparpagebreakable}
    
\subsection{Linear-non-linear RMM (LNL-RMM)} \label{app:LNL-RMM}

The types and contexts of LNL-RMM are:
 \begin{alignat*}{4}
    &\text{$\ACat$ types:} \quad & A,B &::= 1 ~|~ A\times B ~|~ A\to B ~|~ RX & \\
    &\text{$\CCat$ types:} \quad & X,Y &::= I ~|~ \underline{m} ~|~ X\otimes Y ~|~ X \multimap Y ~|~ JA ~|~ TA  \tag{$m \in \ob{\MCat}$} & \\
    &\text{$\ACat$ contexts:} \quad & \Gamma,&::= \emptyset ~|~ \Gamma, a:A & \\
    &\text{$\CCat$ contexts:} \quad & \Delta &::= \emptyset ~|~ \Delta, x:X &
\end{alignat*}
The non-linear terms for $\ACat$ types are:
\begin{mathparpagebreakable}
    \inferrule{ }{\aj {\Gamma,a:A} a {A}}
    \and
    \inferrule{ }{\aj {\Gamma} \lunit {1}}
    \and
    \inferrule{\aj \Gamma u {A\times B}}{\tj \Gamma {\pi_1 u} {A}}
    \and
    \inferrule{\aj \Gamma u {A\times B}}{\tj \Gamma {\pi_2 u} {B}}
    \and
    \inferrule{\aj {\Gamma} u {A} \and \tj {\Gamma} v {B}}{\tj {\Gamma} {(u,v)} {A\times B}}
    \and
    \inferrule{\aj {\Gamma,a:A} u B}{\aj \Gamma {\lambda (a:A).u} {A\to B}}
    \and
    \inferrule{\aj \Gamma u {A\to B} \and \aj \Gamma v A}{\aj \Gamma {\app u v} B}
\end{mathparpagebreakable}
The linear terms for $\CCat$ types are:
\begin{mathparpagebreakable}
    \inferrule{ }{\cj {\Gamma} {x:X} x X}
    \and
    \inferrule{ }{\cj {\Gamma} \emptyset \lunit {I}}
    \and
    \inferrule{\cj \Gamma {\Delta_1} t X \and \cj \Gamma {\Delta_2} s Y }{\cj \Gamma {\Delta_1,\Delta_2} {(t,s)} {X\otimes Y}}
    \and
    \inferrule{ \cj \Gamma {\Delta_1} t I \and \cj \Gamma {\Delta_2} s X}{ \cj \Gamma {\Delta_1,\Delta_2} {\letin{\lunit}{t}{s}} {X}}
    \and
    \inferrule{\cj \Gamma {\Delta_1} t {X\otimes Y} \and \cj \Gamma {\Delta_2,x:X,y:Y} s Z}{\cj \Gamma {\Delta_1,\Delta_2} {\letin{(x,y)}{t}{s}} {Z}}
    \and
    \inferrule{\cj \Gamma {\Delta,x:X} s Y}{\cj \Gamma \Delta {\lambda (x:X).s} {X\multimap Y}}
    \and
    \inferrule{\cj \Gamma {\Delta_1} s {X\multimap Y} \and \cj \Gamma {\Delta_2}  t X}{\cj \Gamma {\Delta_1,\Delta_2} {\app s t} {Y}}
    \and
    \inferrule{\cj \Gamma \Delta  t {JA}}{\cj \Gamma \Delta {\ret{t}} {TA}}
    \and
    \inferrule{\cj \Gamma {\Delta_1} t {TA} \and \cj \Gamma {\Delta_2, y:JA} s {TB}}{\cj \Gamma {\Delta_1,\Delta_2} {\doin{y}{t}{s}} {TB}}
    \and
    \inferrule{\cj \Gamma {\Delta}  t {\underline{m}}}{\cj \Gamma \Delta {\underline{\xi} \, t} {\underline{n}}}(\xi \in \MCat(m,n))
    \and 
    \inferrule{\cj \Gamma \Delta  t {\underline{e}}}{\cj \Gamma \Delta {\unmerg t} {I}}
    \and
    \inferrule{\cj \Gamma \Delta  t I}{\cj \Gamma \Delta {\merg t} {\underline{e}}}
    \and
    \inferrule{\cj \Gamma \Delta  t {\underline{m\mplus n}}}{\cj \Gamma \Delta {\unmerg t} {\underline{m}\otimes \underline{n}}}
    \and 
   \inferrule{\cj \Gamma \Delta  t {\underline{m}\otimes \underline{n}}}{\cj \Gamma \Delta {\merg t} {\underline{m\mplus n}}}
\end{mathparpagebreakable}
The terms for linear-non-linear interaction are:
\begin{mathpar}
    \inferrule{\aj \Gamma u A}{\cj \Gamma \emptyset {J(u)} {JA}}
    \and
    \inferrule{\cj \Gamma {\Delta_1} t {JA} \and \cj {\Gamma,a:A} {\Delta_2} s X}{ \cj \Gamma {\Delta_1, \Delta_2} {\letin{J(a)}{t}{s}} X}
    \and
    \inferrule{\cj \Gamma \emptyset t X}{\aj \Gamma {R(t)} {RX}}
    \and
    \inferrule{\aj \Gamma u {RX}}{\cj \Gamma \emptyset {\derek {u}} X}
\end{mathpar}
The equations for terms in $\ACat$ are:
\begin{mathparpagebreakable}
    \inferrule{\aj \Gamma u 1 }{\aj \Gamma {u= \lunit} 1}
    \and
    \inferrule{\aj \Gamma {u} {A} \and \aj \Gamma {v} {B}}{\aj \Gamma {\pi_1 (u,v)=u} {A}}
    \and
    \inferrule{\aj \Gamma {u} {A} \and \aj \Gamma {v} {B}}{\aj \Gamma {\pi_2 (u,v)=t} {B}}
    \and
    \inferrule{\aj \Gamma {u} {A\times B}}{\aj \Gamma {(\pi_1 u,\pi_2 u)=u} {A\times B}}
    \and
    \inferrule{\aj {\Gamma} {u} {A} \and \aj {\Gamma,a:A} v B}{\aj \Gamma {\app{(\lambda(a:A).v)}{u}=v[u/a]}
     {B}}  
    \and
    \inferrule{\aj \Gamma {u} {A\to B}}{\aj \Gamma {\lambda(a:A).\app{u}{a}=u} {A\to B}}
\end{mathparpagebreakable}
The equations for the linear lambda calculus terms in $\CCat$ are:
\begin{mathparpagebreakable}
\inferrule{\cj \Gamma {\Delta} t {X}}{\cj \Gamma {\Delta} {\letin{\lunit}{\lunit}{t}=t} {X}}
 \and
 \inferrule{\cj \Gamma {\Delta} t {I}}{\cj \Gamma {\Delta} {\letin{\lunit}{t}{\lunit}=t} {I}}
\and
 \inferrule{\cj \Gamma {\Delta_1} t {X} \and \cj \Gamma {\Delta_2} s Y \and \cj \Gamma {\Delta_3,x:X,y:Y} r Z}{\cj \Gamma {\Delta_1,\Delta_2,\Delta_3} {\letin{(x,y)}{(t,s)}{r}=r[t/x,s/y]} {Z}}
 \and
  \inferrule{\cj \Gamma {\Delta} t {X\otimes Y}}{\cj \Gamma {\Delta} {\letin{(x,y)}{(x,y)}{t}=t} {X\otimes Y}}
 \and
\inferrule{\cj \Gamma {\Delta_1,x:X} s Y \and \cj \Gamma {\Delta_2} t X}{\cj \Gamma {\Delta_1,\Delta_2} {\app{(\lambda (x:X).s)}{t}=s[t/x]} {Y}}
\and
\inferrule{\cj \Gamma {\Delta} t {X\multimap Y}}{ \cj \Gamma \Delta {\lambda(x:X).\app{t}{x}=t} {X\multimap Y}}
\and 
\inferrule{\cj \Gamma {\Delta} t {X\multimap Y}}{ \cj \Gamma \Delta {\lambda(x:X).t x=t} {X\multimap Y}}
\and
\inferrule{\cj \Gamma {\Delta_1} t {W\multimap Z} \and \cj \Gamma {\Delta_2} s {I} \and \cj \Gamma {\Delta_3} r W }{\cj \Gamma {\Delta_1,\Delta_2,\Delta_3} {{\app{(\letin{\lunit}{s}{t})}{r}}=\letin{\lunit}{s}{(\app {t} {r})} } Z}
\and
\inferrule{\cj \Gamma {\Delta_1} t {W\multimap Z} \and \cj \Gamma {\Delta_2} s {I} \and \cj \Gamma {\Delta_3} r W }{\cj \Gamma {\Delta_1,\Delta_2,\Delta_3} {{\app{t}{(\letin{\lunit}{s}{r})}}=\letin{\lunit}{s}{(\app {t} {r})} } Z}
\and
\inferrule{\cj \Gamma {\Delta_1,x:X,y:Y} t {W\multimap Z} \and \cj \Gamma {\Delta_2} s {X\otimes Y} \and \cj \Gamma {\Delta_3} r W }{\cj \Gamma {\Delta_1,\Delta_2,\Delta_3} {{\app{(\letin{(x,y)}{s}{t})}{r}}=\letin{(x,y)}{s}{(\app {t} {r})} } Z}
\and
\inferrule{\cj \Gamma {\Delta_1} t {W\multimap Z} \and \cj \Gamma {\Delta_2} s {X\otimes Y} \and \cj \Gamma {\Delta_3,x:X,y:Y} r W }{\cj \Gamma {\Delta_1,\Delta_2,\Delta_3} {{\app{t}{(\letin{(x,y)}{s}{r})}}=\letin{(x,y)}{s}{(\app {t} {r})} } Z}
\and
\inferrule{\cj \Gamma {\Delta_1,x:X} t {Y} \and \cj \Gamma {\Delta_2} s {I}}{\cj \Gamma {\Delta_1,\Delta_2} {{\lambda(x:X).{(\letin{\lunit}{s}{t})}}=\letin{\lunit}{s}{(\lambda (x:X). t)} } X\multimap Y}
\and
\inferrule{\cj \Gamma {\Delta_1, w:W, z:Z, x:X} t {Y} \and \cj \Gamma {\Delta_2} s {W\otimes Z}}{\cj \Gamma {\Delta_1,\Delta_2} {{\lambda(x:X).{(\letin{(w,z)}{s}{t})}}=\letin{(w,z)}{s}{(\lambda (x:X). t)} } X\multimap Y}
\and
\inferrule{\cj \Gamma {\Delta_1,x:X,y:Y,w:W} t {Z} \and \cj \Gamma {\Delta_2} s {X\otimes Y}}{\cj \Gamma {\Delta_1,\Delta_2,\Delta_3} {{\app{t}{(\letin{(x,y)}{s}{r})}}=\letin{(x,y)}{s}{(\app {t} {r})} } Z}
\and
\inferrule{\cj \Gamma {\Delta_1} t {I} \and \cj \Gamma {\Delta_2} s {I} \and \cj \Gamma {\Delta_3} r X }{\cj \Gamma {\Delta_1,\Delta_2,\Delta_3} {\letin{\lunit}{(\letin{\lunit}{t}{s}))}{r}=\letin{\lunit}{t}{(\letin{\lunit}{s}{r})}} X}
\and
\inferrule{\cj \Gamma {\Delta_1} t {I} \and \cj \Gamma {\Delta_2} s {X\otimes Y} \and \cj \Gamma {\Delta_3,x:X,y:Y} r Z }{\cj \Gamma {\Delta_1,\Delta_2,\Delta_3} {\letin{(x,y)}{(\letin{\lunit}{t}{s}))}{r}=\letin{\lunit}{t}{(\letin{(x,y)}{s}{r})}} Z}
\and
\inferrule{\cj \Gamma {\Delta_1,x:X,y:Y} t {I} \and \cj \Gamma {\Delta_2} s {X\otimes Y} \and \cj \Gamma {\Delta_3} r Z }{\cj \Gamma {\Delta_1,\Delta_2,\Delta_3} {\letin{(x,y)}{s}{(\letin{\lunit}{t}{r})}=\letin{\lunit}{(\letin{(x,y)}{s}{t})}{r}} Z}
\and
\inferrule{\cj \Gamma {\Delta_1} t {Z\otimes W} \and \cj \Gamma {\Delta_2,z:Z,w:W} s {X\otimes Y} \and \cj \Gamma {\Delta_3,x:X,y:Y} r R }{\cj \Gamma {\Delta_1,\Delta_2,\Delta_3} {\letin{(x,y)}{(\letin{(z,w)}{t}{s}))}{r}=\letin{(z,w)}{t}{(\letin{(x,y)}{s}{r})}} R}
\end{mathparpagebreakable}
The additional equations for RMM terms in $\CCat$ are:
\begin{mathparpagebreakable}
    \inferrule{\cj \Gamma {\Delta_1} u {JA}
\and \cj {\Gamma} {\Delta_2,y:JA} t {TB}}{\cj \Gamma {\Delta_1,\Delta_2} {\doin{y}{\ret{u}}{t} = t[u/y]} {TB}}
    \and
    \inferrule{\cj \Gamma \Delta u {TA}
}{\cj \Gamma \Delta {\doin{y}{u}{\ret y} = u} {TA}}
    \and
    \inferrule{\cj \Gamma {\Delta_1} u {TA} \and \cj {\Gamma} {\Delta_2,y:JA} t {TB} \and \cj {\Gamma} {\Delta_3,z:JB} v {TC}}{\cj \Gamma {\Delta_1,\Delta_2,\Delta_3} {\doin{z}{(\doin{y}{u}{t})}{v}= \doin{y}{u}{(\doin{z}{t}{v})}} {TC}}
    \and
\inferrule{\cj \Gamma {\Delta_1} s {I} \and \cj \Gamma {\Delta_2} u {JA}}{\cj \Gamma {\Delta_1,\Delta_2} {\letin{\lunit}{s}{\ret{u}} = \ret{(\letin{\lunit}{s}{u})}} {TA}}
\and
\inferrule{\cj \Gamma {\Delta_1} s {X\otimes Y} \and \cj \Gamma {\Delta_2,x:X, y:Y} u {JA}}{\cj \Gamma {\Delta_1,\Delta_2} {\letin{(x,y)}{s}{\ret{u}} = \ret{(\letin{\lunit}{(x,y)}{u})}} {TA}}
\and
\inferrule{\cj \Gamma {\Delta_1} s {JA} \and \cj {\Gamma,a:A} {\Delta_2} u {JB}}{\cj \Gamma {\Delta_1,\Delta_2} {\letin{J(a)}{s}{\ret{u}} = \ret{(\letin{J(a)}{s}{u})}} {TB}}
    \and
\inferrule{\cj \Gamma {\Delta_1} s {I} \and \cj \Gamma {\Delta_2} u {TA}
\and \cj {\Gamma} {\Delta_3,y:JA} t {TB}}{\cj \Gamma {\Delta_1,\Delta_2, \Delta_3} {\doin{y}{(\letin{\lunit}{s}{u})}{t} = \letin{\lunit}{s}{(\doin{y}{u}{t})}} {TB}}
    \and
\inferrule{\cj \Gamma {\Delta_1} s {I} \and \cj \Gamma {\Delta_2} u {TA}
\and \cj {\Gamma} {\Delta_3,y:JA} t {TB}}{\cj \Gamma {\Delta_1,\Delta_2, \Delta_3} {\doin{y}{u}{(\letin{\lunit}{s}{t})} = \letin{\lunit}{s}{(\doin{y}{u}{t})}} {TB}}
    \and
\inferrule{\cj \Gamma {\Delta_1} s {X\otimes Y} \and \cj \Gamma {\Delta_2,x:X,y:Y} u {TA}
\and \cj {\Gamma} {\Delta_3,z:JA} t {TB}}{\cj \Gamma {\Delta_1,\Delta_2, \Delta_3} {\doin{z}{(\letin{(x,y)}{s}{u})}{t} = \letin{(x,y)}{s}{(\doin{z}{u}{t})}} {TB}}
    \and
\inferrule{\cj \Gamma {\Delta_1} s {X\otimes Y} \and \cj \Gamma {\Delta_2} u {TA}
\and \cj {\Gamma} {\Delta_3, x:X, y:Y, z:JA} t {TB}}{\cj \Gamma {\Delta_1,\Delta_2, \Delta_3} {\doin{z}{u}{(\letin{(x,y)}{s}{t})} = \letin{(x,y)}{s}{(\doin{z}{u}{t})}} {TB}}
    \and
\inferrule{\cj \Gamma {\Delta_1} s {JA} \and \cj {\Gamma,a:A} {\Delta_2} u {TB}
\and \cj {\Gamma} {\Delta_3,y:JB} t {TC}}{\cj \Gamma {\Delta_1,\Delta_2, \Delta_3} {\doin{y}{(\letin{J(a)}{s}{u})}{t} = \letin{J(a)}{s}{(\doin{y}{u}{t})}} {TC}}
    \and
\inferrule{\cj \Gamma {\Delta_1} s {JA} \and \cj {\Gamma} {\Delta_2} u {TB}
\and \cj {\Gamma,a:A} {\Delta_3,y:JB} t {TC}}{\cj \Gamma {\Delta_1,\Delta_2, \Delta_3} {\doin{y}{u}{(\letin{J(a)}{s}{u})} = \letin{J(a)}{s}{(\doin{y}{u}{t})}} {TC}}
    \and
    \inferrule{\cj \Gamma \Delta {t} {\underline{m}}}{\cj \Gamma \Delta {\underline{1_m} \; t = t} {\underline{m}}}
    \and
    \inferrule{\cj \Gamma \Delta {t} {\underline{m}}}{\cj \Gamma \Delta {\underline{\varphi \circ \xi} \; t = \underline{\varphi} \; ( \underline{\xi} \; t)} {\underline{l}}}(\xi \in \MCat(m,n),\varphi \in \MCat(n,l))
    \and
    \inferrule{\cj \Gamma \Delta {t} {\underline{e}}}{\cj \Gamma \Delta {\merg{\unmerg{t}} = t} {\underline{e}}}
     \and
    \inferrule{\cj \Gamma \Delta {t} {I}}{\cj \Gamma \Delta {\unmerg{\merg{t}} = t} {I}}
    \and
    \inferrule{\cj \Gamma \Delta {t} {\underline{m\mplus n}}}{\cj \Gamma \Delta {\merg{\unmerg{t}} = t} {\underline{m\mplus n}}}
    \and
    \inferrule{\cj \Gamma \Delta {t} {\underline{m}\otimes \underline{n}}}{\cj \Gamma \Delta {\unmerg{\merg{t}} = t} {\underline{m}\otimes \underline{n}}}
\end{mathparpagebreakable}
\begin{mathpar}
     \inferrule{\cj \Gamma \Delta {t} {\underline{m\mplus n}}}{\cj \Gamma \Delta {\unmerg{\app{(\underline{\xi\mplus \varphi}}{t})} = \letin{(s,r)}{\unmerg{t}}{(\app{\underline{\xi}}{s},\app{\underline{\varphi}}{r})}} {\underline{m'}\otimes \underline{n'}}} \text{\scriptsize $(\xi \in \MCat(m,m'),\varphi \in \MCat(n,n'))$}
\end{mathpar}
\begin{mathparpagebreakable}
    \inferrule{\cj \Gamma \Delta t {\underline{e\mplus m}}}{\cj \Gamma \Delta {\letin{(s,r)}{\unmerg{t}}{(\letin{\lunit}{\unmerg{s}}{r})}=\app{\underline{\lambda_m}}{t}} {\underline{m}}}
    \and
    \inferrule{\cj \Gamma \Delta t {\underline{m\mplus e}}}{\cj \Gamma \Delta {\letin{(s,r)}{\unmerg{t}}{(\letin{\lunit}{\unmerg{r}}{s})}=\app{\underline{\rho_m}}{t}} {\underline{m}}}
    \and
    \inferrule{\cj \Gamma \Delta t {\underline{m\mplus n}}}{\cj \Gamma \Delta {\letin{(s,r)}{\unmerg{t}}{(r,s)}=\unmerg{(\app{\underline{\sigma_{m,n}}}{t})}}{\underline{n}\otimes \underline{m}}}
    \and
    \inferrule{\cj \Gamma {\Delta_1} t {\underline{(m\mplus n)\mplus l}} \and {\cj \Gamma {\Delta_2, x:\underline{m},y:\underline{n},z:\underline{l}} u {\underline{m}\otimes (\underline{n}\otimes \underline{l})}}}{ \mbox{\(\begin{array}{l} \cj \Gamma {\Delta_1,\Delta_2} {\letin{(x,r)}{\unmerg{(\app{\underline{\alpha_{m,n,l}}}{t}})}{(\letin{(y,z)}{\unmerg{r}}{u})} =
    \\    \letin{(s,z)}{\unmerg{t}}{(\letin{(x,y)}{\unmerg{s}}{u})}}  {Z} \end{array}\)}}
     \and
\inferrule{\cj \Gamma {\Delta_1} s {I} \and \cj \Gamma {\Delta_2} u {\underline{e}}}{\cj \Gamma {\Delta_1,\Delta_2} {\letin{\lunit}{s}{\unmerg{u}} = \unmerg{(\letin{\lunit}{s}{u})}} {I}}
     \and
\inferrule{\cj \Gamma {\Delta_1} s {} \and \cj \Gamma {\Delta_2} u {\underline{m\oplus n}}}{\cj \Gamma {\Delta_1,\Delta_2} {\letin{\lunit}{s}{\unmerg{u}} = \unmerg{(\letin{\lunit}{s}{u})}} {\underline{m}\otimes \underline{n}}}
\and
\inferrule{\cj \Gamma {\Delta_1} s {X\otimes Y} \and \cj \Gamma {\Delta_2,x:X, y:Y} u {\underline{e}}}{\cj \Gamma {\Delta_1,\Delta_2} {\letin{(x,y)}{s}{\unmerg{u}} = \unmerg{(\letin{\lunit}{(x,y)}{u})}} {I}}
\and
\inferrule{\cj \Gamma {\Delta_1} s {X\otimes Y} \and \cj \Gamma {\Delta_2,x:X, y:Y} u {\underline{m \oplus n}}}{\cj \Gamma {\Delta_1,\Delta_2} {\letin{(x,y)}{s}{\unmerg{u}} = \unmerg{(\letin{\lunit}{(x,y)}{u})}} {\underline{m}\otimes \underline{n}}}
\and
\inferrule{\cj \Gamma {\Delta_1} s {JA} \and \cj {\Gamma,a:A} {\Delta_2} u {\underline{e}}}{\cj \Gamma {\Delta_1,\Delta_2} {\letin{J(a)}{s}{\unmerg{u}} = \unmerg{(\letin{J(a)}{s}{u})}} {I}}
\and
\inferrule{\cj \Gamma {\Delta_1} s {JA} \and \cj {\Gamma,a:A} {\Delta_2} u {\underline{m \oplus n}}}{\cj \Gamma {\Delta_1,\Delta_2} {\letin{J(a)}{s}{\unmerg{u}} = \unmerg{(\letin{J(a)}{s}{u})}} {\underline{m}\otimes \underline{n}}}
     \and
\inferrule{\cj \Gamma {\Delta_1} s {I} \and \cj \Gamma {\Delta_2} u {I}}{\cj \Gamma {\Delta_1,\Delta_2} {\letin{\lunit}{s}{\merg{u}} = \merg{(\letin{\lunit}{s}{u})}} {\underline{e}}}
     \and
\inferrule{\cj \Gamma {\Delta_1} s {} \and \cj \Gamma {\Delta_2} u {\underline{m}\otimes \underline{n}}}{\cj \Gamma {\Delta_1,\Delta_2} {\letin{\lunit}{s}{\unmerg{u}} = \unmerg{(\letin{\lunit}{s}{u})}} {\underline{m\oplus n}}}
\and
\inferrule{\cj \Gamma {\Delta_1} s {X\otimes Y} \and \cj \Gamma {\Delta_2,x:X, y:Y} u {I}}{\cj \Gamma {\Delta_1,\Delta_2} {\letin{(x,y)}{s}{\unmerg{u}} = \unmerg{(\letin{\lunit}{(x,y)}{u})}} {\underline{e}}}
\and
\inferrule{\cj \Gamma {\Delta_1} s {X\otimes Y} \and \cj \Gamma {\Delta_2,x:X, y:Y} u {\underline{m}\otimes \underline{n}}}{\cj \Gamma {\Delta_1,\Delta_2} {\letin{(x,y)}{s}{\unmerg{u}} = \unmerg{(\letin{\lunit}{(x,y)}{u})}} {\underline{m\oplus n}}}
\and
\inferrule{\cj \Gamma {\Delta_1} s {JA} \and \cj {\Gamma,a:A} {\Delta_2} u {I}}{\cj \Gamma {\Delta_1,\Delta_2} {\letin{J(a)}{s}{\unmerg{u}} = \unmerg{(\letin{J(a)}{s}{u})}} {\underline{e}}}
\and
\inferrule{\cj \Gamma {\Delta_1} s {JA} \and \cj {\Gamma,a:A} {\Delta_2} u {\underline{m}\otimes \underline{n}}}{\cj \Gamma {\Delta_1,\Delta_2} {\letin{J(a)}{s}{\unmerg{u}} = \unmerg{(\letin{J(a)}{s}{u})}} {\underline{m\oplus n}}}
\end{mathparpagebreakable}
The equations for linear-non-linear interactions are:
\begin{mathparpagebreakable}
 \inferrule{\aj \Gamma {u} {A} \and \cj {\Gamma,a:A} {\Delta} {t} X}{ \cj \Gamma {\Delta} {\letin{J(a)}{J(u)}{t}=t[u/a]} X}
 \and
  \inferrule{\cj \Gamma \Delta {t} {JA}}{ \cj \Gamma {\Delta} {\letin{J(a)}{t}{J(a)}=t} JA}
\and
 \inferrule{\cj {\Gamma} {\emptyset} {t} X}{ \cj \Gamma \emptyset {\derek R(t)=t} X}
 \and
 \inferrule{\aj {\Gamma}  {u} RX}{\aj \Gamma {R(\derek{u})=u} RX}
 \and
 \inferrule{\cj \Gamma {\Delta_1} t {I} \and \cj \Gamma {\Delta_2} s {JA} \and \cj {\Gamma,a:A} {\Delta_3} r Z}{\cj \Gamma {\Delta_1,\Delta_2,\Delta_3} {\letin{J(a)}{(\letin{\lunit}{t}{s})}{r}=\letin{\lunit}{t}{(\letin{J(a)}{s}{r})}} Z}
  \and
 \inferrule{\cj \Gamma {\Delta_1} t {I} \and \cj \Gamma {\Delta_2} s {JA} \and \cj {\Gamma,a:A} {\Delta_3} r Z}{\cj \Gamma {\Delta_1,\Delta_2,\Delta_3} {\letin{J(a)}{s}{(\letin{\lunit}{t}{r})}=\letin{\lunit}{t}{(\letin{J(a)}{s}{r})}} Z}
 \and
  \inferrule{\cj {\Gamma,a:A} {\Delta_1} t {I} \and \cj \Gamma {\Delta_2} s {JA} \and \cj \Gamma {\Delta_3} r Z}{\cj \Gamma {\Delta_1,\Delta_2,\Delta_3} {\letin{J(a)}{s}{(\letin{\lunit}{t}{r})}=\letin{\lunit}{(\letin{J(a)}{s}{t})}{r}} Z}
   \and
 \inferrule{\cj \Gamma {\Delta_1} t {X\otimes Y} \and \cj \Gamma {\Delta_2,x:X,y:Y} s {JA} \and \cj {\Gamma,a:A} {\Delta_3} r Z}{\cj \Gamma {\Delta_1,\Delta_2,\Delta_3} {\letin{J(a)}{(\letin{(x,y)}{t}{s})}{r}=\letin{(x,y)}{t}{(\letin{J(a)}{s}{r})}} Z}
    \and
 \inferrule{\cj \Gamma {\Delta_1} t {X\otimes Y} \and \cj \Gamma {\Delta_2} s {JA} \and \cj {\Gamma,a:A} {\Delta_3,x:X,y:Y} r Z}{\cj \Gamma {\Delta_1,\Delta_2,\Delta_3} {\letin{J(a)}{s}{(\letin{(x,y)}{t}{r})}=\letin{(x,y)}{t}{(\letin{J(a)}{s}{r})}} Z}
\and
  \inferrule{\cj {\Gamma,a:A} {\Delta_1} t {X\otimes Y} \and \cj \Gamma {\Delta_2} s {JA} \and \cj \Gamma {\Delta_3,x:X,y:Y} r Z}{\cj \Gamma {\Delta_1,\Delta_2,\Delta_3} {\letin{J(a)}{s}{(\letin{(x,y)}{t}{r})}=\letin{(x,y)}{(\letin{J(a)}{s}{t})}{r}} Z}
\and
 \inferrule{\cj {\Gamma,a:A} {\Delta_1} t {X\multimap Y} \and \cj \Gamma {\Delta_2} s {JA} \and \cj \Gamma {\Delta_3} r X}{\cj \Gamma {\Delta_1,\Delta_2,\Delta_3} {\app{(\letin{J(a)}{s}{t})}{r}=\letin{J(a)}{s}{(\app{t}{r})} }Y}
 \and
 \inferrule{\cj {\Gamma} {\Delta_1} t {X\multimap Y} \and \cj \Gamma {\Delta_2} s {JA} \and \cj \Gamma {\Delta_3,a:A} r X}{\cj \Gamma {\Delta_1,\Delta_2,\Delta_3} {\app{t}{(\letin{J(a)}{s}{r})}=\letin{J(a)}{s}{(\app{t}{r})} }Y}
 \and
 \inferrule{\cj {\Gamma,a:A} {\Delta_1,x:X} t {Y} \and \cj \Gamma {\Delta_2} s {JA}}{\cj \Gamma {\Delta_1,\Delta_2} {\letin{J(a)}{s}{\lambda (x:X). t}={\lambda(x:X).(\letin{J(a)}{s}{t})}} {X\multimap Y}}
\and
 \inferrule{\cj \Gamma {\Delta_1} t {JA} \and \cj {\Gamma,a:A} {\Delta_2} s {JB} \and \cj {\Gamma,b:B} {\Delta_3} r X}{\cj \Gamma {\Delta_1,\Delta_2,\Delta_3} {\letin{J(b)}{(\letin{J(a)}{t}{s})}{r}=\letin{J(a)}{t}{(\letin{J(b)}{s}{r})} }X}
\end{mathparpagebreakable}

\subsection{Arrow relative monadic metalanguage (ARMM)} \label{app:armm-equations}

The types and contexts of ARMM are:
\begin{alignat*}{4}
    &\text{$\ACat$ types:} \quad & A,B &::= 1 ~|~ A\times B ~|~ A\Rightarrow X & \\
    &\text{$\CCat$ types:} \quad & X,Y &::= 1 ~|~ X\times Y ~|~ JA ~|~ KA ~|~ TA & \\
    &\text{$\ACat$ contexts:} \quad & \Gamma, \Delta &::= \emptyset ~|~ \Gamma, a:A & \\
    &\text{$\CCat$ contexts:} \quad & \Phi &::= \emptyset ~|~ \Phi, x:X &
\end{alignat*}
The terms for $\ACat$ types are:
\begin{mathparpagebreakable}
    \inferrule{ }{\aj {\Gamma,a:A} a {A}}
    \and
    \inferrule{ }{\aj {\Gamma} \lunit {1}}
    \and
    \inferrule{\aj \Gamma u {A\times B}}{\tj \Gamma {\pi_1 u} {A}}
    \and
    \inferrule{\aj \Gamma u {A\times B}}{\tj \Gamma {\pi_2 u} {B}}
    \and
    \inferrule{\aj {\Gamma} u {A} \and \tj {\Gamma} v {B}}{\tj {\Gamma} {(u,v)} {A\times B}}
\end{mathparpagebreakable}
The terms for $\CCat$ types are:
\begin{mathparpagebreakable}
    \inferrule{ }{\ej {\Gamma} \Delta {\Phi, x:X} x {X}}
    \and
    \inferrule{ }{\ej {\Gamma} \Delta \Phi \lunit {1}}
    \and
    \inferrule{\ej \Gamma \Delta \Phi t {X\times Y}}{\ej \Gamma \Delta \Phi {\pi_1 t} {X}}
    \and
    \inferrule{\ej \Gamma \Delta \Phi t {X\times Y}}{\ej \Gamma \Delta \Phi {\pi_2 t} {Y}}
    \and
    \inferrule{\ej {\Gamma} \Delta \Phi t {X} \and \ej {\Gamma} \Delta \Phi s {Y}}{\ej {\Gamma} \Delta \Phi {(t,s)} {X\times Y}}
    \and
    \inferrule{\ej \Gamma \Delta \Phi {t} {JA} 
    \and
    \ej {\Gamma} {\Delta, a:A} {\Phi} {s} {X}}{
   \ej \Gamma \Delta {\Phi} {\letin{J(a)}{t}{s}} X
    }
    \and
    \inferrule{\ej \Gamma \Delta \Phi {t} {KA} 
    \and
    \ej {\Gamma, a:A} {\Delta} {\Phi} {t} {X}}{
   \ej \Gamma \Delta {\Phi} {\letin{K(a)}{t}{s}} X
    }
    \and
    \inferrule{\ej \Gamma \Delta \Phi t {JA}}{\ej \Gamma \Delta \Phi {\ret{t}} {TA}}
    \and
    \inferrule{\ej \Gamma \Delta {\Phi} t {TA} 
    \and
    \ej {\Gamma} {\Delta} {x:JA} s {TB}
    }{
    \ej \Gamma \Delta \Phi {\doin{x}{t}{s}} {TB}
    }
\end{mathparpagebreakable}
The terms for the interaction between $\ACat$ and $\CCat$ are:
\begin{mathparpagebreakable}
    \inferrule{\aj \Gamma u {A\Rightarrow X} \and \aj {\Gamma,\Delta} {v} A}{\ej \Gamma \Delta \Phi {\aapp u v} {X}}
    \and
    \inferrule{\ej \Gamma {x:A} \emptyset t X}{\aj \Gamma {\aabs x {A} t} {A\Rightarrow X}}
    \\
    \inferrule{\aj {\Gamma} u A}{\ej \Gamma \Delta \Phi {K(u)} {KA}}
    \and
    \inferrule{\aj {\Gamma,\Delta} u A}{\ej {\Gamma} {\Delta} \Phi {J(u)} {JA}}
    \and
    \inferrule{\aj {\Gamma} u {A \Rightarrow JB} \and
    \aj \Gamma v A}{\aj {\Gamma} {\fapp{u}{v}} {B}}
\end{mathparpagebreakable}
The equations for terms in $\ACat$ are:
\begin{mathparpagebreakable}
    \inferrule{\aj \Gamma u 1 }{\aj \Gamma {u= \lunit} 1}
    \and
    \inferrule{\aj \Gamma {u} {A} \and \aj \Gamma {v} {B}}{\aj \Gamma {\pi_1 (u,v)=u} {A}}
    \and
    \inferrule{\aj \Gamma {u} {A} \and \aj \Gamma {v} {B}}{\aj \Gamma {\pi_2 (u,v)=t} {B}}
    \and
    \inferrule{\aj \Gamma {u} {A\times B}}{\aj \Gamma {(\pi_1 u,\pi_2 u)=u} {A\times B}}
\end{mathparpagebreakable}
The equations for terms in $\CCat$ are:
\begin{mathparpagebreakable}
    \inferrule{\ej \Gamma \Delta \Phi t 1 }{\ej \Gamma \Delta \Phi {t = \lunit} 1}
    \and
    \inferrule{\ej \Gamma \Delta \Phi {t} {X} \and \ej \Gamma \Delta \Phi {s} {Y}}{\ej \Gamma \Delta \Phi {\pi_1 (t,s)=t} {X}}
    \and
    \inferrule{\ej \Gamma \Delta \Phi {t} {X} \and \ej \Gamma \Delta \Phi {s} {Y}}{\ej \Gamma \Delta \Phi {\pi_2 (t,s)=s} {Y}}
    \and
    \inferrule{\ej \Gamma \Delta \Phi{t} {X\times Y}}{\ej \Gamma \Delta \Phi {(\pi_1 t,\pi_2 t)=t} {X\times Y}}
    \and
    \inferrule{\ej \Gamma \Delta \Phi t {JA}
    \and
    \ej \Gamma \Delta {x:JA} s TB}
    {\ej \Gamma \Delta \Phi {\doin{x}{\ret{t}}{s} \ = \ s[t / x]} {TB}}
    \and
    \inferrule{\ej \Gamma \Delta {\Phi} t {TA}}
    {\ej \Gamma \Delta \Phi {\doin{x}{t}{\ret{x}} \ = \ t} {TB}}
    \and
    \inferrule{\ej \Gamma \Delta \Phi r {TA}
    \and
    \ej \Gamma \Delta {x:JA} s {TB}
    \and
    \ej \Gamma \Delta {y:JB} t {TC}}
    {\ej \Gamma \Delta \Phi {\doin{x}{r}{(\doin{y}{s}{t})} \ = \ \doin{y}{(\doin{x}{r}{s})}{t}} {TC}}
    \and
    \inferrule{\ej \Gamma \Delta \Phi r {JA}
    \and
    \ej \Gamma {\Delta,a:A} \Phi s {JB}
    \and
    \ej \Gamma {\Delta, b:B} \Phi t {X}}
    {\ej \Gamma \Delta \Phi {\letin{J(a)}{r}{(\letin{J(b)}{s}{t})} = \letin{J(b)}{(\letin{J(a)}{r}{s})}{t}} {X}}
    \and
    \inferrule{\ej \Gamma \Delta \Phi r {KA}
    \and
    \ej {\Gamma,a:A} {\Delta} \Phi s {KB}
    \and
    \ej {\Gamma,b:B} {\Delta} \Phi t {X}}
    {\ej \Gamma \Delta \Phi {\letin{K(a)}{r}{(\letin{K(b)}{s}{t})} = \letin{K(b)}{(\letin{K(a)}{r}{s})}{t}} {X}}
    \and
    \inferrule{\ej \Gamma \Delta \Phi r {JA}
    \and
    \ej {\Gamma} {\Delta,a:A} \Phi s {KB}
    \and
    \ej {\Gamma,b:B} {\Delta} \Phi t {X}}
    {\ej \Gamma \Delta \Phi {\letin{J(a)}{r}{(\letin{K(b)}{s}{t})} = \letin{K(b)}{(\letin{J(a)}{r}{s})}{t}} {X}}
    \and
    \inferrule{\ej \Gamma \Delta \Phi r {KA}
    \and
    \ej {\Gamma,a:A} {\Delta} \Phi s {JB}
    \and
    \ej {\Gamma} {\Delta,b:B} \Phi t {X}}
    {\ej \Gamma \Delta \Phi {\letin{K(a)}{r}{(\letin{J(b)}{s}{t})} = \letin{J(b)}{(\letin{K(a)}{r}{s})}{t}} {X}}
    \and
    \inferrule{\ej \Gamma \Delta {\Phi} {u} {J1}}{\ej \Gamma \Delta {\Phi} {u = J\lunit} {J1}}
    \and
    \inferrule{\ej \Gamma \Delta {\Phi} {u} {K1} }{\ej \Gamma \Delta {\Phi} {u = K\lunit} {K1}}
    \and
    \inferrule{\ej \Gamma \Delta \Phi r {JA}
    \and
    \ej {\Gamma} {\Delta, a:A} \Phi s {X}
    \and
    \ej {\Gamma} {\Delta, a:A} \Phi t {Y}}
    {\ej \Gamma \Delta \Phi {\letin{J(a)}{r}{(s,t)} = (\letin{J(a)}{r}{s},\letin{J(a)}{r}{t})} {X\times Y}}
    \and
    \inferrule{\ej \Gamma \Delta \Phi s {JA} \and \ej \Gamma \Delta \Phi t {X}}
    {\ej \Gamma \Delta \Phi {\letin{J(a)}{s}{t} = t} {X}}(a : A \notin \mathsf{fv}(t))
    \and
    \inferrule{\ej \Gamma \Delta \Phi r {KA}
    \and
    \ej {\Gamma, a:A} {\Delta} \Phi s {X}
    \and
    \ej {\Gamma, a:A} {\Delta} \Phi t {Y}}
    {\ej \Gamma \Delta \Phi {\letin{K(a)}{r}{(s,t)} = (\letin{K(a)}{r}{s},\letin{K(a)}{r}{t})} {X\times Y}}
    \and
    \inferrule{\ej \Gamma \Delta \Phi s {KA} \and \ej \Gamma \Delta \Phi t {X}}
    {\ej \Gamma \Delta \Phi {\letin{K(a)}{s}{t} = t} {X}}(a:A \notin \mathsf{fv}(t))
    \and
    \inferrule{\ej \Gamma \Delta \Phi r {JA}
    \and
    \ej {\Gamma} {\Delta} \Phi s {JB}
    \and
    \ej {\Gamma} {\Delta, a:A,b:B} \Phi t {X}}
    {\ej \Gamma \Delta \Phi {\letin{J(a)}{r}{(\letin{J(b)}{s}{t})} = \letin{J(b)}{s}{(\letin{J(a)}{r}{t})}} {X}}
    \and
    \inferrule{\ej \Gamma \Delta \Phi r {KA}
    \and
    \ej {\Gamma} {\Delta} \Phi s {KB}
    \and
    \ej {\Gamma, a:A,b:B} {\Delta} \Phi t {X}}
    {\ej \Gamma \Delta \Phi {\letin{K(a)}{r}{(\letin{K(b)}{s}{t})} = \letin{K(b)}{s}{(\letin{K(a)}{r}{t})}} {X}}
    \and
    \inferrule{\ej \Gamma \Delta \Phi s {JA}
    \and
    \ej {\Gamma} {\Delta, a:A,a':A} \Phi t {X}}
    {\ej \Gamma \Delta \Phi {\letin{J(a)}{s}{(\letin{J(a')}{s}{t})} = \letin{J(a)}{s}{(t[a/a'])}} {X}}
    \and
    \inferrule{\ej \Gamma \Delta \Phi s {KA}
    \and
    \ej {\Gamma, a:A,a':A} {\Delta} \Phi t {X}}
    {\ej \Gamma \Delta \Phi {\letin{K(a)}{s}{(\letin{K(a')}{s}{t})} = \letin{K(a)}{s}{(t[a/a'])}} {X}}
\end{mathparpagebreakable}
The equations for the interaction between $\ACat$ and $\CCat$ are:
\begin{mathparpagebreakable}
    \inferrule{\aj \Delta {u} {A} \and \ej \Gamma {\Delta,a:A} {\Phi} {t} X}{\ej \Gamma {\Delta} \Phi {\letin{J(a)}{J(u)}{t}=t[u/a]} X}
    \and
    \inferrule{\ej \Gamma \Delta \Phi {t} {JA}}{\ej \Gamma {\Delta} \Phi {\letin{J(a)}{t}{J(a)}=t} JA}
    \and
    \inferrule{\aj \Gamma {u} {A} \and \ej {\Gamma,a:A} {\Delta} {\Phi} {t} X}{\ej \Gamma {\Delta} \Phi {\letin{K(a)}{K(u)}{t}=t[u/a]} X}
    \and
    \inferrule{\ej \Gamma \Delta \Phi {t} {KA}}{\ej \Gamma {\Delta} \Phi {\letin{K(a)}{t}{K(a)}=t} KA}
    \and
    \inferrule{\ej {\Gamma} {a:A} \emptyset t X \and \aj {\Gamma} {u} {A}}{\ej {\Gamma} {\emptyset} \emptyset {\aapp{(\aabs{a}{A}{t})}{u} = t[u / a]} X}
    \and
    \inferrule{\aj \Gamma {u} {A\Rightarrow X}}{\aj \Gamma {\aabs{a}{A}{\aapp{u}{a}} = u} {A\Rightarrow X}}
    \and
    \inferrule{\aj {\Gamma,a:A} u {B} \and \aj \Gamma v A}{\aj  \Gamma {\fapp{(\aabs{a}{A}{J(u)})}{v} = u[v / a]} B}
    \and
    \inferrule{\aj {\Gamma} u {A \Rightarrow JB}}{\aj {\Gamma} {\aabs{a}{A}{J(\fapp{u}{a})} = u} {A \Rightarrow JB}}
\end{mathparpagebreakable}
\end{document}